\documentclass[letterpaper, 10 pt, conference]{ieeeconf} 
\IEEEoverridecommandlockouts

\usepackage{enumitem}
\setlist{topsep=0pt, leftmargin=*}
\usepackage{algorithm}
\usepackage{algorithmicx}
\usepackage{booktabs}

\usepackage[noend]{algpseudocode}
\usepackage{amsmath, amsfonts, bm, amssymb, tabularx}
\usepackage{latexsym, mathtools}
\usepackage{amsthm}
\usepackage{ifthen}
\usepackage{hyperref}\hypersetup{colorlinks=true, unicode=true, linkcolor=[rgb]{0.10,0.05,0.67}, citecolor=[rgb]{0.10,0.05,0.67}, filecolor=[rgb]{0.10,0.05,0.67}, urlcolor=[rgb]{0.10,0.05,0.67}}
\usepackage{Shorthands}
\usepackage{cleveref}
\usepackage{graphicx}
\usepackage{subcaption}
\newcommand{\citep}[1]{\cite{#1}}
\usepackage{wrapfig}
\usepackage{etoolbox}
\usepackage{etoc}

\AtBeginEnvironment{algorithm}{\setlength{\intextsep}{6pt} \setlength{\textfloatsep}{6pt}}
\usepackage{here}

\allowdisplaybreaks

\title{\LARGE \bf
 Policy Gradient over History-Dependent Policy Classes \\ for LQR with Domain Randomization 
}

\author{
Tesshu Fujinami$^1$ , Bruce D. Lee$^2$ , Anastasios Tsiamis$^3$, Nikolai Matni$^1$ , George J. Pappas$^1$ % <-this % stops a space
\thanks{$^1$T. Fujinami, N. Matni and G. J. Pappas are with the Department of Electrical and Systems Engineering, University of Pennsylvania. Emails: {\tt\small\{ftesshu, nmatni, pappasg\}@engineering.upenn.edu} }
\thanks{$^2$B. D. Lee is with the ETH AI Center. Email: {\tt\small\{bruce.lee@ai.ethz.ch\}}}
\thanks{$^3$A. Tsiamis is with the Institute for Automatic Control at ETH Zurich. Email:{\tt\small\{atsiamis@control.ee.ethz.ch\}} }
}%   

\begin{document}

\twocolumn

\maketitle
\thispagestyle{empty}
\pagestyle{empty}

\begin{abstract}
    Domain Randomization (DR) has been widely used to overcome the sim-to-real gap by training a controller on a distribution of simulated environments via reinforcement learning. While DR can achieve robust performance simply using controllers synthesized via policy gradient (PG) methods, the optimization landscape is not well understood, even in the case of linear quadratic regulator (LQR) objectives. To this end, we first study PG of domain randomized LQR over history-dependent policy classes, such as finite impulse response controllers, as they can extend the possibilities of simultaneous stabilization. 
    Second, to find such a stabilizing controller, we propose a curriculum learning based algorithm which gradually expands the memory of the controller. Finally, we show that PG with the proposed algorithm converges globally to the minimizer of a sample average approximation of the DR objective under suitable bounds on the heterogeneity of environments. Empirical results support our findings and highlight promising directions for future work, including nonlinear domain-randomized control.
\end{abstract}
% To this end, we first show that when the heterogeneity gap between sampled systems is large, the simultaneous stabilizing controller requires dynamic parameterization, such as a finite impulse response controller. 

\section{Introduction}

% \bruce{Maybe we can start by directly defining the domain randomization problem to highlight the parts that actually matter, i.e. the control objective, the policy class, and the randomization distribution. }

% Domain randomization (DR) searches for a feedback control policy $K$ belonging to a policy class $\calK$ that minimizes a control objective $J$ on average over system parameters $\Theta$ from a distribution $p_{\Theta}$ \citep{tobin2017dr}:
% \begin{align}
%     \label{eq: domain randomization}
%     K_{DR} = \argmin_{K \in \calK} \E_{\Theta \sim p_{\Theta}}[J(K, \Theta)].
% \end{align}
Domain randomization (DR) searches for a feedback control policy belonging to a policy class that minimizes a control objective on average over system parameters from a distribution \citep{tobin2017dr}.
It is widely used in robot learning because the randomization introduces some robustness to the simulator parameters, thereby enabling sim-to-real transfer in various robotics applications such as manipulation \citep{kuang2026dex4d, deng2025graspvlagraspingfoundationmodel} and locomotion \citep{kumar2021rma}. 
At the same time, the DR policy can be trained by applying standard algorithms from deep reinforcement learning that update policy parameters using policy gradient (PG) methods \citep{fujinami2025policygradientlqrdomain}.  %with gradients estimated via zeroth order information (state transitions and rewards) collected from a simulator. 
Such approaches are parallelizable and can be efficiently accelerated using GPUs. 

Despite the utility of DR, it is not well understood how the choice of control objective, randomization distribution, and policy parameterization impact the efficacy of the resulting policy, and the convergence of policy search algorithms.
\citep{fujinami2025policygradientlqrdomain} present the first analysis of policy gradient for linear quadratic regulator (LQR) with domain randomization over the class of static controllers. However, even in simple examples, the static controller class is insufficient to find a common controller that stabilizes all the systems under the distribution, a property known as simultaneous stabilization \citep{Blondel1993simultaneous}. In this paper, we therefore extend the controller class to the history-dependent policies and study policy gradient for domain randomized LQR over this richer class. We highlight the analytical challenges arising from the history dependence of the controller and establish global convergence by introducing a reparameterized representation of the LQR objective.

\subsection{Related Work}
\paragraph{Robust Control}
The DR problem has previously been proposed in the robust control literature \citep{vidyasagar2001randomized}. It has also been known as
% reduces to a type of control known as
stochastic robust control 
% when the objective $J$ is an indicator for stabilization 
\citep{stengel1991technical, ray1993monte} or average robust control~\cite{bittanti2002iterative}, where the idea is to be robust against  average-case uncertainty. In contrast, approaches like $\calH_\infty$ control \citep{bacsar2008h} and $\mu$-synthesis \citep{packard1993complex} provide robustness for the worst case. The scenario approach \citep{calafiore2006scenario} accounts for the worst case but reduces conservatism, drawing samples to provide probabilistic guarantees. 
% The barrier to adoption for this type of robust control objective control community was twofold: 1) it is difficult to make claims about worst-case performance from optimizing an average objective, and 2) it is typically erroneous to assume that the true world can be perfectly represented by the simulator/model for \emph{any} choice of simulator parameters \citep{zhou1996robust}. In contrast, approaches like $\calH_\infty$ control \citep{bacsar2008h} and $\mu$-synthesis \citep{packard1993complex} provide robustness in the worst case and in the face of unmodeled dynamics. 
% However, such approaches do not scale well to complex nonlinear systems
Improvements in hardware efficiency have enabled the renewed use of gradient-based policy optimization in DR methods, making them practical even for complex nonlinear systems, where worst-case approaches might not scale well. 
% However, such approaches do not scale well to complex nonlinear systems, whereas DR can discover policies that perform well in practice on such systems by using gradient-based policy optimization. In this work, we analyze a setting where such policy optimization approaches converge. 
% \paragraph{Simultaneous Stabilization} DR is related to the problem of simultaneous stabilization~\cite{Blondel1993simultaneous}, where the goal is to find a common controller that stabilizes all systems within a given set, without considering a control cost.

\paragraph{Dual and Adaptive Control}
In this paper, we focus on history-dependent policy classes. If we allow arbitrary policies, then the DR problem becomes similar to the dual/adaptive control problem \citep{feldbaum1960dual, feldbaum1960dual2}.
% which is known to be very hard. 
% In particular, the optimal policy for the DR problem is tasked with minimizing the control objective when it is deployed on an unknown instance sampled from a prior distribution $p_{\Theta}$. Given a controller with sufficient memory, and a control objective with a long enough horizon, the optimal policy would identify the system parameter through interaction, and control that system optimally. 
Solving the dual control problem when the policy class consists of arbitrary history-dependent policies is generally intractable, motivating heuristic approaches based on certainty equivalent synthesis \citep{aastrom1973self, lai1987asymptotically}. Alternatively, we could consider reducing the complexity of the history-dependent policy class, which is the approach we follow in this paper. We focus on the class of linear, stationary, finite impulse response (FIR) policies.
% When the policy class is history-dependent, the domain randomization problem \eqref{eq: domain randomization} also resembles the dual control problem \citep{feldbaum1960dual, feldbaum1960dual2}. In particular, the optimal policy for the DR problem is tasked with minimizing the control objective when it is deployed on an unknown instance sampled from a prior distribution $p_{\Theta}$. Given a controller with sufficient memory, and a control objective with a long enough horizon, the optimal policy would identify the system parameter through interaction, and control that system optimally. Solving the dual control problem when the policy class $\calK$ consists of arbitrary history-dependent policies is generally intractable, motivating heuristic approaches based on certainty equivalent synthesis \citep{aastrom1973self, lai1987asymptotically}. Alternatively, we could consider reducing the complexity of the history dependent policy class $\calK$. In this paper we show a setting where policy gradient methods converge for the objective \eqref{eq: domain randomization} when the policy class is restricted to finite impulse response (FIR) policies.

\paragraph{Domain Randomization}
DR was first used with neural network policies optimized via reinforcement learning in a simulator by \cite{tobin2017dr}, who coined the name ``Domain Randomization''. It has since become the standard approach to encourage robustness in robotic policies trained with reinforcement learning in simulation \citep{loquercio2019deep, peng2018drforcontrol}. The sampling distribution may be a design variable in the synthesis problem \citep{mehta2020active}, or come from a learning procedure \citep{fujinami2025domainrandomizationsampleefficient}, e.g. as the posterior distribution of a Bayesian identification procedure. Recent work has explored generalization of DR in discrete Markov Decision Processes \citep{chen2021understanding, zhong2019pacreinforcementlearningrealworld} and for continuous control \citep{fujinami2025domainrandomizationsampleefficient}. Convergence analysis of policy gradient methods for the LQR problem with domain randomization with static policy classes was presented by \citep{fujinami2025policygradientlqrdomain}. 
% However, such convergence analysis does not consider history-dependent policy classes. 
In this work, we extend the convergence analysis of \citep{fujinami2025policygradientlqrdomain} to history-dependent FIR policies in an LQR setting. 
The use of history dependent policies parametrized by Recurrent Neural Networks such that the resulting policy is “adaptive” rather than just robust was introduced by \citep{akkaya2019solving}.

% Domain randomization, introduced by However, there is a limited understanding for the convergence of policy gradient methods to the optimal solution of the domain randomization problem. Optimization procedures which converge to the optimal solution of the domain randomized linear quadratic regulator (LQR) problem over the set of static state feedback policies have been studied \citep{fujinami2025policygradientlqrdomain}. However, convergence for dynamic policies, has not previously been established. This paper closes this gap by analyzing the convergence of policy gradient over the class of finite impulse response controllers for the LQR problem. 

\subsection{Contribution}

We study the DR problem when the control objective is the LQR objective, and the policy class is the set of FIR linear policies of some order. In this setting:
\begin{itemize}[noitemsep, nolistsep]
     \item We show that policy gradient converges globally under a bounded heterogeneity assumption among systems over the distribution. 
    % \item We present illustrative examples indicating the benefits of  FIR policies over static policies for the DR objective, and the limitations of parameterizing the policy class with any linear policy. 
    \item We provide a curriculum learning approach that gradually expands the memory of the policy during the optimization procedure until satisfactory performance is achieved. 
    \item We present illustrative examples indicating the benefits of  FIR policies over static policies for simultaneous stabilization. We further compare the performance of FIR policies with other parameterizations of dynamic policy and show that FIR policies are not only more amenable to optimization, but also outperform other dynamic policies.
\end{itemize}
Our work establishes a promising direction to leverage insights from robust control, adaptive control, and simultaneous stabilization to build a theoretical grounding for domain randomization which can inform policy parameterization and optimization procedures used in the practice of reinforcement learning for robotics. 
\section{Problem Formulation}

We consider the DR problem in the LQR setting. In particular, suppose we have a fully observable linear time-invariant system of the form
\begin{equation}
    \begin{split}
        x_{t+1} &= A(\theta)x_t + B(\theta)u_t + w_t \label{eq: system}
    \end{split}
\end{equation}
with a parameter $\theta\in\R^{d_\theta}$, state $x_t\in\R^{\dx}$, control input $u_t\in\R^{\du}$, and disturbance $w_t\in\R^{\dx}$ where $w_t\sim\calN(0, W)$ is an i.i.d. Gaussian noise. We assume $W = I$.\footnote{Generalization to arbitrary noise covariance $\Sigma_w\succ0$ may be achieved by a change of basis in state space, which transforms $A(\theta), B(\theta)$, and $Q$ accordingly.} 
For fixed $\theta$, we define the control objective function as
\begin{align}
    J(K,\theta) \coloneqq \lim_{T\rightarrow\infty}\frac{1}{T}\E^K\brac{\sum_{t=0}^T x_t^TQx_t + u_t^TRu_t \middle|\theta},\label{eq: lqr objective}
\end{align}
with $Q\succeq I$ and $R=I$. \footnote{R=I can be enforced by a change of basis in input space} The superscript $K$ indicates that the expectation is taken under control inputs generated by the feedback control policy $K\in\calK$: the policy class we study is made precise in \eqref{eq: FIR controller} and Assumption \ref{asmp: simultaneous stabilization}.
% \begin{figure}[tbp]
%     \centering
%     \includegraphics[width=0.95\linewidth]{figures/optimization_stabilization_tradeoff_plot.png}
%     \caption{Optimization-stabilization tradeoff across controller classes. As controller complexity increases, stabilization performance improves, but the optimization becomes more challenging.}
%     \label{fig: opt-stabilize tradeoff}
% \end{figure}
The goal of domain randomization (DR) is to find a controller that minimizes the expectation of the control objective \eqref{eq: lqr objective} over a distribution of system parameters $\theta$. Let $\Theta$ be a random variable with density $p_\Theta$. Then, the \emph{domain randomized LQR (DR-LQR) problem} is to solve:
\begin{align}
    \label{eq: domain randomization}
    K_{DR} = \argmin_{K \in \calK} \E_{\Theta \sim p_{\Theta}}[J(K, \Theta)].
\end{align}
where we define $\dr(K) \coloneqq \E_{\Theta \sim p_{\Theta}}[J(K, \Theta)]$ as the \emph{domain randomization (DR) objective}. 
%find $K^\star_{DR}\triangleq\argmin_{K\in\calK} \dr(K)$, with
% \begin{align}
%     \dr(K) \triangleq \E_{\Theta\sim p_\Theta} J(K, \Theta). \label{eq: dr objective}
% \end{align}

To tackle this problem, we consider drawing $M$ independent parameter samples,  $\theta_1, \cdots, \theta_M\sim p_\Theta$, to construct an approximation of the DR objective:
\begin{align}
    \sa(K) \coloneqq \frac{1}{M}\sum_{i=1}^M J^i(K), \label{eq: sa objective}
\end{align}
where we denote $J^i(K) \coloneqq J(K, \theta_i)$ and each sample follows the dynamics $x_{t+1} = A^ix_t + B^iu_t + w_t$ with $A^i \coloneqq A(\theta_i), B^i \coloneqq B(\theta_i)$. We define $\sa(K)$ as the \emph{sample average (SA) objective}. 
To minimize the SA objective, we apply policy gradient (PG) methods to the control policy. In particular, starting from a controller $K^{(0)}$, we iteratively update the controller as
\begin{align}
    K^{(n+1)} \leftarrow K^{(n)}- \frac{\alpha}{M}\sum_{i=1}^M\nabla_K J^i(K^{(n)}),\label{eq: sa pg}
\end{align}
where $\alpha$ is a fixed stepsize. Our prior work \cite{fujinami2025policygradientlqrdomain} shows that, under a bounded heterogeneity condition, PG converges globally to the minimizer of \eqref{eq: sa objective} when the policy class is the set of static feedback controllers, $u_t = K_0x_t$, that is, $\calK=\{K_0\in\mathbb{R}^{\du\times \dx}\}$. The limitation of using such a policy class for the DR-LQR problem is illustrated in the following example. 
\begin{example}
    \label{ex: a=1.1,-1.1}
    Consider the scalar system 
    \begin{align}
        x_{t+1} = ax_t + u_t, ~ a \in\curly{-1.1, 1.1}.
    \end{align}
    There does not exist a static linear feedback controller $u_t = kx_t$ that stabilizes both systems simultaneously. This can be shown by considering the closed-loop dynamics
    \[
        x_{t+1} = (a + k)x_t,
    \]
    which is stable if and only if $|a+k|<1$. There does not exist $k$ that satisfies this condition for both systems simultaneously. On the other hand, a policy with one step of memory defined by $u_t = K_0x_t + K_1x_{t-1}$ with $K_0 = 0$ and $K_1 = -0.21$ does simultaneously stabilize both systems. 
\end{example}
The above example indicates that controllers with memory may attain a lower cost for the DR-LQR problem than static controllers. In some cases, such as in the above example, static controllers may fail to even achieve a finite cost, while controllers with memory can.
% ient to achieve a low (or even finite) cost for the domain randomized LQR problem, while controllers with memory can. 
Motivated by this, we consider the \emph{Finite Impulse Response (FIR)} parameterization of the feedback controller. The use of other dynamic parameterizations is discussed in \Cref{s: alternative dynamic}. %, where we argue that FIR controller is simple enough for optimization but still effective for harder scenarios as suggested in \Cref{fig: opt-stabilize tradeoff}. 
% the result relies on the assumption of bounded heterogeneity of the sampled systems. \tas{See comment in the overleaf }To overcome this issue, this paper considers the dynamic policy class and shows that it can deal with systems with a larger heterogeneity gap.
A FIR controller of order $H$ is defined as
\begin{align}
    u_t = \sum_{k=0}^{H-1}K_kx_{t-k} = K\xi_t \label{eq: FIR controller}
\end{align}
where
\begin{align*}
    &K = \begin{bmatrix}
        K_0 & K_1 & \cdots & K_{H-1}
    \end{bmatrix} \in \R^{\du\times H\dx}\\
    &\xi_t = \begin{bmatrix}
    x_t^\top & x_{t-1}^\top & \cdots & x_{t-H+1}^\top
\end{bmatrix}^\top\in\R^{H\dx}
\end{align*}
and $\xi_t$ is the augmented state containing information of the history length $H$. We define $x_t = 0$ for $t \leq 0$ for initialization of the buffer. In the augmented state space, the dynamics can be written as
\begin{align}
        \xi_{t+1} &= \bbA^i\xi_t + \bbB^i u_t + \bbB_w w_t = \bbA_K^i\xi_t + \bbB_w w_t
        \label{eq: fir dynamics}
\end{align}
where 
\begin{align*}
    \bbA^i = \begin{bmatrix}
        A^i & 0 & \cdots & 0 \\
        I & 0 & \cdots & 0 \\
        \vdots & \ddots &  & \vdots \\
        0 & \cdots & I & 0
    \end{bmatrix}, 
    \bbB^i =  \begin{bmatrix}
        B^i \\ 0 \\ \vdots \\ 0
    \end{bmatrix}, 
    \bbB_w =  \begin{bmatrix}
        I \\ 0 \\ \vdots \\ 0
    \end{bmatrix}.
\end{align*}
and $\bbA_K^i \triangleq \bbA^i + \bbB^i K$. 
In the rest of the paper, we define the original system \eqref{eq: system} as the \emph{non-lifted system} and the augmented system \eqref{eq: fir dynamics} as the \emph{lifted system}. 
%\tas{do we need to define the things below ($\calL,\mathbb A^i_K$, etc) here? Can we define them later in the convergence analysis?. Instead I would discuss what the objectives are with respect to this new parameterization.}
In the lifted system, the \emph{FIR-LQR objective} is written as
\begin{align}
    J^i(K) = \trace((\bbQ + K^\top RK)\Sigma_K^i) = \trace(P_K^i\bbW).\label{eq: FIR cost}
\end{align}
where we define the state covariance matrix $\Sigma_K^i$ and the value matrix $P_K^i$ as the solutions to the following Lyapunov equations:
\begin{equation}
    \begin{split}
        &\Sigma_K^i = \bbA_K^i\Sigma_K^i \bbA_K^{i\top} + \bbW \\
        &P_K^i = \bbA_K^{i\top} P_K^i \bbA_K^i + \bbQ + K^\top RK,
    \end{split} \label{eq: fir noise and Q}
\end{equation}
where 
\begin{align*}
    &\bbW = \diag(W, 0, \cdots, 0)\in\R^{H\dx\times H\dx} \\
    &\bbQ = \diag(Q, 0, \cdots, 0)\in\R^{H\dx\times H\dx}.
\end{align*} 

% Let $\bbA_K^i \triangleq \bbA^i + \bbB^i K$ denote the closed-loop matrix of the $i$th sample under an FIR policy of order $H$.
% % \[
% %     \bbA_K^i \triangleq \begin{bmatrix}
% %         A_{K_0}^i & B^iK_1 & \cdots & B^iK_H \\
% %         I & 0 & \cdots & 0 \\
% %         \vdots & \ddots &  & \vdots \\
% %         0 & \cdots & I & 0
% %     \end{bmatrix}, ~ A^i_{K_0} \triangleq A^i + B^iK_0.
% % \]
% When $\bbA_K^i$ is schur stable, we define the state covariance matrix $\Sigma_K^i$ and the value matrix $P_K^i$ as the solutions to the following Lyapunov equations:
% \begin{equation}
%     \begin{split}
%         &\Sigma_K^i = \bbA_K^i\Sigma_K^i (\bbA_K^i)\top + \bbW \\
%         &P_K^i = (\bbA_K^i)^\top P_K^i \bbA_K^i + \bbQ + K^\top RK,
%     \end{split} \label{eq: fir noise and Q}
% \end{equation}
% where 
% \begin{align*}
%     &\bbW = \diag(W, 0, \cdots, 0)\in\R^{H\dx\times H\dx} \\
%     &\bbQ = \diag(Q, 0, \cdots, 0)\in\R^{H\dx\times H\dx}.
% \end{align*} 
% Using these definitions, the LQR objective for system $i$ may be expressed as
% \begin{align}
%     J^i(K) = \trace((\bbQ + K^\top RK)\Sigma_K^i) = \trace(P_K^i\bbW).\label{eq: FIR cost}
% \end{align}
Consequently, \cite[Lemma~1]{fazel2018global} tells us that the gradient of \eqref{eq: FIR cost} is given by
\begin{align}
    \nabla_K J(K, \theta_i) &= 2E_K^i\Sigma_K^i \label{eq: gradient}
\end{align}
where
\begin{align*}
    &E_K^i = \brac{(R + \bbB^{i\top} P_K^i\bbB^i)K + \bbB^{i\top} P_K^i\bbA^i}.
\end{align*}
We can therefore write the gradient of the SA objective as 
\begin{align*}
    \nabla_K \sa(K) = \frac{2}{M} \sum_{i=1}^M E_K^i \Sigma_K^i.
\end{align*}

The goal of this paper is to study the convergence of PG over the class of FIR controllers \eqref{eq: FIR controller} to the optimal solution of the SA objective \eqref{eq: sa objective}. To this end, we show that as long as the sampled systems satisfy a condition on bounded heterogeneity, PG converges to the global optimum starting from an initial stabilizing controller.%The proof consists of two parts. We first show that gradient dominance holds for single LQR objective \eqref{eq: lqr objective} with a FIR controller. Next we introduce the heterogeneity assumption among multiple systems, and prove that PG converges to the minimizer of the sample average surrogate $\sa(K)$ \eqref{eq: sa objective} when samples are close to each other. %We finally characterize the discrepancy between $\sa(K)$ and its population counterpart $\dr(K)$ \eqref{eq: dr objective}. 
\section{Global Convergence of Policy Gradient}
\label{sec: convergence}
% \Tesshu{What to show:}
% \begin{itemize}
%     \item Boundedness on K
%     \item (B,s)-boundedness on J under heterogeneity assumption
%     \item Gradient Domination of SA-LQR by the perturbation argument
% \end{itemize}

Our main result proves the convergence of the PG \eqref{eq: sa pg} over the FIR policy classes  to the optimal solution of the SA objective \eqref{eq: sa objective}. The result requires that the samples are close together, or have a small \emph{heterogeneity gap}. 
To quantify the admissible gap, we use two scalars determined by the problem data: the model size bound $\tau_B \triangleq \max\{1, \sup_{\theta\in\calS}\|B(\theta)\|\}$ and the optimal cost bound $\Jbs \triangleq \sup_{\theta\in\calS}J(K(\theta),\theta)$, where $K(\theta)$ denotes the optimal FIR gain of system $\theta$ and $\calS\subseteq\R^{d_\theta}$ is a compact support of $p_\Theta$. With these quantities, we assume that $\calS$ is small enough that the model error is bounded. 
% The requirement on the heterogeneity gap depends on the order of the FIR policy class, shown in the following assumption. 
% \tasb{I would go for an alternative assumption like the one I shared in slack, see commented text below inside overleaf.}
% \tas{does it make sense to replace the assumption by a lemma+assumption, something like
% \textbf{Lemma:} Let $\theta_1,\dots,\theta_M$. There exist $\het(\theta,H)$, $\grad(\theta,H)$ such that if for any $i,j=1,\dots,M$ model error for any pair$<\inf_k \het(\theta_k,H)$ then gradient domination holds if...... 

% \textbf{Assumption:} The support $S$ is small enough that model error for any pair $<\inf_{\theta \in S}\het(\theta,H)$ and $\inf \grad(\theta,H)>0$.
% }

% \bruce{Agree with Tasos. Let's have the first result be the main statement: there exists $\varepsilon_{\mathsf{het}}(\theta, H)$ such if $\mathsf{diam}(S) \leq \varepsilon_{\mathsf{het}}(\theta, H)$ then $J_{SA}(K^{(N)}) - J_{SA}(K^\star) \leq \varepsilon$. }
\begin{assumption}[Heterogeneity Assumption]
\label{asmp: heterogeneity} 
% Let $\calS\subseteq\R^{d_\theta}$ be such that $p_\Theta(\theta) = 0$ for $\theta\notin\calS$. Let $K^\star$ be a minimizer of $\sa(K)$. There exist $\het,\grad>0$, such that if 
    The support $\calS\subseteq\R^{d_\theta}$ of the randomization distribution $p_\Theta$ satisfies
    \[
      \forall\theta_1, \theta_2\in\calS,\,   \norm{\brac{A(\theta_1) ~ B(\theta_1)} - \brac{A(\theta_2) ~ B(\theta_2)}} \leq \barhet,
    \]
    with the nonzero threshold $\barhet = 1/\fhet(H,\tau_B,\Jbs)$, where $\fhet$ is the explicit function of the history length $H$, the model size bound $\tau_B$, and the optimal cost bound $\Jbs$. 
   % then gradient domination holds
   % \begin{align}
   %  \sa(K) - \sa(K^\star) \leq 4\norm{\nabla_K\sa(K)}_F^2, \label{eq: gradient domination}
   % \end{align}
   % locally for all $K$ such that $\sa(K)\le 8\sa(K^\star)$ and $\norm{\nabla_K\sa(K)}_F \leq \grad$.
\end{assumption}
Since $\barhet$ is explicit, it is checkable on the problem data only and nontrivial because it holds whenever the support $\calS$ is sufficiently small. 
Its role is to guarantee that the SA objective satisfies gradient domination, which is the key to global convergence. See \Cref{s: gradient domination} for further discussion.

In addition, we impose an assumption on the initial controller.
% about the closed-loop stabilizability. 
\begin{assumption}[Initial Stabilization]
    \label{asmp: simultaneous stabilization}
    The initial iterate $K^{(0)}\in\calK$ simultaneously stabilizes the sampled systems, where 
    \begin{align} 
        \calK \coloneqq \curly{K\in\R^{\du\times H\dx} : \rho(\bbA^i + \bbB^iK) < 1, ~ i = 1, \cdots, M}, \label{eq: calK-simultaneous-stable-set}
    \end{align}
    and its cost is within a factor of $8$ of the optimal
    \[
        \zeta_0 \coloneqq \sa(K^{(0)}) \le 8\min_{K\in\calK}\sa(K). 
    \]
\end{assumption}
Later, we introduce a method to obtain an initial stabilizing controller that ensures that this assumption is satisfied. See \Cref{remark: initial stabilization} for further discussion.

Now we introduce the main result of the paper. Under the above heterogeneity assumption and initial stabilization, PG \eqref{eq: sa pg} under a suitable choice of stepsize converges globally to a minimizer of \eqref{eq: sa objective}. Let $K^\star \coloneqq \argmin_{K\in\calK} J_{SA}(K)$, and $K_{\zeta_0}\coloneqq \curly{K\in\calK : \sa(K)\leq \zeta_0}$. 
\begin{theorem}[Global Convergence of Policy Gradient over FIR policy classes]
    \label{thm: gradient dominance of sa-lqr}
    Suppose Assumptions \ref{asmp: heterogeneity} and \ref{asmp: simultaneous stabilization} hold. Consider running \eqref{eq: sa pg} on the FIR policy of order $H$ starting from an initial iterate $K^{(0)}$, and let $L$ be the $L$-smoothness constant
    % \tas{what is the smoothness constant? Lipschitz constant? L-smooth constant? can it be found somewhere? you could also say there exists a step-size such that... } 
    of $\sa(K)$ over $K_{\zeta_0}$
    % \triangleq \curly{K\in\calK : \sa(K)\leq \zeta_0}$. 
    Then policy gradient \eqref{eq: sa pg} with stepsize $\alpha = 1/L$ achieves $\sa(K^{(N)})
    - \sa(K^\star)\leq\epsilon$ after 
    % \tas{it seems somehow that $\varepsilon_{grad}$ affects this result, I cannot tell where it should enter, shouldn't you further restrict $K^0$ to be in $S$?}
    \[
        N\geq 2L\zeta_0\max\curly{1/\cb\epsilon, \fgrad^2}
    \]
    iterations, where $\cb$ and $\fgrad$ are explicit functions of $(H, \tau_B, \Jbs)$. 
    % \tas{are you sure it is poly with H, many terms are exponential in H}.
\end{theorem}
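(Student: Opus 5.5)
The plan is the standard template for nonconvex LQR policy gradient, in the spirit of \cite{fazel2018global} and our static-gain analysis \cite{fujinami2025policygradientlqrdomain}: \emph{descent plus a local gradient-domination inequality}. First, invariance of the sublevel set: on $K_{\zeta_0}$ each $J^i(K)\le M\zeta_0$ is finite, so every $\bbA_K^i$ is Schur. The set $K_{\zeta_0}$ is compact, because $\trace(K^\top R K\Sigma_K^i)\le J^i$ and $\Sigma_K^i\succeq\bbW$ bound the first block $K_0$ directly. The later blocks $K_1,\dots,K_{H-1}$ I would bound by noting that $\Sigma_K^i\succeq$ the controllability Gramian of the shift structure, which has full-rank diagonal blocks. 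By the $L$-smoothness of $\sa$ on $K_{\zeta_0}$ and a continuity/segment argument (the segment from $K^{(n)}$ to $K^{(n+1)}$ cannot exit the sublevel set, since the cost blows up at its boundary), the step $\alpha=1/L$ gives
\[
\sa(K^{(n+1)})\le \sa(K^{(n)})-\tfrac{1}{2L}\|\nabla\sa(K^{(n)})\|_F^2 ,
\]
so all iterates stay in $K_{\zeta_0}$.

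Second, the key step is a \emph{local} gradient domination for $\sa$. The single-system FIR cost is not gradient dominated in $K$ directly, because the lifted noise $\bbW$ is rank-deficient. Following the reparameterization hinted at in the introduction, I would write each $J^i$ as a function of the closed-loop response $(\Phi_x,\Phi_u)$. Alternatively, I would use that $\Sigma_K^i\succeq \sigma(H,\tau_B,\Jbs)\,I$ on the sublevel set, where the Gramian of the shift chain is positive definite with a constant degrading in $H$. This yields, for each $i$,
\[
J^i(K)-J^i(K^i_\star)\le \tfrac{\|\Sigma^i_{K^i_\star}\|}{\sigma}\,\|E^i_K\|_F^2 .
\]
Then comes a perturbation argument: under Assumption~\ref{asmp: heterogeneity}, $\|P^i_K-P^j_K\|$ and $\|\Sigma^i_K-\Sigma^j_K\|$ are $O(\barhet\,\mathrm{poly}(\Jbs,\tau_B)^H)$. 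Hence $\nabla\sa(K)=\tfrac2M\sum_i E_K^i\Sigma_K^i$ is close to $2\bar E_K\bar\Sigma_K$ for a representative system. The individual minimizers $K^i_\star$ and $K^\star$ are within $O(\barhet)$ of one another. Combining these gives
\[
\sa(K)-\sa(K^\star)\le \tfrac{1}{\cb}\|\nabla\sa(K)\|_F^2
\]
whenever $\|\nabla\sa(K)\|_F\le \fgrad^{-1}$ and $\sa(K)\le 8\sa(K^\star)$; the latter is guaranteed by Assumption~\ref{asmp: simultaneous stabilization}. Choosing $\fhet$ large enough absorbs the heterogeneity error terms.

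Third, I would count iterations in two phases. While $\|\nabla\sa\|_F>\fgrad^{-1}$, each step decreases the cost by at least $1/(2L\fgrad^2)$. Since the cost starts at $\zeta_0$ and is nonnegative, there are at most $2L\zeta_0\fgrad^2$ such steps. Once in the small-gradient regime, the gradient-domination inequality gives the linear rate
\[
\Delta_{n+1}\le(1-\cb/(2L))\Delta_n ,
\]
where $\Delta_n \coloneqq \sa(K^{(n)})-\sa(K^\star)$, or at least $\Delta_{n+1}\le\Delta_n-\tfrac{\cb}{2L}\Delta_n^2$. This reaches $\epsilon$ within $2L\zeta_0/(\cb\epsilon)$ steps. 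The two bounds together give $N\ge 2L\zeta_0\max\{1/(\cb\epsilon),\fgrad^2\}$.

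The main obstacle is the second step, specifically proving gradient domination of a single FIR-LQR cost with the degenerate $\bbW$. I expect a uniform lower bound on $\lambda_{\min}(\Sigma_K^i)$ over the sublevel set to be achievable only through the shift structure, and it may not hold directly in $K$. The fallback is the reparameterization into a convex, response-based objective whose gradients relate to $\nabla_K J^i$ through a Jacobian bounded in terms of $(H,\tau_B,\Jbs)$. I expect this Jacobian bound to be the source of the $H$-dependence in $\cb$ and $\fgrad$.
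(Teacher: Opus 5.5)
\textbf{The gap is in the passage from per-system to sample-average gradient domination.} The rest of your skeleton (invariance of the sublevel set, a local gradient-domination inequality, iteration counting) matches the paper. Your ``alternative'' for the degenerate lifted noise is also what the paper does: $\Sigma^i_K\succeq\Sigma_H(K)=C_KC_K^\top$, where $C_K$ is block unit upper triangular with off-diagonal blocks $\Phi_l$ and $\|\Phi_l\|\le\sqrt{J}$. This gives $\lambda_{\min}(\Sigma^i_K)\ge(\sum_{h<H}(HJ)^{h/2})^{-2}$, so no response reparameterization is needed.

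The problem is the combination step. Per-system gradient domination, cross-system perturbation bounds, and $O(\het)$-closeness of $K^i_\star$ to $K^\star$ only yield an inequality with an \emph{additive} heterogeneity term. For example, let $\mu$ be a per-system gradient-domination constant and use $\sa(K^\star)\ge\frac1M\sum_iJ^i(K^i_\star)$ and $\|\nabla J^i(K)\|_F\le 2\|\nabla\sa(K)\|_F+\Lambda_1\het$. Then
\[
\sa(K)-\sa(K^\star)\le\frac1M\sum_{i=1}^M\bigl(J^i(K)-J^i(K^i_\star)\bigr)\le\mu\bigl(2\|\nabla\sa(K)\|_F+\Lambda_1\het\bigr)^2 .
\]
The $\mu\Lambda_1^2\het^2$ term does not vanish as $\nabla\sa(K)\to0$. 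Comparing to a ``representative'' system fails the same way: $\nabla J^r(K^\star)-\nabla\sa(K^\star)=\nabla J^r(K^\star)$ is $O(\het)$, not $O(\het\|K-K^\star\|_F)$. Enlarging $\fhet$ shrinks this term but cannot absorb it into $\frac1\cb\|\nabla\sa(K)\|_F^2$. You would only reach an $O(\het^2)$-neighborhood of $\sa(K^\star)$, not arbitrary $\epsilon$. Relatedly, nothing in your argument produces the threshold $\|\nabla\sa(K)\|_F\le1/\fgrad$; you take it from the statement.

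\textbf{The missing idea is to compare with $K^\star$ directly and make every error term multiplicative in $\|K-K^\star\|_F$.} With $\delta=\sa(K)-\sa(K^\star)$, the paper proceeds in three steps.
\begin{enumerate}
\item An SA-level cost-difference identity gives $\delta\le\frac{1}{2\cb}\|\nabla\sa(K)\|_F^2+\frac2\cb\|\sfR(K,K^\star)\|_F^2$, where $\sfR=\frac1M\sum_iE^i_K(\Sigma^i_{K^\star}-\Sigma^i_K)$.
\item Writing $E^i_K=\frac12\nabla J^i(K)(\Sigma^i_K)^{-1}$, the paper combines a relative \emph{gain}-perturbation bound $\|(\Sigma^i_K)^{-1}(\Sigma^i_{K'}-\Sigma^i_K)\|\le C_\gamma\|K'-K\|_F$ with the averaged gradient bound. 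This yields $\|\sfR\|_F\le\frac{C_\gamma}{2}(2\|\nabla\sa(K)\|_F+\Lambda_1\het)\|K-K^\star\|_F$.
\item It proves quadratic growth $\delta\ge\frac\cb2\|K-K^\star\|_F^2$ for $\sa$. This itself needs small $\het$, because only the average $\frac1M\sum_iE^i_{K^\star}\Sigma^i_{K^\star}$ vanishes, not each $E^i_{K^\star}$.
\end{enumerate}
Then $\|\sfR\|_F^2\le\frac\cb8\delta$ once $2\|\nabla\sa(K)\|_F+\Lambda_1\het\le\cb/(2C_\gamma)$, and $\delta$ is absorbed. This condition is exactly where $\fgrad=8C_\gamma/\cb$ and the last entry of $\fhet$ come from.

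Two smaller points.
\begin{itemize}
\item The constants must be independent of $M$, so your bound $J^i(K)\le M\zeta_0$ is not enough. The paper uses scenario boundedness, $J^i(K)\le2\sa(K)\le32\Jbs$ on $\calK_{\zeta_0}$, which follows from heterogeneity and $\zeta_0\le8\sa(K^\star)\le16\Jbs$.
\item Your two-phase count bounds $N$ by the \emph{sum} of the phase lengths, not the max. To get the stated $\max$, apply $\min_{n\le N}\|\nabla\sa(K^{(n)})\|_F^2\le 2L\zeta_0/(N+1)$ once, invoke gradient domination at the minimizing iterate, and transfer the bound to $K^{(N)}$ by monotonicity.
\end{itemize}
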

% \tas{not clear where the proof is}
% \tas{Many derivations have bounds that depend on $K$. I think we should get rid of this by using bounded heterogeneity, coercivity, and the condition $J_{SA}(K)\le 8J_{SA}(K^\star)$.}
The proof is sketched out in the rest of this section. 
% Similar to \citep{fazel2018global}, 
The above result provides a convergence guarantee for PG applied to a SA objective. 
% \tas{the following could go to intro}
% Unlike \citep{fazel2018global} and other related problems, 
Unlike \citep{fazel2018global} and other related problems, optimizing the sample average domain randomization problem is an example where there is no known closed-form solution. This makes policy gradient methods a necessary tool, rather than merely a theoretical curiosity.

% Since optimizing the sample average domain randomization problem is an example where there is no known closed-form solution, this result makes policy gradient methods a necessary tool, rather than merely a theoretical curiosity. 

% The result extends the analysis of \citep{fujinami2025policygradientlqrdomain} from considering static policies to considering FIR policies of some order $H$. Example~\ref{ex: a=1.1,-1.1} demonstrates the importance of such generalizations: even for controlling linear systems with a fully observed state, FIR policies may achieve substantially lower costs for the domain randomization problem than static policies. %Due to the strong heterogeneity requirement of Assumption~\ref{asmp: heterogeneity}, this benefit is not fully realized by our theory. 

\begin{remark}[Heterogeneity Assumption]
    As stated in Assumption \ref{asmp: heterogeneity}, we measure open-loop heterogeneity through the A and B matrices. Under this condition, the sampled systems are sufficiently close that a single controller performs well across all of them, which enables global convergence. This heterogeneity requirement might be conservative, as it excludes particular cases in which multiple systems can be simultaneously stabilized despite not being close in the sense of Assumption~\ref{asmp: heterogeneity}. We leave relaxing this condition or determining if it is fundamental to future work.
\end{remark}

\begin{remark}[Initial Stabilization]
    \label{remark: initial stabilization}
    The above convergence guarantee requires that the PG iteration starts from an initial simultaneously stabilizing controller whose cost is within a factor of $8$ of the optimal solution. Consequently, the initial iterate must simultaneously stabilize all sampled systems (as long as such a simultaneously stabilizing controller exists). In \Cref{s: curriculum learning}, we present a scheme to find such an initial stabilizing controller. 
    
% \tas{move the rest of this paragraph to next section? and just say we present a scheme how to find a stabilizing initial controller in the next section?} Section IV of \citep{fujinami2025domainrandomizationsampleefficient} presents a scheme that consists of successively optimizing discounted versions of the sample average objective in which the stage cost of the LQR problems at time $t$ are multiplied by $\gamma^t$ or, equivalently, the state and input matrices for each sample are multiplied by $\sqrt{\gamma}$. The discounting factor $\gamma$ is chosen to gradually increase from $0$ to $1$ such that the discounted versions of the problem always satisfy the condition on the initial iterate. \citep{fujinami2025domainrandomizationsampleefficient} shows that such a scheme is sufficient to ensure convergence to the optimal solution of the sample average objective in the setting of static policies starting from an arbitrary initial iterate. However, the argument is not specific to static policies, and immediately extends to the setting of FIR policies. A related scheme is discussed in \Cref{s: curriculum learning}. 
\end{remark}

% \tas{discuss why heterogeneity should be bounded? (why we need it) (dont discuss the looseness of $\het,\grad$ here, this is discussed later)}

\begin{remark}[Convergence of the DR Objective $\dr(K)$]
The result ensures convergence to the optimal solution of \eqref{eq: sa objective}. However, the original DR objective is \eqref{eq: domain randomization}. To address this discrepancy, concentration arguments can be applied to demonstrate convergence of the SA objective to its population counterpart as the number of sampled systems becomes sufficiently large. The argument of \cite[Thm.~V.1]{fujinami2025policygradientlqrdomain} is not specific to static policies, and extends to FIR policies. 
\end{remark}

% \citep{fazel2018global} establishes the gradient dominance of LQR;the result does not apply to LQR with FIR controller due to the history dependency of the controller.
% Inspired by lifted reparameterization used in LQG analysis such as \citep{fallah2025gradientdominationlqgproblem}, we first prove the gradient dominance in the single system setting. 
Similarly to~\citep{fujinami2025policygradientlqrdomain,fazel2018global}, the convergence proof consists of  three parts: (1) establishing coercivity and smoothness properties of the SA cost, (2) showing that the property known as \emph{gradient domination} \citep{fazel2018global} holds under small heterogeneity, and (3) combining (1) and (2) to show global convergence. 
% The proof of the convergence relies on the extension of 
% a sample average control analysis worked out in 
% \citep{fujinami2025policygradientlqrdomain} under static control policies.
% \tas{can we highlight this further? what makes the analysis challenging compared to your previous work? we could say couple of words in the intro that FIR analysis is challenging}
% Similarly to~\citep{fujinami2025policygradientlqrdomain,fazel2018global}, the convergence proof consists of  three parts: (1) establishing coercivity and smoothness properties of the FIR-LQR cost, (2) showing a property known as \emph{gradient domination} \citep{fazel2018global}, and (3) combining (1) and (2) to show global convergence. 
We highlight that the analysis of \citep{fujinami2025policygradientlqrdomain,fazel2018global} is \emph{insufficient} to prove these properties in the FIR-LQR setting because the LQR cost needs to be analyzed in the lifted state \eqref{eq: fir dynamics} which has a more intricate landscape because of the history dependence of the controller $K$ \eqref{eq: FIR controller}. In particular, the noise covariance and state penalties are singular in the lifted state space~\eqref{eq: fir noise and Q}.
% \tas{we need to explicitly state that previous analysis is insufficient to prove smoothness,coercivity,etc.. and requires the alternative analysis here}. 
% In the remainder of this section, we outline the main steps to derive the extension. 
In Section \ref{s: coercivity, smoothness}, we show that the SA objective is coercive and smooth in the sub-level set of the cost. 
\Cref{s: gradient domination} establishes gradient domination with the explicit heterogeneity gap bound, and \Cref{s: proof of theorem} combines the two results into global convergence. 
% We then show global convergence under Assumption~\ref{asmp: heterogeneity} in \Cref{s: proof of theorem}. In \Cref{s: gradient domination}, we show that Assumption~\ref{asmp: heterogeneity} is indeed valid, providing upper bounds to $\het^{-1},\grad^{-1}$, by adapting the result of \citep{fujinami2025policygradientlqrdomain} to the lifted setting. 
Full proofs are deferred to the appendix.
% \drivelink. \footnote{The extended manuscript can be found here: \url{https://drive.google.com/file/d/1gPQn1IlEu_XFFg0EsAHVw--lGkxHKygm/view?usp=sharing}}

% In the remainder of this section, we outline the main steps to derive the extension. 
% the gap in the sample average cost between an arbitrary FIR policy $K$ and the optimal FIR policy $K^\star$ can be bounded by the norm of the gradient of the sample average cost evaluated at $K$, a property known as \emph{gradient domination} \citep{fazel2018global}. This suffices to ensure convergence of policy gradient to a minimizer of \eqref{eq: sa pg} under a suitable choice of stepsize.

% In particular, we begin by expressing the gradient of the sample average objective under FIR policies of order $H$ in closed-form by writing the dynamics of a lifted system consisting of an extended state with $H$ steps of memory. We then show why the result of \citep{fujinami2025domainrandomizationsampleefficient} does not immediately apply to this lifted system, and how to fix it. 
% \tas{highlight that FIR analysis is challenging because it is difficult to establish ``smoothness properties", coercivity etc. We should highlight this with words more in the paper}

\subsection{Coercivity and Smoothness of the Sample Average Objective}
\label{s: coercivity, smoothness}
% \tasb{I would prioritize bringing the discussion from lemma III.4 up to the end of this subsection sooner. }\tasb{I can't really follow the short proofs in this section}
We first show that the SA objective is coercive and $L$-smooth. This will in turn be used to ensure convergence of PG in \eqref{eq: sa pg} to a fixed point. Note that the LQR cost is known to be coercive and $L$-smooth over any sub-level set from \cite[Lemma~1]{hu2023toward}. However, it is not obvious if the same property holds for the lifted LQR objective. To this end, we establish the following lemma. 

\begin{lemma}[Coercivity and Smoothness% of sample average FIR-LQR Objective
]
    \label{lem: coercivity of sa FIR-LQR}
    For $i = 1, \cdots, M$, the FIR-LQR cost $J^i(K)$ is coercive over the set $\calK^i \triangleq \curly{K\in\R^{\du\times H\dx}:\rho(\bbA^i + \bbB^iK)<1}$, i.e. for any sequence $\{K^l\}_{l=1}^\infty\subset\calK^i$ we have $J^i(K^l)\rightarrow\infty$ if either $\norm{K^l}_2\rightarrow\infty$, or $K^l\rightarrow K$ with $\rho(\bbA^i + \bbB^iK)=1$. Therefore, the SA cost $\sa(K)$ is coercive on $\calK$ defined in \eqref{eq: calK-simultaneous-stable-set}. Furthermore, let $\calK_{\zeta} \triangleq \curly{K\in\calK : \sa(K)\leq \zeta}$ for $\zeta<\infty$ denote the sublevel set of the sample average objective. Then $\sa(K)$ is twice continuously differentiable and $L$-smooth on $\calK_{\zeta}$. 
\end{lemma}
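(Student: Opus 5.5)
The key obstacle is that the lifted noise covariance $\bbW=\diag(I,0,\dots,0)$ and state penalty $\bbQ$ are singular. So the standard bounds $J\ge\sigma_{\min}(W)\norm{P_K}$ and $J\ge\sigma_{\min}(Q)\norm{\Sigma_K}$ from \cite{fazel2018global} are not directly available. We plan to overcome this with the shift structure of $\bbA^i$: the noise enters the first block, and the shift register propagates it into every block of the lifted state within $H$ steps.

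\emph{Step 1: a lower bound on $\Sigma_K^i$ and $P_K^i$.} Unrolling the covariance Lyapunov equation $H$ times gives $\Sigma_K^i\succeq\sum_{k=0}^{H-1}\bbA_K^{ik}\bbW\bbA_K^{ik\top}$. Block $j$ of $\bbA_K^{ik}\bbB_w$ equals $I$ when $j=k$, and it is zero above block $k$ because of the lower-triangular shift structure. The $H$-step reachability matrix $[\bbB_w,\bbA_K^i\bbB_w,\dots,(\bbA_K^i)^{H-1}\bbB_w]$ is therefore block lower-triangular with identity diagonal, so it is invertible. Its inverse has norm bounded by a polynomial in $\norm{\bbA_K^i}\le\norm{A^i}+\norm{B^i}\norm{K}+1$. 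This gives
\[
\Sigma_K^i\succeq c(\norm{K})\,I,
\]
with $c$ positive and decreasing in $\norm{K}$. In fact a cleaner argument gives a constant independent of $K$: $x_t$ contains fresh noise $w_{t-1}$, so the conditional covariance of $\xi_t$ given the past is block-triangular and bounded below. We will use whichever version is simpler, aiming for $\Sigma_K^i\succeq c_0 I$ with $c_0>0$ independent of $K$ via this innovation argument.

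\emph{Step 2: coercivity.} From the first form of \eqref{eq: FIR cost},
\[
J^i(K)\ge\trace(K^\top RK\,\Sigma_K^i)\ge c_0\norm{K}_F^2,
\]
so $J^i(K^l)\to\infty$ whenever $\norm{K^l}\to\infty$. For boundary convergence $K^l\to K$ with $\rho(\bbA_K^i)=1$, the norms $\norm{K^l}$ stay bounded. We take a unit eigenvector $v$ of $\bbA_{K^l}^{i\top}$ with eigenvalue $\lambda_l$, where $|\lambda_l|\to1$. Then
\[
v^*\Sigma_{K^l}^iv=\sum_k|\lambda_l|^{2k}\,v^*\bbW v .
\]
If $v^*\bbW v$ stays bounded away from zero, this diverges. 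In the degenerate case we instead sum over $H$ steps using Step 1: $v^*\Sigma v\ge (1-|\lambda_l|^{2H})^{-1}\cdot$(the $H$-step Gramian term), and the $H$-step Gramian is uniformly positive definite. Hence $\trace\Sigma_{K^l}^i\to\infty$. Finally, $J^i\ge\trace(\bbQ\Sigma)+c\,\trace(K^\top K\Sigma)$, and the $\bbQ$ and $K$ parts together dominate $\trace\Sigma$ up to the shift structure, because the past states are observed through the $x$ block with delays. The cleanest way to state this is via the sum form $J=\lim\frac1T\E\sum\|x_t\|_Q^2$, together with $\E\|\xi_t\|^2=\sum_{k<H}\E\|x_{t-k}\|^2$. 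Therefore $J^i\ge \trace(\Sigma)/H$, which diverges. The SA cost is an average of nonnegative terms, so it inherits coercivity on $\calK=\cap_i\calK^i$.

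\emph{Step 3: smoothness.} On $\calK^i$, the maps $K\mapsto\Sigma_K^i$ and $K\mapsto P_K^i$ solve Lyapunov equations with Schur-stable coefficients. They are therefore real-analytic, and $J^i$ is $C^\infty$, in particular $C^2$. Coercivity together with continuity implies that $\calK_\zeta$ is compact. It is closed because it stays away from the stability boundary and bounded because of Step 2. A continuous Hessian on a compact set is bounded, which gives the constant $L$. To obtain an explicit $L$ (needed later for $\fgrad$), we would follow the Hessian bound of \cite{hu2023toward}, namely $\norm{\nabla^2J}\lesssim\norm{P_K}\cdot\norm{\Sigma_K}$-type terms. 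These are bounded by $\zeta$ using $\norm{\Sigma_K}\le H\zeta$ from Step 2 and $\norm{P_K}$ controlled through the dual (observability) version of Step 1.

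The main obstacle is the uniform lower bound on $\Sigma_K^i$ and the analogous observability bound on $P_K^i$ in spite of the singular $\bbW$ and $\bbQ$. Everything else follows the static-gain template.
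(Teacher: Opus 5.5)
There is a genuine gap in the first half of your Step 2 (divergence as $\|K\|\to\infty$), and it propagates to Step 3. The $K$-independent bound $\Sigma_K^i\succeq c_0 I$ you aim for is false once $d_x\ge 2$. The conditional covariance of $\xi_t$ given the noise before time $t-H$ is exactly $\Sigma_H(K)=C_KC_K^\top$, with $C_K$ block unit triangular. That gives $\det C_K=1$, but it does not bound $\sigma_{\min}(C_K)$ from below, and $\sigma_{\min}(C_K)$ can be of order $1/\|\Phi_1\|$.

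Concretely, take $B^i=I$, $H=2$, and $K=[K_0\;\,0]$ with $A^i+K_0=N=\bigl(\begin{smallmatrix}0&m\\0&0\end{smallmatrix}\bigr)$. Then $\bbA_K^i$ is nilpotent, so $K\in\calK^i$, and $x_t=Nx_{t-1}+w_{t-1}$. For $v=(v_1,v_2)$ with $v_1=e_1$ and $v_2=-N^\top e_1$ you get $v^\top\xi_t=e_1^\top w_{t-1}$. Hence $v^\top\Sigma_K^iv=1$ while $\|v\|^2=1+m^2$, so $\lambda_{\min}(\Sigma_K^i)\to 0$ as $m\to\infty$.

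Your $K$-dependent fallback $c(\|K\|)$ does not rescue the step. Already for $H=2$, $\|C_K^{-1}\|\le 1+\|\Phi_1\|$ is of order $\|K\|$, so $c(\|K\|)$ is at best of order $\|K\|^{-2}$. Then $J^i(K)\ge c(\|K\|)\|K\|_F^2$ need not diverge. Without this, the boundedness (hence compactness) of $\calK_\zeta$ in Step 3 is unproved, and so is the finite $L$.

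The missing idea is to control the off-diagonal blocks of $C_K$ by the cost rather than by $\|K\|$. Let $\Sigma_x=\sum_l\Phi_l\Phi_l^\top$ be the stationary covariance of $x_t$. Then $J^i(K)\ge\trace(Q\Sigma_x)\ge\sum_l\|\Phi_l\|_F^2$, so $\|\Phi_l\|\le\sqrt{J^i(K)}$. Writing $C_K=I+U$ with $U$ nilpotent and $\|U\|\le\sqrt{HJ^i(K)}$ gives $\|C_K^{-1}\|\le\sum_{h<H}(HJ^i(K))^{h/2}$. Combine this with $J^i(K)\ge\trace(K\Sigma_K^iK^\top)\ge\|KC_K\|_F^2$ to get $\|K\|_F\le\|KC_K\|_F\,\|C_K^{-1}\|\le\sqrt{J^i(K)}\sum_{h<H}(HJ^i(K))^{h/2}$. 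This bounds the gain by the cost alone. The paper gets the equivalent $\|K\|_F\le(1+\sqrt{J})^H-1$ by induction on $G_l=K_l+\sum_{h<l}K_h\Phi_{l-h}$, and the same device gives $\lambda_{\min}(\Sigma_H(K))\ge c(J)$.

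Your boundary argument is fine as it stands, because $\|K^l\|$ stays bounded along a convergent sequence. There, $v^*\Sigma_{K^l}^iv=v^*\Sigma_H(K^l)v/(1-|\lambda_l|^{2H})$ together with $J^i\ge\trace(\Sigma_K^i)/H$ gives divergence. This is a valid alternative to the paper's $\phi(J)=J/c(J)$ argument. Once the gain bound is repaired, your Step 3 goes through essentially as in the paper.
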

% \tas{this is not a complete proof as it requires bounding the least singular value of Phi (which in turn depends on the norm of Phi since the diagonal is identity), which is not discussed. This is one of the technical challenges in this paper since we have to bound the norms of $\Phi$ and $K$ simultaneously. We should either remove the proof environment and discuss informally or do a proper full proof}
We sketch out the main idea of the proof, and defer the full details to Appendix \ref{s app: landscape of single fir-lqr}. To establish coercivity of single FIR-LQR cost, we need to show that $\norm{K}$ is bounded by $J^i(K)$. Since the noise in the lifted state space is degenerate, the techniques of~\cite{fujinami2025policygradientlqrdomain,fazel2018global} cannot be directly applied. Note that $x_t$ can be represented as a linear combination of the past noise $w_{t-l}, \forall l \leq t$ and some closed-loop state impulse response $\{\Phi_l(A^i,B^i,K)\}_{l\geq0}$
\[
    x_t = \sum_{l=0}^{t-1}\Phi_l(A^i,B^i,K)w_{t-1-l}.
\]
where $\Phi_l$ is a function of $A^i, B^i, K$. 
Since $u_t$ is a linear combination of a controller gain $K$ and past $H$ states from $x_{t-H+1}$ to $x_t$, it follows from $R=I$ and $W=I$ that
\begin{align}
    J^i(K) \geq \lim_{T\rightarrow\infty}\E^K[u_t^\top Ru_t] \geq \trace(K\Phi K^\top) \label{eq: Phi bound}
\end{align}
where $\Phi$ denotes that matrix consisting of $\{\Phi_l\}_{l=0}^{H-1}$. By exploiting the special structure of the state impulse responses, we can establish nonsingularity of $\Phi$ and boundedness of $K$ simultaneously. We show that \eqref{eq: Phi bound}, in turn, implies that 
\begin{align}
    \norm{K}_F \leq (1+\sqrt{J})^H - 1. \label{eq: K bound}
\end{align}
% where $f_K(J^i(K),H,\norm{A^i},\norm{B^i})$ 
% \tas{you are already using $f$ for  $\varepsilon_{het}$ maybe differentiate notation} 
% is an increasing function of $J^i(K)$, $H$, $\norm{A}$ and $\norm{B}$. 
Also for any $K\in\calK^i$, $J^i(K)$ goes to infinity when $K$ approaches the boundary of stability region. Therefore, $J^i(K)$ is coercive. 
Then, since $M\sa(K) \geq J^i(K)$, $\sa(K)$ is also coercive on $\calK$. 
Moreover, from \cite[Lemma~1]{hu2023toward}, $J^i(K)$ is twice continuously differentiable $C^2$ and thus $\sa(K)$ is also $C^2$. Therefore, from \cite[Thm.~1]{hu2023toward}, $\calK_{\zeta}$ is compact and thus $\norm{\nabla_K^2 \sa(K)}$ is bounded on $\calK_{\zeta}$. Let this uniform upper bound be $L$. Then $\sa(K)$ is $L$-smooth and satisfy the following inequality by the mean value theorem
\[
    \sa(K') - \sa(K) \leq \langle (K'-K), \nabla \sa(K)\rangle + \frac{L}{2}\norm{K'-K}_F^2
\]
for any $K, K'\in \calK_{\zeta}$. Note that $L$ is determined as
\begin{align*}
    L(\zeta) \coloneqq \max_{K\in\calK_\zeta}\norm{\nabla^2\sa(K)} \le \max_{i\in\brac{M}}\max_{K\in\calK_\zeta}\norm{\nabla^2J^i(K)}
    % L \leq f_L(&\sup_{i\in\{1, \cdots, M\}}J^i(K),\sup_{i\in\{1, \cdots, M\}}\norm{\nabla_KJ^i(K)}, H, \\
    % &\sup_{i\in\{1, \cdots, M\}}\norm{A^i},\sup_{i\in\{1, \cdots, M\}}\norm{B^i})
\end{align*}
which depends on $\zeta$ and the problem parameters only. 
As a consequence of \Cref{lem: coercivity of sa FIR-LQR}, \cite[Thm.~1]{hu2023toward} ensures that PG converges to a fixed point of $\sa(K)$. 
\begin{lemma}[Convergence to a Fixed Point]
    \label{lem: convergence to a fixed point}
    Let $\zeta_0 = J_{SA}(K^{(0)})$ and $\sa(K)$ be $L$-smooth on $\mathcal{K}_{\zeta_0}$.
    Consider the policy gradient method \eqref{eq: sa pg}.  Then, for any $0 < \alpha < \frac{2}{L}$, we have $K^{(n)}\in\mathcal{K}_{\zeta_0}$ and $J_{SA}(K^{(n+1)}) \leq J_{SA}(K^{(n)})$ for all n.
    Furthermore, we have the convergence of the gradient $\nabla_K J_{SA}(K) \rightarrow 0$ with the rate
    \begin{align}
        \min_{0\leq l\leq k}\norm{\nabla_K J_{SA}(K^{(l)})}^2_F \leq \frac{\zeta_0}{C(k+1)}, \label{eq: grad convergence rate}
    \end{align}
    with $C = \alpha - \frac{L\alpha^2}{2} > 0$.
\end{lemma}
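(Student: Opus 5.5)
The argument is the standard descent-lemma argument for gradient descent on an $L$-smooth function restricted to a sublevel set. The one subtlety is that the descent inequality stated after \Cref{lem: coercivity of sa FIR-LQR} requires both endpoints to lie in $\calK_{\zeta_0}$. I would therefore argue by induction on $n$, with the hypothesis $K^{(n)}\in\calK_{\zeta_0}$. The base case is immediate, since $\sa(K^{(0)})=\zeta_0$.

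For the inductive step, let $K=K^{(n)}\in\calK_{\zeta_0}$ and $g=\nabla\sa(K)$. If $g=0$ there is nothing to prove, so assume $g\neq 0$ and consider the segment $K_s = K - s g$ for $s\in[0,\alpha]$.

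\begin{itemize}
\item By \Cref{lem: coercivity of sa FIR-LQR}, $\calK_{\zeta_0}$ is compact and contained in the open set $\calK$. Since $\sa$ is $C^2$ on $\calK$, there is a slightly larger sublevel set $\calK_{\zeta_0+\delta}$ on which the Hessian is still bounded by some $L'$ close to $L$. By continuity one can take $L'\to L$ as $\delta\to0$, or more simply define $L$ on a marginally enlarged level set.
\item Let $\bar s=\sup\{s\in[0,\alpha]: K_r\in\calK_{\zeta_0}\ \forall r\le s\}$. On $[0,\bar s]$ the path stays in the sublevel set. Taylor's theorem with the Hessian bound then gives
\[
\sa(K_s)\le \sa(K) - s\|g\|_F^2 + \tfrac{L}{2}s^2\|g\|_F^2 = \sa(K)-s\Big(1-\tfrac{Ls}{2}\Big)\|g\|_F^2 .
\]
\item For $s<2/L$ the right-hand side is strictly less than $\sa(K)\le\zeta_0$. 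If $\bar s<\alpha$, continuity of $\sa$ on the open set $\calK$ would let the path continue slightly beyond $\bar s$ while remaining in $\calK_{\zeta_0}$. This contradicts the definition of $\bar s$, so $\bar s=\alpha$.
\item Coercivity is what rules out the path leaving $\calK$ through the stability boundary: near that boundary $\sa$ blows up, so the path cannot reach it while the cost stays below $\zeta_0$.
\end{itemize}

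Taking $s=\alpha$ gives $K^{(n+1)}\in\calK_{\zeta_0}$ and
\[
\sa(K^{(n+1)})\le \sa(K^{(n)}) - C\|\nabla\sa(K^{(n)})\|_F^2,\qquad C=\alpha-\tfrac{L\alpha^2}{2}>0,
\]
which proves monotonicity and completes the induction.

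For the rate, sum the per-step inequality over $l=0,\dots,k$:
\[
C\sum_{l=0}^k\|\nabla\sa(K^{(l)})\|_F^2 \le \sa(K^{(0)})-\sa(K^{(k+1)})\le \zeta_0 .
\]
The second inequality uses $\sa\ge 0$. Bounding the sum from below by $(k+1)$ times its minimum term gives \eqref{eq: grad convergence rate}. Since every term in the sum is summable, this also yields $\nabla\sa(K^{(n)})\to0$.

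The main obstacle is the continuity argument that keeps the whole segment, not just its endpoint, inside the region where the Hessian bound holds. This is exactly why compactness of $\calK_{\zeta_0}$ and coercivity of $\sa$ at the stability boundary from \Cref{lem: coercivity of sa FIR-LQR} are needed. Alternatively, one can invoke \cite[Thm.~1]{hu2023toward} directly, since it establishes this property for coercive $C^2$ functions on open domains.
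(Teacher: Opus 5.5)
Your argument is correct and follows essentially the paper's route. The paper proves this lemma only by invoking \cite[Thm.~1]{hu2023toward}, using coercivity, compact sublevel sets and $C^2$ regularity from \Cref{lem: coercivity of sa FIR-LQR}, and your induction with the descent inequality along the segment $K-sg$ plus the supremum/continuity argument is the content of that theorem written out.

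One step needs tightening. When $\sa(K^{(n)})=\zeta_0$ (e.g.\ at $n=0$), your Taylor bound is not strict at $s=0$, so continuity alone does not exclude $\bar s=0$. You need either $\tfrac{d}{ds}\sa(K-sg)\big|_{s=0}=-\|g\|_F^2<0$ together with openness of $\calK$, or to take the supremum over $\calK_{\zeta_0+\delta}$ as your first bullet suggests.
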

Given that $K$ never leaves $\calK_{\zeta_0}$ under PG, we suppose $K\in\calK_{\zeta_0}$ from now on. This makes the rest of proof simpler because we can derive the uniform bounds on each sample's cost $J^i(K)$ on $\calK_{\zeta_0}$ that are independent of $K$. 
\begin{lemma}[Scenario Boundedness]
    \label{lem: scenario boundedness}
    Suppose Assumptions \ref{asmp: heterogeneity}, \ref{asmp: simultaneous stabilization} hold. For every $K\in\calK_{\zeta_0}$ and $i\in\brac{M}$, it holds that
    \[
        J^i(K) \le 2\sa(K) \le 32\Jbs
    \]
\end{lemma}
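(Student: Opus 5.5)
The plan is to prove the two inequalities separately. Both rest on one perturbation estimate: if a single FIR gain $K$ stabilizes one sampled system $j$ with moderate cost, then under Assumption~\ref{asmp: heterogeneity} it stabilizes every other system $i$ with at most twice that cost. I will state this as an auxiliary claim and use it twice.

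\emph{Auxiliary claim.} Suppose $K\in\calK^j$ with $J^j(K)\le 16\Jbs$, and $\|[A^i~B^i]-[A^j~B^j]\|\le\barhet$. Then $K\in\calK^i$ and $J^i(K)\le 2J^j(K)$. The proof will work in the lifted coordinates \eqref{eq: fir dynamics}.

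\begin{enumerate}
\item By \eqref{eq: K bound}, $\|K\|_F\le (1+\sqrt{16\Jbs})^H-1$.
\item The difference $\Delta=\bbA^i_K-\bbA^j_K$ is nonzero only in its first block row, where it equals $[\Delta A+\Delta B K_0,\ \Delta B K_1,\dots,\Delta BK_{H-1}]$. Hence $\|\Delta\|\le \barhet(1+\|K\|)$.
\item Since $Q\succeq I$, the first diagonal block of $\Sigma^j_K$ has trace at most $J^j(K)$. All diagonal blocks of $\Sigma^j_K$ are the same stationary state covariance, so $\|\Sigma^j_K\|\le \trace\Sigma^j_K\le HJ^j(K)$.
\end{enumerate}

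\emph{Main obstacle: singular lifted noise.} The lifted noise $\bbW$ is singular, so the usual Lyapunov perturbation argument of \cite{fazel2018global} does not apply directly. That argument needs $\Sigma^j_K\succeq \lambda_{\min}(\bbW) I$. I would replace it with a lower bound on $\lambda_{\min}(\Sigma^j_K)$ obtained from the impulse-response structure used for Lemma~\ref{lem: coercivity of sa FIR-LQR}. The stacked state $\xi_t$ contains $w_{t-1},\dots,w_{t-H}$ through a block lower-triangular map with identity diagonal blocks, whose off-diagonal blocks are the $\Phi_l$. Those $\Phi_l$ are bounded in terms of $\|A^j\|$, $\tau_B$ and $\|K\|$. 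Therefore
\[
\Sigma^j_K\succeq \sigma_{\min}^2 I\quad\text{with}\quad \sigma_{\min}\ge (1+\max_l\|\Phi_l\|)^{-(H-1)},
\]
which is explicit in $(H,\tau_B,\Jbs)$. A caveat: the map must be built from the noise history $w_{t-1},\dots,w_{t-H}$, not from the states $x_{t-l}$. Its inverse is then controlled by the unit-diagonal triangular structure.

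\emph{Closing the claim.} Define the operator $\calT(X)=\sum_{s\ge0}(\bbA^j_K)^sX(\bbA^{j}_K)^{s\top}$. From $\Sigma^j_K\succeq\sigma_{\min}^2I$ we get $\|\calT\|\le \|\Sigma^j_K\|/\sigma_{\min}^2$. The standard bound $\|\Sigma^i_K-\Sigma^j_K\|\lesssim \|\calT\|\,\|\Delta\|(2\|\bbA^j_K\|+\|\Delta\|)\|\Sigma^j_K\|$ applies once $\|\calT\|\|\Delta\|(\cdot)\le 1/2$. This also gives $\rho(\bbA^i_K)<1$ by continuity along the segment between the two systems. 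I would choose $\fhet$ large enough that $\Sigma^i_K\preceq 2\Sigma^j_K$. Then $J^i(K)=\trace((\bbQ+K^\top K)\Sigma^i_K)\le 2J^j(K)$.

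\emph{Second inequality.} Let $K_1=K(\theta_1)$ be the optimal FIR gain of system $1$, so $J^1(K_1)\le\Jbs$. The claim applied with $j=1$ gives $J^i(K_1)\le 2\Jbs$ for all $i$. Hence $\min_{K\in\calK}\sa(K)\le \sa(K_1)\le 2\Jbs$. For $K\in\calK_{\zeta_0}$, Assumption~\ref{asmp: simultaneous stabilization} then gives $\sa(K)\le\zeta_0\le 8\cdot 2\Jbs=16\Jbs$. This yields $2\sa(K)\le 32\Jbs$.

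\emph{First inequality.} Pick $j^\star=\argmin_j J^j(K)$. Then $J^{j^\star}(K)\le\sa(K)\le 16\Jbs$, so the claim applies with $j=j^\star$. For every $i$ it gives $J^i(K)\le 2J^{j^\star}(K)\le 2\sa(K)$.

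The main difficulty is the lower bound on $\lambda_{\min}(\Sigma^j_K)$ under the degenerate lifted noise, together with keeping every constant explicit in $(H,\tau_B,\Jbs)$ so that it can be absorbed into $\fhet$. The constants will be exponential in $H$.
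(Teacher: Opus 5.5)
Your plan works, modulo one repairable step, and its architecture matches the paper's. Both first show $\min_{K\in\calK}\sa(K)\le 2\Jbs$ via a single-system optimal gain, so that $\sa(K)\le\zeta_0\le 16\Jbs$, and then propagate from the best sample to all others. The perturbation tools differ.

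\textbf{How the paper does it.} For $J^i(K)\le 2\sa(K)$ the paper compares \emph{value} matrices, not covariances. It unrolls $P^j_K-P^{\tilde\theta}_K=\sum_{l\ge0}((\bbA^j_K)^l)^\top X'(\bbA^j_K)^l$ along the \emph{target} dynamics $\bbA^j_K$, which are stable because $K\in\calK$. The forcing term $X'$ then involves only the reference quantities $P^{\tilde\theta}_K,\bbA^{\tilde\theta}_K$. The degeneracy is handled on the cost side through $H(\bar\bbQ+K^\top RK)\succeq I$. This gives $P^j_K-P^{\tilde\theta}_K\preceq H\norm{X'}P^j_K$ directly, with no Lyapunov-operator bound for system $j$ and no stability argument. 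For $\sa(K^\star)\le 2\Jbs$, the paper uses that optimal FIR gains are zero-padded static LQR gains (the lifted DARE is block diagonal). Stabilization of the other samples and the needed bounds on $P^j$ then come from Simchowitz--Foster's static certainty-equivalence results.

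\textbf{What your route buys.} Your single auxiliary claim covers both halves uniformly and needs neither the lifted-DARE structure nor those theorems. The price is a continuity argument for stability of $K(\theta_1)$ on the other samples. That argument is valid: the interpolated matrices are lifted closed loops of interpolated $(A,B)$ with unit-triangular $C_K$, so the covariance blows up at the stability boundary. You also get a threshold exponential in $H$, which is harmless since the paper's $\barhet^2$ is already exponential in $H$ through $\cb$.

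\textbf{The step to fix.} $\Sigma^j_K\succeq\sigma_{\min}^2 I$ does \emph{not} imply $\norm{\mathcal{T}}\le\norm{\Sigma^j_K}/\sigma_{\min}^2$. Take $H=2$, scalar state, $A^j=0$, $K=0$. Then $\bbA^j_K=\begin{bmatrix}0&0\\1&0\end{bmatrix}$ and $\Sigma^j_K=C_K=I$, yet $\mathcal{T}(I)=\diag(1,2)$. What your triangular noise map actually yields is $\Sigma_H(K)=C_KC_K^\top\succeq\sigma_{\min}^2 I$. From that,
\[
\mathcal{T}(I)\preceq\sigma_{\min}^{-2}\,\mathcal{T}(\Sigma_H(K))=\sigma_{\min}^{-2}\sum_{n\ge0}\min(n+1,H)\,(\bbA^j_K)^n\bbW(\bbA^j_K)^{n\top}\preceq H\sigma_{\min}^{-2}\,\Sigma^j_K ,
\]
so $\norm{\mathcal{T}}\le H\norm{\Sigma^j_K}/\sigma_{\min}^2$. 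The extra factor $H$ is absorbed into $\fhet$.

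\textbf{A smaller gap.} $\norm{A^j}$ is not among $(H,\tau_B,\Jbs)$, yet you use it to bound the $\Phi_l$. Either bound $\norm{\Phi_l}\le\sqrt{J^j(K)}$ directly from the cost, as the paper does, or use $\norm{A^j}\le(1+\tau_B)\sqrt{\Jbs}$ from the static DARE. You need the same for the $\norm{\bbA^j_K}$ appearing in your perturbation bound.
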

\begin{proof}
    Refer to Appendix \ref{s app: perturbation argument on the fir-lqr cost}. 
\end{proof}

In the next section, we establish that the SA objective satisfies gradient domination under small heterogeneity. This, in combination with coercivity and smoothness, can show global convergence. 
% that under Assumption \ref{asmp: heterogeneity}, global convergence can be achieved with gradient domination. 

\subsection{Gradient Domination of the Sample Average Objective}
\label{s: gradient domination}
% \tas{you should highlight this lemma and the discussion afterwards, bring those earlier}
Gradient domination of the SA objective only holds when the heterogeneity gap is small. We can formalize this by applying the Lyapunov perturbation arguments developed in \cite{fujinami2025policygradientlqrdomain, simchowitz2020naive}. This requires that the LQR instances have a nondegenerate process noise, and a non-degenerate state cost. However, the lifted LQR parameterization no longer satisfies this property, as $\lambda_{\min}(\bbW) = \lambda_{\min}(\bbQ) = 0$.
To address this issue, we derive new expressions for the LQR cost. 
% Consider an arbitrary lifted LQR with $H$ steps of memory defined by $\bbA$ and $\bbB$ extended form $A$ and $B$. 
Consider an arbitrary FIR policy of order $H$ and define
\begin{align}
    \label{eq: sigma H}
    \Sigma_H^i(K) &\coloneqq \bbW + \cdots + (\bbA_K^i)^{H-1}\bbW((\bbA_K^i)^{H-1})^\top \\
    \bar{\bbQ} &\coloneqq \diag\paren{\frac{Q}{H}, \frac{Q}{H}, \cdots, \frac{Q}{H}} \label{eq: bar Q}
\end{align}
We immediately see that there is a uniform lower bound on $\lambda_{\min}(\bar \bbQ)$ while the LQR cost remains the same if we replace $\bbQ$ with $\bar \bbQ$ in \eqref{eq: FIR cost}. Additionally, by the controllability of the lifted system, we can lower bound the minimum eigenvalue of $\Sigma_H^i(K)$. Combined with the uniform cost bound in \Cref{lem: scenario boundedness}, it holds that
\begin{align}
    \bar{c} \coloneqq \frac{1}{\paren{\sum_{h=0}^{H-1}(32H\Jbs)^{h/2}}^2} \le \lmin(\Sigma_H^i(K))
\end{align}
which is shown in \Cref{lem app: cost bounds on the FIR gain}. 
% This minimum eigenvalue is not uniform over the controllers $K$; however, in the \drivelink, we show that there exists nonzero $c$ such that $c\leq\lambda_{\min}(\Sigma_H^i)$ for some 
% \begin{align*}
%     c \triangleq \frac{1}{f_c(J^i(K),H,\norm{A^i},\norm{B^i})}.
% \end{align*} 
% where $f_c$ is an increasing function of $J^i(K), H, \norm{A^i}$ and $\norm{B^i}$. 
% where $f$ is the increasing function of $J^i(K)$ and $H$. 
% This implies that $\lambda_{\min}(\Sigma_H) \to 0$ only if $J(K) \to \infty$.
% We use these eigenvalue lower bounds to upper bound various system quantities in terms of the objective value.
% \tas{$f$ is used twice already for the other definitions, use a different notation}
With \eqref{eq: sigma H} and \eqref{eq: bar Q}, we reformulate the FIR-LQR objective.
\begin{lemma}[Reformulated FIR-LQR Objective]
    \label{lem: reparameterized lifted lqr}
    The lifted FIR-LQR cost can be written as
    \begin{align}
        J^i(K) = \trace\paren{(\bar \bbQ+K^\top RK){\Sigma}_K^i} = \trace\paren{\bar{P}_K^i\Sigma_H^i(K)} \label{eq: reparameterized lifted lqr obj}
    \end{align}
    where $\Sigma_K^i$ and $\bar{P}_K^i$ are the solution to the following Lyapunov equations
    % \tas{Remove bar from $\bar\Sigma$ throughout the paper it is equal to $\Sigma$. This creates unecessary confusion}
    \begin{align*}
         &{\Sigma}_K^i \coloneqq \bbA_{K}^H{\Sigma}_K^i(\bbA_{K}^H)^\top + \Sigma_H^i(K) \\
         &\bar{P}_K^i \coloneqq ((\bbA_K^i)^{H})^\top\bar{P}_K^i(\bbA_K^i)^{H} + \bar{\bbQ} + K^\top RK
    \end{align*}
\end{lemma}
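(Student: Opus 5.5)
The proof has two equalities, and I will handle each by a direct computation in the lifted coordinates. Throughout, the $\Sigma_K^i$ in the first equality is the original stationary covariance from \eqref{eq: fir noise and Q}. So part of the argument is to check that it also solves the $H$-step Lyapunov equation stated in the lemma; the two definitions coincide.

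\textbf{Step 1: $\bbQ$ may be replaced by $\bar\bbQ$.} Write the stationary covariance as $\Sigma_K^i = \lim_{t\to\infty}\E[\xi_t\xi_t^\top]$, whose $(k,k)$ block is $\lim_t \E[x_{t-k}x_{t-k}^\top]$. Under a stabilizing $K$ the process $x_t$ is asymptotically stationary, so every diagonal block converges to the same matrix $\Sigma_x \coloneqq \lim_t\E[x_tx_t^\top]$. Algebraically, the shift structure of $\bbA_K^i$ together with the Lyapunov equation shows that block $(k+1,k+1)$ equals block $(k,k)$. It follows that
\[
\trace(\bbQ\Sigma_K^i)=\trace(Q\Sigma_x)=\sum_{k=0}^{H-1}\tfrac1H\trace(Q\Sigma_x)=\trace(\bar\bbQ\Sigma_K^i).
\]
Combined with \eqref{eq: FIR cost}, this gives the first equality.

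\textbf{Step 2: the $H$-step Lyapunov equation.} Iterate the one-step equation $\Sigma_K^i=\bbA_K^i\Sigma_K^i\bbA_K^{i\top}+\bbW$ $H$ times to get
\[
\Sigma_K^i=(\bbA_K^i)^H\Sigma_K^i((\bbA_K^i)^H)^\top+\sum_{h=0}^{H-1}(\bbA_K^i)^h\bbW((\bbA_K^i)^h)^\top .
\]
The sum is exactly $\Sigma_H^i(K)$ from \eqref{eq: sigma H}. Since $\rho((\bbA_K^i)^H)=\rho(\bbA_K^i)^H<1$, this $H$-step equation has a unique solution, which is therefore $\Sigma_K^i$. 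By the same argument, $\bar P_K^i$ is well defined as the unique solution $\bar P_K^i=\sum_{j\ge0}((\bbA_K^i)^{Hj})^\top(\bar\bbQ+K^\top RK)(\bbA_K^i)^{Hj}$.

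\textbf{Step 3: trace duality.} Expand the $H$-step equation for $\Sigma_K^i$ as the series $\Sigma_K^i=\sum_{j\ge0}(\bbA_K^i)^{Hj}\Sigma_H^i(K)((\bbA_K^i)^{Hj})^\top$. Substituting into $\trace((\bar\bbQ+K^\top RK)\Sigma_K^i)$ and using cyclicity of the trace, the sum over $j$ reassembles into $\bar P_K^i$, giving $\trace(\bar P_K^i\Sigma_H^i(K))$. Both series converge absolutely because $\rho((\bbA_K^i)^H)<1$, so the interchange of sum and trace is justified.

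The main obstacle is Step 1. There we must rigorously show that the stationary covariance has identical diagonal blocks, so that spreading $Q$ uniformly across the memory leaves the cost unchanged. I would prove this directly from the block structure of the Lyapunov equation. The rows $2,\dots,H$ of $\bbA_K^i$ are pure shifts, and the rows $2,\dots,H$ of $\bbW$ are zero. Hence block $(k+1,k+1)$ of $\bbA_K^i\Sigma_K^i\bbA_K^{i\top}+\bbW$ equals block $(k,k)$ of $\Sigma_K^i$, for $k=1,\dots,H-1$. Steps 2 and 3 are routine.
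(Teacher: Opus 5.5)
Your proof is correct and follows the same outline as the paper's. Equal diagonal blocks of $\Sigma_K^i$ let you trade $\bbQ$ for $\bar\bbQ$, grouping powers of $\bbA_K^i$ into $H$-step blocks gives the $H$-step Lyapunov equation, and trace cyclicity reassembles $\bar{P}_K^i$. The only difference is how you obtain the equal diagonal blocks. You read them off the shift rows of $\bbA_K^i$ and the zero lower blocks of $\bbW$ in the one-step Lyapunov equation. The paper instead uses the impulse-response form $(\bbA_K^i)^t\bbB_w = [\Phi_t;\Phi_{t-1};\dots;\Phi_{t-H+1}]$, which makes each diagonal block equal to $\sum_{s\ge0}\Phi_s\Phi_s^\top$ and reuses a fact it needs anyway for the controllability factor $C_K$.
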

\begin{proof}
    It can be confirmed by writing down the original state covariance matrix \eqref{eq: fir noise and Q} in the infinite sum form, and rearranging in the $H$-step propagation. Refer to \Cref{lem app: strucure of lifted problem} for further details.  
\end{proof}

% The sample average FIR-LQR cost always satisfies the following weaker version of gradient domination, even without any heterogeneity condition. 
Now, we discuss gradient domination of $\sa$. Without any heterogeneity condition, only the weaker condition holds. 
\begin{lemma}[Approximate gradient domination]
    \label{lem: approximate gradient domination}
    Let $K\in\calK$ and $\lambda^\star\coloneqq\min_i\lambda_{min}(\Sigma^i_{K^\star})>0$. Then the SA objective is bounded as
    \begin{align*}
    &\sa(K) - \sa(K^\star) \leq \frac{1}{2\lambda^\star}\paren{\norm{\nabla_K \sa(K)}^2_F + 4\norm{\sfR(K,K^\star)}^2_F}
\end{align*}
where the remainder term $\sfR(K,K^\star)$ is
\begin{align}
    \sfR(K,K^\star) \coloneqq \frac{1}{M}\sum_{i=1}^ME_K^i(\Sigma_{K^\star}^i - \Sigma_K^i).\label{eq: residual term}
\end{align}
\end{lemma}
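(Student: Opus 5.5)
The plan is to follow the cost-difference argument of \cite{fazel2018global}, applied one sample at a time in the lifted system \eqref{eq: fir dynamics}, and then average over the samples \emph{before} completing the square. Averaging first is what produces the single remainder term $\sfR(K,K^\star)$.

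\textbf{Step 1 (per-sample cost difference).} Fix $i$ and let $K,K'\in\calK^i$, with $\Delta \coloneqq K'-K$. Write $J^i(K)=\trace(P_K^i\bbW)$ and unroll the Lyapunov equation for $P_K^i$ along the closed-loop trajectory of $K'$. The lifted LQR cost-difference identity then reads
\[
J^i(K')-J^i(K)=\trace\paren{\Sigma_{K'}^i\paren{2\Delta^\top E_K^i+\Delta^\top(R+\bbB^{i\top}P_K^i\bbB^i)\Delta}}.
\]
The derivation of \cite[Lemma~6]{fazel2018global} is purely algebraic. It uses only the Lyapunov equations \eqref{eq: fir noise and Q} and stability of $\bbA_K^i$ and $\bbA_{K'}^i$. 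It does not use nondegeneracy of $\bbW$ or $\bbQ$, so it transfers verbatim to the lifted setting.

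\textbf{Step 2 (lower bound at $K'=K^\star$).} Take $K'=K^\star$. Since $P_K^i\succeq0$ and $R=I$, we have $\Delta^\top(R+\bbB^{i\top}P_K^i\bbB^i)\Delta\succeq\Delta^\top\Delta$. Together with $\Sigma^i_{K^\star}\succeq\lambda^\star I$, this gives
\[
J^i(K^\star)-J^i(K)\ge 2\langle \Delta, E_K^i\Sigma_{K^\star}^i\rangle+\lambda^\star\norm{\Delta}_F^2 .
\]

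\textbf{Step 3 (average and identify the gradient).} Average Step 2 over $i$, keeping the cross term linear. Split
\[
\frac1M\sum_i E_K^i\Sigma^i_{K^\star}=\frac1M\sum_iE_K^i\Sigma_K^i+\sfR(K,K^\star)=\tfrac12\nabla_K\sa(K)+\sfR(K,K^\star),
\]
where the last equality uses \eqref{eq: gradient}. This yields
\[
\sa(K)-\sa(K^\star)\le -\langle\Delta,\nabla_K\sa(K)+2\sfR\rangle-\lambda^\star\norm{\Delta}_F^2 .
\]
Maximizing the right-hand side over $\Delta$ bounds it by $\norm{\nabla_K\sa(K)+2\sfR}_F^2/(4\lambda^\star)$. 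Applying $\norm{a+b}^2\le2\norm a^2+2\norm b^2$ then gives exactly $\frac{1}{2\lambda^\star}(\norm{\nabla_K\sa}_F^2+4\norm{\sfR}_F^2)$.

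The only delicate point is Step 1 in the lifted coordinates. One must check that $K^\star\in\calK$ is stabilizing, so that $\Sigma^i_{K^\star}$ is well-defined as a convergent series. One must also check that no invertibility of $\bbW$ or $\bbQ$ is needed there; degeneracy enters only through the hypothesis $\lambda^\star>0$, which is assumed in the statement.
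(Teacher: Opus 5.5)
Your proof is correct and is essentially the paper's argument. The paper invokes the cost-difference and completing-the-square argument of \cite[Lemma~III.3]{fujinami2025policygradientlqrdomain} on the lifted instance to get $\sa(K)-\sa(K^\star)\le \frac{1}{\lambda^\star}\norm{T}_F^2$ with $T=\frac{1}{M}\sum_i E_K^i\Sigma^i_{K^\star}=\frac{1}{2}\nabla_K\sa(K)+\sfR(K,K^\star)$, then splits with $\norm{a+b}_F^2\le 2\norm{a}_F^2+2\norm{b}_F^2$, exactly as in your Steps 1--3; its appeal to scenario boundedness only serves to replace $\lambda^\star$ by the uniform constant $\bar{c}$ on $\calK_{\zeta_0}$, which your version rightly does not need for the statement as written.
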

\begin{proof}
    This result follows by applying \cite[Lemma III.3]{fujinami2025policygradientlqrdomain} to the lifted LQR instance along with \Cref{lem: scenario boundedness}. 
\end{proof}

To get exact gradient domination, we establish the upper bound of the remainder term under small heterogeneity. Note that $\sfR(K,K^\star)$ can be rearranged as
\[
    \sfR(K,K^\star) = \frac{1}{2M}\sum_i\nabla J^i(K)(\Sigma_K^i)^{-1}(\Sigma_{K^\star}^i - \Sigma_K^i)
\]
by $\nabla J^i(K) = 2E_K^i\Sigma_K^i$. Furthermore, the gradient of each cost can be decomposed into two parts: 
\[
    \nabla J^i(K) = \nabla \sa(K) + \frac{1}{M}\sum_{j\neq i}(\nabla J^i(K) - \nabla J^j(K))
\]
Thus, $\norm{\sfR(K,K^\star)}_F$ is made small as long as the norms of the SA gradient $\norm{\nabla \sa(K)}$ and the perturbation $\norm{\Sigma_{K^\star}^i - \Sigma_K^i}, \norm{\nabla J^i(K) - \nabla J^j(K)}$ are small. 
Given the continuity of Lyapunov equation on the stabilizing set, the perturbation term can be controlled by the heterogeneity bound $\barhet$. Also, let $\grad(K)\coloneqq\norm{\nabla\sa(K)}_F$ denote this gradient size. Then, the remainder term becomes small as long as $\grad(K)$ is also small, i.e. PG converges around the stationary point. We formalize these observations into the following lemma. 

% by \tas{typo: $\|\sfR \dots\|^2\le \dots$} $\sfR(K,K^\star)\leq \frac{1}{4} (J_{SA}(K) - J_{SA}(K^\star))$ under sufficiently small $\het,\grad$. 
% This in turn enables us to show gradient domination. 

% \citep{fujinami2025domainrandomizationsampleefficient} show this fact in the setting of static policies by applying Lyapunov perturbation arguments from \citep{fazel2018global, simchowitz2020naive} to the difference in state covariance matrices appearing in the remainder term. The application of these perturbation arguments requires that 
% Compared to \eqref{eq: fir noise and Q}, the state covariance matrix $\Sigma_K^i$ remains the same while $P_K^i \neq \bar{P}_K^i$ for $H\neq1$. 
% Now that we have the LQR objective with non-degenerate process noise and state penalty, we can instantiate the perturbation argument from \citep{fujinami2025policygradientlqrdomain} on the remainder term to get the following lemma. 
\begin{lemma}[Gradient Domination]
    \label{lem: gradient domination}
    Suppose Assumptions \ref{asmp: heterogeneity}, \ref{asmp: simultaneous stabilization} hold. Let $C_\gamma$ be an increasing function of $H, \tau_B, \Jbs$ and $\fgrad \coloneqq 8C_\gamma/\cb$. 
    Furthermore, suppose that the controller satisfies the gradient condition $K\in S$, where
    \begin{align}
        S \coloneqq \curly{K\in\calK_{\zeta_0}~:~\norm{\nabla \sa(K)}_F \le \frac{1}{\fgrad}}. 
    \end{align}
    Then gradient domination holds
    \[
        \sa(K) - \sa(K^\star) \le \frac{1}{\cb}\norm{\nabla \sa(K)}_F^2.
    \]
    \begin{proof}
        Refer to Appendix \ref{s app: gradient domination and global convergence}. 
    \end{proof}
    
    % Suppose $\sa(K) \leq 8\inf_K\sa(K)$. 
    % Then there exist  
    % % \tas{inconsistent definition compared to Assumption 1}
    % $\fhet, \fgrad > 0$ such that Assumption \ref{asmp: heterogeneity} is satisfied with
    % \begin{align*}
    %     \het^{-1}\leq\fhet, \quad \grad^{-1}\leq\fgrad
    % \end{align*}
    % where $\fhet, \fgrad$ are the functions of $\sup_{i\in\{1, \cdots, M\}}J^i(K)$, $\sup_{i\in\{1, \cdots, M\}}\norm{A^i}$,$\sup_{i\in\{1, \cdots, M\}}\norm{B^i}$, and $H$. 
    
    % as long as 
    % \[
    %     \norm{\brac{A(\theta_i) ~ B(\theta_i)} - \brac{A(\theta_j) ~ B(\theta_j)}} \leq \het
    % \]
    % for all $i, j \in \curly{1, \cdots, M}$, the remainder term can be bounded
    % \begin{align*}
    %     &\norm{\sfR(K,K^\star)}_F \leq \frac{1}{4}\norm{K - K^\star}_F
    % \end{align*}
    % for all $K\in S$, where 
    % % \tas{$\zeta_0$ should be defined more clearly it is hidden in the statement of ThIII.1, also how is $K-0$ relevant?}\tas{clash of notation with $S$ in the definition for the set of thetas in Assumption 1}
    % \[
    %     S = \curly{K \in \calK_{\zeta_0}:\norm{\nabla_K\sa(K)}_F \leq \grad}.
    % \]
    % % \tas{the way this is written, it reads like $\grad$ depends on the choice of $K$, shouldnt' it be independent of choice of $K$?, do you mean minimum over several $K$?}. 
    % Furthermore for all $K\in S$, the gradient domination holds, i.e.
    % \[
    % \sa(K) - \sa(K^\star) \leq 4\norm{\nabla_K\sa(K)}_F^2
    % \]
\end{lemma}
% \tas{I am worried that these functions depend on $K$ while in Assumption III.1 they are not supposed to. How to remove the sup over J(K)? I guess we should use the property of $J_{SA}(K)\le 8\dots$}
% \tas{We should properly define these functions with equation numbers in the appendix. We should have functions that do not depend on $K$. }

% To prove this result, we first apply the perturbation argument from \cite[Lemma~III.5-7]{fujinami2025policygradientlqrdomain} to the newly defined ${\Sigma}_K$ and $\bar{P}_K$ in \Cref{lem: reparameterized lifted lqr}, and show that the remainder term \eqref{eq: residual term} is upper bounded
% \begin{align}
%     \norm{\sfR(K,K^\star)}_F^2 \leq \frac{1}{4}(\sa(K)-\sa(K^\star)) \label{eq: remainder term ub}
% \end{align}
% as long as $\het^{-1}\leq\fhet, \quad \grad^{-1}\leq\fgrad$. Then, combining with \Cref{lem: approximate gradient domination} yields the gradient domination \eqref{eq: gradient domination}.
% The full proof can be found in the \drivelink. 

\begin{remark}[Limitation of Assumption \ref{asmp: heterogeneity}]
    Since $\fhet$ proposed in \Cref{lem: gradient domination} scales with the FIR order $H$, it does not fully capture the phenomenon observed in Example \ref{ex: a=1.1,-1.1}, which explains that increasing $H$ helps the controller stabilize the multiple systems with larger heterogeneity gap. 
    % Future work will focus on figuring out if the dependence of $\het$ on $H$ is fundamental. 
    However, the convergence result is interesting because even in the small heterogeneity gap regime, the optimization landscape induced by FIR policies remains highly nontrivial. Our result shows that policy gradient can still reliably find the optimum despite the nontrivial optimization landscape. 
    % See \Cref{sec: discussion} for further discussion on possible relaxations of the heterogeneity assumption. 
\end{remark}
% \begin{remark}[Limitation of Assumption \ref{asmp: heterogeneity}]
%     \tas{we could discuss this after the presentation of the result. Some extra context is needed in order to understand how limiting this assumption is. For example, does it have a poor dependence on $H$?}The heterogeneity condition presented in Assumption~\ref{asmp: heterogeneity} is a very strong limitation on the distribution of sampled systems. In particular, under Assumption~\ref{asmp: heterogeneity}, an LQR controller synthesized on any sample from the distribution will stabilize any other sample. Future work will focus on relaxing this assumption. However, the convergence result is interesting despite the strong heterogeneity condition because even in this regime, the optimization landscape induced by FIR policies remains highly nontrivial. Our result shows that policy gradient can still reliably find the optimum despite the nontrivial optimization landscape. See \Cref{sec: discussion} for further discussion on possible relaxations of the heterogeneity assumption. 
% \end{remark}
% Note that from \Cref{lem: coercivity of sa-LQR}, policy gradient is guaranteed to converge to the fixed point, and thus satisfying $S$ is also guaranteed. Based on all the lemmas introduced in this section, we prove the main theorem 

\subsection{Proof of \Cref{thm: gradient dominance of sa-lqr}}
\label{s: proof of theorem}
Finally, combining the landscape analysis of the SA objective in Section \ref{s: coercivity, smoothness} with gradient domination in Section \ref{s: gradient domination}, we prove \Cref{thm: gradient dominance of sa-lqr}. 
% \begin{proof}
% % With these bounds in hand, we can apply perturbation arguments to the remainder term $R(K, K^\star)$ to bound $\norm{\Sigma_{K^\star} - \Sigma_K}$ in terms of a small constant multiplied by $\norm{K-K^\star}$. As long as the initial iterate has a sufficiently small cost, and the heterogeneity condition is satisfied, this upper bound can be substituted into \Cref{lem: approximate gradient domination} to achieve gradient domination, as shown in \app. 
% From \Cref{lem: gradient domination}, $\norm{\nabla_K\sa(K)}_F \leq 1/\fgrad$ needs to be satisfied for gradient domination to hold. Thus from \eqref{eq: grad convergence rate} of \Cref{lem: convergence to a fixed point}, it suffices that
% $N \geq 2L\zeta_0\fgrad^2
% $
% given $\alpha = 1/L$ and $C = 1/2L$. Under this condition, global convergence is achieved from \Cref{lem: gradient domination}, when
% \[
%     \sa(K^{(N)}) - \sa(K^\star) \leq \cb^{-1}\norm{\nabla_K\sa(K)}_F^2 \leq \epsilon
% \]
% which holds after
% $N \geq 2L\zeta_0/\cb\epsilon$
% iterations. This establishes the lower bound on $N$.
% \end{proof}

\begin{proof}
    From \Cref{lem: convergence to a fixed point} and $\alpha = 1/L$, $\sa(K^{(n+1)})\le\sa(K^{(n)})$, and
    $\min_{0\le n\le N}\|\nabla\sa(K^{(n)})\|_F^2 \le 2L\zeta_0/(N+1)$.
    Let $l^\star$ attain this minimum. When $N\ge2L\zeta_0\fgrad^2$, it follows that $\|\nabla\sa(K^{(l^\star)})\|_F \le 1/\fgrad$.
    As Assumptions \ref{asmp-app: simultaneous stabilization}, \ref{asmp-app: heterogeneity assumption} hold and $K^{l^\star}\in\calK_{\zeta_0}$, \Cref{lem: gradient domination} applies at $K^{(l^\star)}$:
    \[
        \sa(K^{l^\star}) - \sa(K^\star) \le \frac{1}{\cb}\norm{\sa(K^{(l^\star)})}_F^2 \le \frac{2L\zeta_0}{N+1}\le\epsilon
    \]
    after $N\ge2L\zeta_0/\cb\epsilon$ iterations. 
    Finally from monotonicity, $\sa(K^{(N)}) - \sa(K^\star) \le \sa(K^{l^\star}) - \sa(K^\star) \le \epsilon$. 
\end{proof}

\section{Curriculum Learning-based Policy Gradient}
\label{s: curriculum learning}
In Section \ref{sec: convergence}, we show that PG converges for any fixed FIR order $H$. The choice of $H$ is a design parameter: smaller $H$ is preferable because it requires less memory and online computation, while larger $H$ can improve simultaneous stabilization, as demonstrated in Example~\ref{ex: a=1.1,-1.1}. We propose a curriculum learning-based algorithm that gradually increases $H$ until satisfactory performance is achieved, whenever such performance is achievable with FIR policies. \Cref{alg:curriculum_learning} consists of two components: discount annealing, which finds an initial stabilizing controller for a given $H$, and FIR order selection, which gradually increases $H$ until a satisfactory policy is found.  

\subsection{Discount Annealing}
Generally, it is hard to find a controller that simultaneously stabilizes a collection of systems. 
\cite[Sect.~IV]{fujinami2025policygradientlqrdomain} presents a scheme that consists of successively optimizing discounted versions of the sample average objective in which the stage cost of the LQR problems at time $t$ are multiplied by the discount factor $\gamma^t$ or, equivalently, the state and input matrices for each sample are multiplied by $\sqrt{\gamma}$ with $0 < \gamma \leq 1$. For any controller $K$, if $\gamma < \min_{i \in \curly{1,\dots,M}} \rho(\bbA(\theta_i)+\bbB(\theta_i)K)^{-2}$, then $K$ stabilizes $\sqrt{\gamma} \bbA(\theta_i)$ and $\sqrt{\gamma} \bbB(\theta_i)$ for $i=1,\dots,M$.\footnote{Scaling $\bbA$ and $\bbB$ by $\sqrt{\gamma} < \rho(\bbA + \bbB K)^{-1}$ is equivalent to introducing a discount factor in the LQR cost which ensures the cost is finite even for a controller which does not stabilize $\bbA$, $\bbB$.} 
Then $\gamma$ is chosen to gradually increase from $0$ to $1$ such that the discounted versions of the problem always satisfy the condition on the initial iterate. \citep{fujinami2025policygradientlqrdomain} shows that such a scheme is sufficient to ensure convergence to the optimal solution of the SA objective in the setting of static policies starting from an arbitrary initial iterate as long as the heterogeneity assumption holds. However, the argument is not specific to static policies, and immediately extends to the setting of FIR policies.

% This observation suggests that if we gradually increase the discount factor until it recovers the original system, we can obtain the simultaneously stabilizing controller as long as this discount annealing scheme converges. 

%  A related scheme is discussed in \Cref{s: curriculum learning}. 

\subsection{FIR Order Selection}
Given a collection of systems, the FIR order $H$ required to simultaneously stabilize all the systems is undecidable when $M\ge3$ \citep{undecidableBlondel1993}. However, \citep{Blondel1993simultaneous} suggests that if systems are simultaneously stabilizable, there exists a proper controller with sufficiently rich complexity that achieves stabilization. This motivates a curriculum learning strategy in which we run policy gradient while gradually increasing $H$. In this way, we can efficiently find a small enough $H$ that provides satisfactory performance. In \Cref{sec: numerical}, we demonstrate the effectiveness of this strategy that progressively increases $H$. 

% \tas{the class of FIR controllers does not include all possible proper controllers, so this is not fully accurate. }. From this, we employ curriculum learning which we run the policy gradient by gradually increasing $H$ until we find some controller which stabilizes all instances and provides satisfactory performance.\tas{what about increasing H just for performance, not stabilization?}
%The problem of finding the smallest $H$ for the given samples is left for the future work. 

\begin{algorithm}[tbp]
    \caption{Curriculum Learning-Based Policy Gradient}
    \label{alg:curriculum_learning}
    \begin{algorithmic}[1]
    \State \textbf{Input:} collection of systems $\theta_1, \dots, \theta_M$, optimization tolerance $\epsilon$, initial cost threshold $C_{\textnormal{\texttt{init}}}$, update multiplier $C_{\gamma_1}$, $C_{\gamma_2}$, update threshold $\tau_\gamma$
    \State Initialize FIR order $H \gets 1$, ~ $K\gets0$
    \State Find largest $\gamma \in (0,1]$ s.t.
    $J_{SA}(K \mid \gamma, H) \le C_{\textnormal{\texttt{init}}}$\label{alg: init gamma}
    \While{$\gamma<1$}
        \State Apply PG \eqref{eq: sa pg} to find $K'$ such that
        \[
            J_{SA}(K' \mid \gamma, H) - \inf_{\tilde K} J_{SA}(\tilde K \mid \gamma, H) \leq \epsilon
        \] \label{alg: K update}
        \State $K \gets K'$        
        \State Find a discount factor $\gamma' \in [\gamma,1]$ such that
        \begin{align*}
            C_{\gamma_1} J_{SA}(K \mid \gamma, H)
            &\;\leq\;
            J_{SA}(K \mid \gamma', H) \\
            &\;\leq\; C_{\gamma_2} J_{SA}(K \mid \gamma, H)
        \end{align*} \label{alg: gamma update}
        % \If{$\gamma'$ such that \eqref{eq:gamma_search} holds does not exist 
        %     \State $H \gets H + 1$
        %     \State \textbf{break}
        % \EndIf
        % \bruce{The choice of when to update requires some more thought/explanationmy suggestion is below. We define some function $f$.}
        % \If{$\gamma' < f(\gamma)$ or $\gamma=1$ and sample cost is unsatisfactory}
        %     \State $H \gets H+1$
        % \EndIf
        \If{$\Delta_\gamma \coloneqq \gamma' - \gamma \leq \tau_\gamma$}
            \State $H \gets H+1$, ~ $K\gets0$ \label{alg: H update}
            \State \textbf{continue}
        \EndIf
        \State $\gamma \gets \gamma'$ 
    \EndWhile
    \State Run \eqref{eq: sa pg} initialized at $K$ to find $K'$ such that
    \[
        J_{SA}(K') - \inf_{\tilde K} J_{SA}(\tilde K) \leq \epsilon
    \]
    \State \textbf{return} $K'$
    \end{algorithmic}
\end{algorithm}

\subsection{Curriculum Learning-Based Policy Gradient}
\Cref{alg:curriculum_learning} is the pseudocode used in numerical experiments. We first choose the largest $\gamma$ such that $K=0$ can simultaneously stabilize a collection of systems, and such that the discounted sample average FIR-LQR cost given the FIR order $H$ $\sa(K\mid\gamma,H)$ is bounded by $C_{\textnormal{\texttt{init}}}$. Then we gradually update $K$ (line \ref{alg: K update}) and $\gamma$ (line \ref{alg: gamma update}) in a dual manner until we reach $\gamma = 1$. Note that $C_{\textnormal{\texttt{init}}}$, $C_{\gamma_1}$ and $C_{\gamma_2}$ need to be  chosen carefully: $C_{\textnormal{\texttt{init}}}$ and $C_{\gamma_2}$ ensure the initial cost condition $\sa(K^{(0)})\leq8\sa(K^\star)$ is satisfied when $\gamma$ reaches $1$, while $C_{\gamma_1}$is used to obtain a finite iteration guarantee. See \cite[Algorithm~1]{fujinami2025policygradientlqrdomain} for a concrete example. To determine when to update $H$, we introduce the update threshold $\tau_\gamma$. When a collection of systems is not simultaneously stabilizable under a given $H$, there exists $\gamma_\star < 1$ such that
\[
    \gamma_\star = \sup_{\gamma,K}\curly{\gamma:\rho(\sqrt{\gamma}\bbA^i + \sqrt{\gamma}\bbB^iK)<1, \forall i\in \{1, \cdots, M\}}.
\]
Therefore in this case, even though it is always possible to find $\gamma'$ such that line \ref{alg: gamma update} holds because of local continuity of $\sa(K)$, the update $\Delta_\gamma \coloneqq \gamma' - \gamma$ becomes incremental as $\gamma$ gets close to $\gamma_\star$. Thus if $\Delta_\gamma\leq\tau_\gamma$ for some threshold, we increase $H$ to enlarge the controller parameterization (line \ref{alg: H update}), then initialize $\gamma$ again. Additional details are provided in the code repository.\footnote{Code available at \href{https://github.com/Tesshuuuu/PolicyGradient_with_DomainRandomization}{this GitHub repository}}
% \tas{define $J_{SA}(K \mid \gamma, H)$}
\section{Numerical Experiments}
\label{sec: numerical}
\subsection{Policy Gradient over FIR Controller classes}
We conduct numerical experiments on the discretized and linearized inverted pendulum defined by
\begin{align}
    \label{eq: linearized pendulum}
    A = \bmat{1 & dt \\ \frac{g}{\ell} dt & 1}, 
    B= \bmat{0 \\ \frac{dt}{m \ell^2}}.
\end{align} 
We suppose that the parameters $dt = 0.01$ and $g=10$ are known. The unknown parameters $m$ and $\ell$ are modeled with a uniform distribution over the interval $[0.5, 5.0]$. Starting from the FIR order $H=1$, we gradually increase $H$ when the discount factor does not increase sufficiently toward 1. \Cref{fig: fir simulation} demonstrates the application of \Cref{alg:curriculum_learning} to $M=10$ systems sampled from the distribution of systems \eqref{eq: linearized pendulum}. We plot the convergence of the SA cost $\sa(K)$, the evolution of the controller gains $K$, and the updates of the discount factor $\gamma$, where phase boundaries indicate the updates of the FIR order $H$. As we can observe, the LQR cost starts to decrease substantially once the FIR order reaches $H=5$, and from that point onward it converges to the minimizer of $\sa(K)$. The plots of the controller gains and discount factor further show that, for smaller FIR orders, the controller gains start to grow rapidly in magnitude as $\gamma$ approaches $1$. This suggests that the sampled systems are difficult to stabilize with a lower order FIR controller. However, as $H$ increases, the controller can depend on longer history, and eventually this additional memory enables simultaneous stabilization, after which PG converges.  
% as suggested in \Cref{thm: gradient dominance of sa-lqr} \tas{I would remove the ``as suggested in..." since we don't know if the conditions of the theorem are met or not for this example}. 

\begin{figure}
    \centering
    \includegraphics[width=0.85\linewidth]{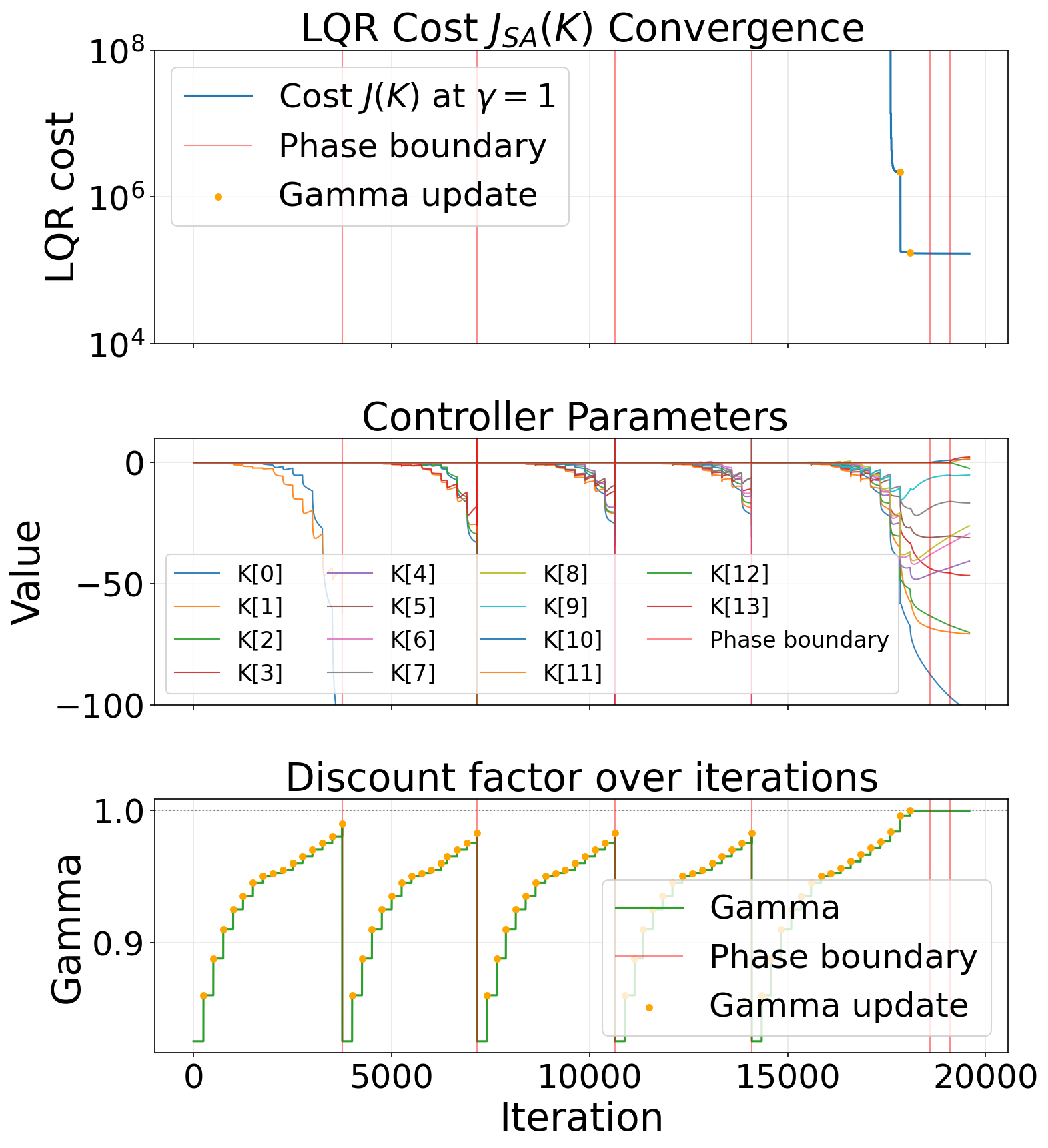}
    \caption{Convergence of policy gradient over FIR controller classes with domain randomization. A controller initialized at $K=0$ converges to the optimal controller via discount annealing and curriculum learning. Phase boundary 
    % \tas{explain term phase boundary in the main text too} 
    represents the increase of the FIR order $H$. The LQR cost starts to converge after $H = 5$. }
    % \tas{Increase font size!} }
    % \tas{increase font sizes in figures, make font sizes consistent across figures}}
    \label{fig: fir simulation}
\end{figure}

\subsection{Alternative Dynamic Parameterization}
\label{s: alternative dynamic}
We next consider an alternative dynamic parameterization. 
The general dynamic policy can be parameterized as follows 
\begin{equation}
    \begin{split}
        \bar\xi_{t+1} &= \bar{A}_K\bar\xi_t + \bar{B}_Kx_t \\
        u_t  &= \bar{C}_K\bar\xi_t + \bar{D}_Kx_t
    \end{split} \label{eq: dynamic K}
\end{equation}
where $\bar\xi_t\in\R^{(H-1)\dx}$. 
Since the FIR controller class admits the following parameterization, FIR is one realization of \eqref{eq: dynamic K}. 
\begin{equation}
    \begin{split}
        A_K = \begin{bmatrix} 0 & 0 \\ I_{(H-2)\dx} & 0 \end{bmatrix}, \quad B_K = \begin{bmatrix} I_{\dx} \\ 0 \end{bmatrix} \\
        C_K = [K_1 ~ \cdots ~ K_{H-1}], \quad D_K = K_0  
    \end{split} \label{eq: fir-dynamic}
\end{equation}
Now we compare the performance of PG over several dynamic controller classes on the following example. 
\begin{example}
    \label{ex:scalar-lqr-unstable}
    Consider the scalar system
    \[
        x_{t+1} = ax_t + u_t + w_t, ~ a\in\{-2.0, 2.0\}
    \] 
\end{example}

\begin{figure}[t]
    \centering
    \begin{subfigure}{0.85\columnwidth}
        \centering
        \includegraphics[width=\columnwidth]{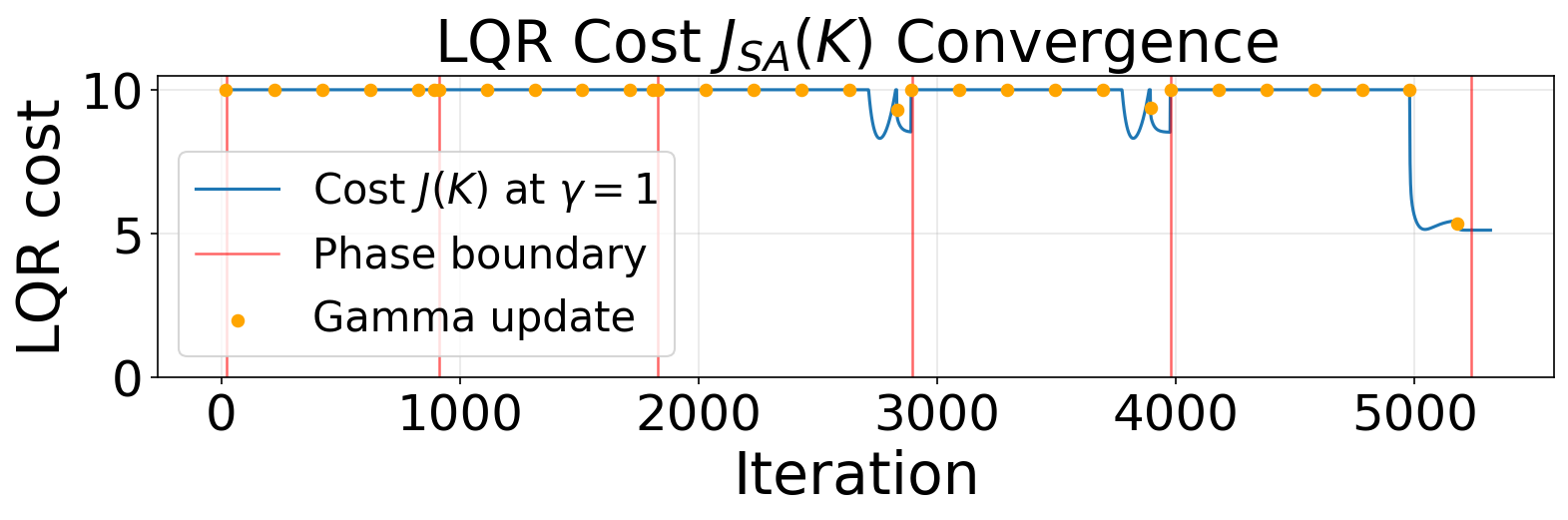}
        \caption{An FIR controller converges at $H = 5$.}
        \label{fig: fir}
    \end{subfigure}

    % \vspace{0.5em}

    \begin{subfigure}{0.85\columnwidth}
        \centering
        \includegraphics[width=\columnwidth]{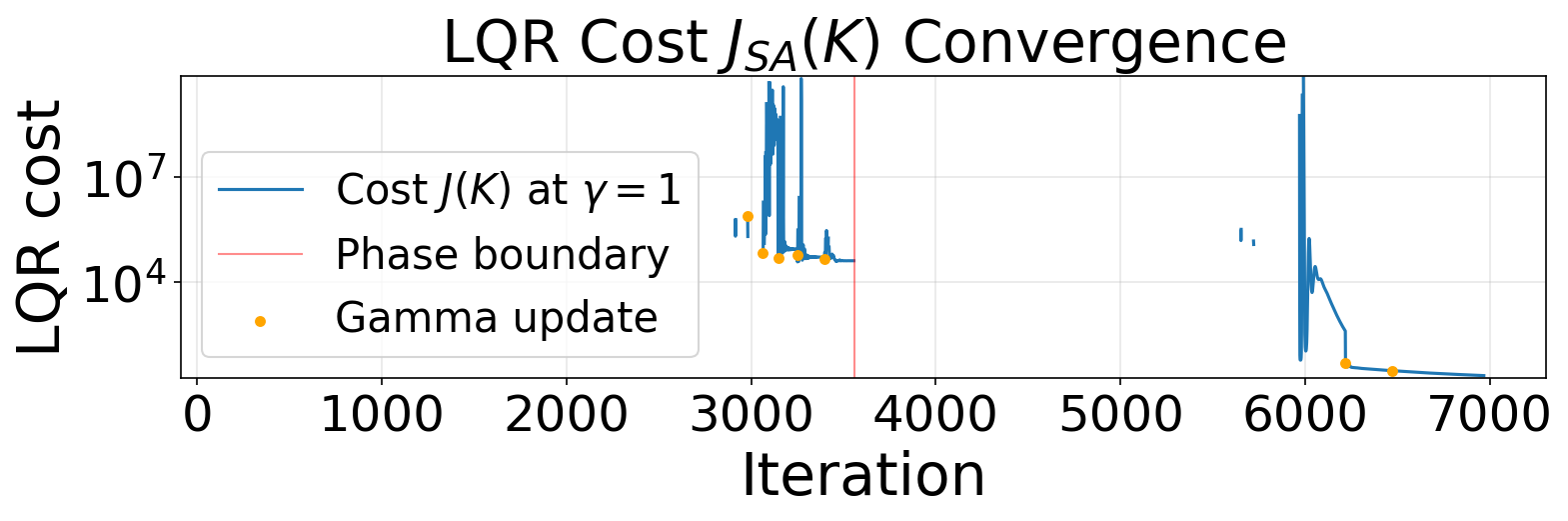}
        \caption{A general dynamic controller \eqref{eq: dynamic K} converges at $H=2$, but optimization is unstable.}
    \label{fig: dynamic}
    \end{subfigure}

    % \vspace{0.5em}

    % \begin{subfigure}{0.8\columnwidth}
    %     \centering
    %     \includegraphics[width=\columnwidth]{figures/Tutorial_canonical_scalar.png}
    %     \caption{Policy gradient on Example \ref{ex:scalar-lqr-unstable} with a canonical form controller \eqref{eq: canonical}. It converges at $H=2$.}
    % \label{fig: canonical}
    % \end{subfigure}

    \caption{Policy gradient on \Cref{ex:scalar-lqr-unstable} with FIR and general dynamic controllers.}
    \label{fig:threeplots_vertical}
\end{figure}

% \begin{figure}[tbp]
%     \centering
%     \includegraphics[width=0.85\linewidth]{figures/fir_lqr_pg_progressive_AB.png}
%     \caption{Policy gradient on \Cref{ex:scalar-lqr-unstable} with a FIR controller. It converges at $H = 5$.}
%     \label{fig: fir}
% \end{figure}
% \begin{figure}[tbp]
%     \centering
%     \includegraphics[width=0.85\linewidth]{figures/Tutorial_abcd_progressive_scalar.png}
%     \caption{Policy gradient on \Cref{ex:scalar-lqr-unstable} with a general dynamic controller \eqref{eq: dynamic K}. It converges at $H=2$ but optimization is unstable.}
%     \label{fig: dynamic}
% \end{figure}

As shown in Figures \ref{fig: fir} and \ref{fig: dynamic}, PG converges in both cases, but the FIR controller attains a lower cost in fewer iterations than the dynamic controller ($5.1$ vs. $21.8$). Although the FIR controller uses longer history ($H=5$) than the dynamic controller ($H=2$), curriculum learning makes optimization stable. 
This highlights the tradeoff between controller complexity and optimization ease.
% illustrated in \Cref{fig: opt-stabilize tradeoff} \tas{missing}. 
Although FIR controllers form a strict subclass of dynamic controllers and are therefore less expressive, their smaller number of degrees of freedom makes optimization easier and leads to more efficient optimization. Results for another dynamic parameterization can be found in Appendix \ref{s app: further experiments in dynamic form}. 

% Another useful parameterization is the canonical form controller
% \begin{equation}
%     \label{eq: canonical}
%     \begin{split}
%         &A_K = \begin{bmatrix} -d_3 & -d_2 & -d_1 & -d_0 \\ 1 & 0 & 0 & 0 \\ 0 & 1 & 0 & 0 \\ 0 & 0 & 1 & 0 \end{bmatrix}, \quad B_K = \begin{bmatrix} 1 \\ 0 \\ 0 \\ 0 \end{bmatrix}, \\
%         &C_K = \begin{bmatrix} n_3 & n_2 & n_1 & n_0 \end{bmatrix}, \quad D_K = 0
%     \end{split}
% \end{equation}
% % \begin{figure}[tbhp]
% %     \centering
% %     \includegraphics[width=0.85\linewidth]{figures/canonical_sclar.png}
% %     \caption{Policy gradient on Example \ref{ex:scalar-lqr-unstable} with a canonical form controller \eqref{eq: canonical}.}
% %     \label{fig: canonical}
% % \end{figure}
% As shown in \Cref{fig: canonical}, the canonical form is also effective for identifying simultaneously stabilizing controller. 
% Since it has more parameters than the FIR class, it only uses $H=2$ history while optimization remains stable. This further supports the tradeoff that limiting the degree of freedom of the controller class can simplify policy optimization while still achieving strong performance. However, compared with FIR controller in \Cref{fig: fir}, the canonical class attains a suboptimal solution. One potential explanation is that, by limiting $D_K = 0$,  the canonical form may fail to represent the best controller.
% \tas{any advantage compared to canonical?}

\section{Discussion}
\label{sec: discussion}
There are several possibilities for extensions of the results presented in this paper.

\begin{itemize}[noitemsep, nolistsep]
\item \textbf{Nonlinear parameterizations}
Although adding memory to the controller appears to improve simultaneous stabilizability, there still exist some examples that no linear controller can stabilize, such as \Cref{ex:scalar-lqg-unstable}. The use of nonlinear parameterization to overcome this problem is a promising direction. 
\begin{example}
    \label{ex:scalar-lqg-unstable}
    Consider the scalar system
    \[
        x_{t+1} = x_t + bu_t + w_t, ~ b\in\{-1.1, 1.1\}
    \] 
    There does not exist a static controller nor a dynamic controller that can stabilize the system. 
\end{example}

\item \textbf{Relaxed heterogeneity assumptions: } 
The heterogeneity assumption $\het$ is introduced for the optimality analysis of the DR-LQR objective, but the sufficient condition in \Cref{lem: gradient domination} suggests that $\het$ scales inversely with the FIR order $H$. This appears to conflict with the stabilization perspective, in which more expressive controllers generally increase the possibility of simultaneous stabilization by shaping the closed-loop geometry more favorably. Investigating whether this intuition from simultaneous stabilization can be leveraged in the optimality analysis is an intriguing direction for future work. 
% Although we analyze the closeness condition from the viewpoint of open-loop heterogeneity, this perspective does not fully capture the similarity among the resulting closed-loop systems. This is why theory developed in this paper does not explain why increasing memory help simultaneously stabilize systems\tas{I would rephrase this since we know that more expressive controllers give more chances for simultaneous stabilization. I guess you mean that despite the strict heterogeneity assumptions for convergence we still have stability. But then again the heterogeneity assumption is to analyze optimality, not stabilizability so let's not conflate these two }. As alternative perspectives, a common Lyapunov function or a bisimulation function may offer the promising directions \tas{to use them for what?}.

\end{itemize}
\section{Conclusion}
Our work motivates the use of history-dependent policies for domain randomization by considering collections of systems for which no static linear feedback controller can simultaneously stabilize the collection, whereas a history-dependent policy parameterized by a lower order finite impulse response controller can. Motivated by this fact, we prove the convergence of policy gradient methods over finite impulse response policies applied to minimize the sample average domain randomized linear quadratic control objective. 
We believe that further study of optimal policy classes for domain randomization serves as a stronger theoretical foundation for  reinforcement learning in robotics.

\section*{Acknowledgements}
% We thank Leonardo Toso for instructive conversations. 
TF is supported by JASSO Exchange Support program. TF and GP are supported in part by NSF TRIPODS EnCORE 2217033. BL is supported by an ETH AI Center Postdoctoral Fellowship. NM is supported by NSF Award SLES-2331880.

\bibliographystyle{IEEEtran}
\bibliography{refs}

\onecolumn

\section{Appendix}

\useRomanappendicesfalse
\appendices
\renewcommand{\thesubsection}{\thesection.\arabic{subsection}}
\renewcommand{\thesubsectiondis}{\thesubsection}

\setcounter{equation}{0}
\renewcommand{\theequation}{A.\arabic{equation}}

% \etocsettocstyle{\subsection*{Contents}}{}
% \localtableofcontents

\section{Notation and preliminaries}
\label{s: notation}
We briefly recall the setting so that the appendix is self-contained. For stabilizable sample parameters $\theta_i = (A^i,B^i), i\in\brac{M}$, drawn from $p_\Theta$ supported on $\calS\subseteq\R^{d_\theta}$, the dynamics are $x_{t+1} = A^ix_t + B^iu_t + w_t$ with $w_t\sim\calN(0,I)$, and the per-system cost of a policy $K$ is $J^i(K) = \lim_{T\rightarrow\infty}\frac{1}{T}\E^K\brac{\sum_{t=0}^T x_t^TQx_t + u_t^TRu_t \middle|\theta_i}$, where we assume $Q\succeq I$ and $R=I$. We consider the FIR policy class of order $H$, 
\[
    u_t = \sum_{h=0}^{H-1}K_hx_{t-h}, \quad K = \begin{bmatrix}
        K_0 & K_1 & \cdots & K_{H-1}
    \end{bmatrix}\in\R^{\du\times H\dx}
\]
Policy gradient is run on the sample-average cost
\begin{align}
    \sa(K) \coloneqq \frac{1}{M}\sum_{i=1}^MJ^i(K), \quad K^{(n+1)} = K^{(n)} - \alpha\nabla_K\sa(K^{(n)}) \label{eq app: pg}
\end{align}
Stacking the state history $\xi_t = \begin{bmatrix}
x_t^\top & x_{t-1}^\top & \cdots & x_{t-H+1}^\top\end{bmatrix}^\top\in\R^{H\dx}$ yields the lifted system with static feedback $u_t = K\xi_t$:
\begin{align}
    \xi_{t+1} &= \bbA^i\xi_t + \bbB^i u_t + \bbB_w w_t, \quad \bbA^i = \begin{bmatrix}
        A^i & 0 & \cdots & 0 \\
        I & 0 & \cdots & 0 \\
        \vdots & \ddots &  & \vdots \\
        0 & \cdots & I & 0
    \end{bmatrix}, ~ 
    \bbB^i =  \begin{bmatrix}
        B^i \\ 0 \\ \vdots \\ 0
    \end{bmatrix}, ~ 
    \bbB_w =  \begin{bmatrix}
        I \\ 0 \\ \vdots \\ 0
    \end{bmatrix}. \label{eq app: lifted system}
\end{align}
We write $\bbA_K^i \coloneqq \bbA^i + \bbB^iK$, $\bbW \coloneqq \bbB_w\bbB_w^\top = \diag(I, 0, \cdots, 0)$, $\bbQ \coloneqq \diag(Q,0,\cdots,0)$. When $\rho(\bbA_K^i)<1$, the per-system cost takes the matrix form
\[
    \Sigma_K^i \coloneqq \dlyap(\bbA_K^{i\top}, \bbW) = \sum_{t\ge0}(\bbA_K^i)^t\bbW(\bbA_K^i)^{t\top}, \quad J^i(K) = \trace\paren{(\bbQ+K^\top RK)\Sigma_K^i}
\]
where $P=\dlyap(A,X)$ is the unique PSD solution to $P = A^\top PA+X$.

Throughout, $\norm{\cdot}$ is the spectral norm, $\norm{\cdot}_F$ the Frobenius norm, $\norm{\cdot}_{\{\cdot,F\}}$ either of the two, $\rho(\cdot)$ the spectral radius, and $\lmin(\cdot)$ the smallest eigenvalue. We use the following objects. 
\begin{itemize}
    % \item For a matrix A, let $\norm{A}$ denote the operator norm and $\norm{A}_F$ the Frobenius norm. When a statement holds for either norm, we write $\norm{A}_{\{\cdot,F\}}$.
    \item \textit{Stabilizing and sublevel sets.} $\calK^i \coloneqq \curly{K\colon\rho(\bbA_K^i)<1}$, ~ $\calK\coloneq\cap_{i\in\brac{M}}\calK^i$, ~ $\calK_\zeta\coloneqq\curly{K\in\calK\colon\sa(K)\le\zeta}$. 
    \item \textit{Heterogeneity and gradient size}. $\het \coloneqq \max_{\theta_i, \theta_j\in\calS}\curly{\norm{A^i - A^j}, \norm{B^i - B^j}}$, $\grad(K) \coloneqq \norm{\nabla_K \sa(K)}_F$. Note that $\grad(\cdot)$ is a quantity evaluated along the analysis. 
    \item \textit{Model size.}$\tau_B \coloneqq \max\curly{1, \max_{\theta\in\calS}\norm{B(\theta)}}$, ~ $\nu(K) \coloneqq 1 + \norm{K}$
    % \item $\zeta(K) \triangleq 1 + \norm{K}^2$
    % \item $\norm{X + Y}^2\leq2\norm{X}^2+2\norm{Y}^2$
    % \item $X^n - Y^n = \sum_{t=0}^{n-1}X^{n-1-t}(X-Y)Y^t$
    \item $H$-step objects. $\bbA_{K,H}^i \coloneqq (\bbA_K^i)^{H}$, ~ $\Sigma_H^i(K) \coloneqq \sum_{t=0}^{H-1}(\bbA_K^i)^t\bbW(\bbA_K^i)^{t\top}$, ~ $\bar{\bbQ}\coloneqq\diag\paren{\frac{1}{H}Q, \cdots, \frac{1}{H}Q}\succeq\frac{1}{H}I$
    % \item $\Sigma_K = \bbA_K\Sigma_K\bbA_K^\top + \bbW = \bbA_{K,H}{\Sigma}_K\bbA_{K,H}^\top + \Sigma_H$
    \item \textit{Value matrices}. For $K\in\calK^i$, 
    \[
        P_K^i \coloneqq \dlyap(\bbA_K^i, ~ \bar\bbQ + K^\top RK) \footnote{Note that $\bar\bbQ$ is used instead of $\bbQ$ in \eqref{eq: fir noise and Q} because the cost is same either way}, ~ \bar{P}_K^i \coloneqq \dlyap(\bbA_{K,H}^i, ~ \bar\bbQ + K^\top RK).
    \]
    \item \textit{Gradient}. For $K\in\calK^i$, 
    \[
        \nabla J^i(K) = 2E_K^i\Sigma_K^i, \quad E_K^i \coloneqq \paren{R + \bbB^{i\top}P_K^i\bbB^i}K + \bbB^{i\top}P_K^i\bbA^i
    \]
    \item \textit{Optimal policies}. $K_i^\star$ denotes the optimal FIR gain of system $i$, and $K(\theta)$ that of a parameter $\theta\in\calS$. $K^\star\in\argmin_{K\in\calK}\sa(K)$ and
    \begin{align}
        \label{eq app: sup optimal gain}
        \bar{J}_\star \coloneqq \sup_{\theta\in\calS}J^{\theta}(K(\theta)) < \infty
    \end{align}
    Since $Q\succeq I$, $R=I$, we freely use $\bar{J}_\star \ge \trace Q \ge 1$. 
    % $P_K = \bbA_K^\top P_K\bbA_K + \bar{\bbQ} + K^\top RK$
    % \item $\bar{P}_K = (\bbA_{K,H})^\top\bar{P}_K\bbA_{K,H} + \bar{\bbQ} + K^\top RK$ 
    % \item $\dlyap(X,Y)$ is the unique PSD solution to $X^\top PX - P = Y$
    % \item $\Psta, P$ are the solution to the discrete algebraic Ricatti equation (or $\dare$) of LQR and lifted LQR.
    % \item $\norm{\bbA_K^i} \leq \alpha \triangleq \poly((\sqrt{J^i(K)})^H, A^i,B^i)$ 
    % \item $c \triangleq \lambda_{min}(\Sigma_K^i) = \poly((\sqrt{J^i(K)})^H, A^i,B^i)$  \Tesshu{to be shown, not given}\,\bruce{Let's not define it here then}
    % \item $C \triangleq \sum_{j=0}^H \alpha^{2j} = \poly((\sqrt{J^i(K)})^H, A^i,B^i)1$
\end{itemize}

\section{Assumptions and proof roadmap}
\begin{assumption}[Initial policy]
    \label{asmp-app: simultaneous stabilization}
    An initial gain $K^{(0)}\in\calK$ is available, and $\zeta_0 \coloneqq \sa(K^{(0)}) \leq 8\min_{K\in\calK}\sa(K)$.
\end{assumption}
\begin{assumption}[Small heterogeneity]
    \label{asmp-app: heterogeneity assumption}
    $\het\le\barhet$, where $\barhet = 1/\fhet(H,\tau_B,\bar{J}_\star)$ is the explicit threshold defined in \Cref{lem app: gradient domination of sa}. 
    % Let $\calS\subseteq\R^{d_\theta}$ be such that $p_\Theta(\theta) = 0$ for $\theta\notin\calS$. There exist $\het>0$, such that the following heterogeneity condition holds:
    % \[
    %   \forall\theta_1, \theta_2\in\calS,\,   \norm{\brac{A(\theta_1) ~ B(\theta_1)} - \brac{A(\theta_2) ~ B(\theta_2)}} \leq \het,
    % \]
\end{assumption}
% \begin{assumption}[Bounded Sample Average Cost Assumption]
%     \label{asmp-app: bounded cost assumption}
%     \Tesshu{not correct}
%     We assume that $\forall K \in \calK$ satisfies
%     \[
%         \sa(K) \leq \zeta
%     \]
%     for $\zeta<\infty$. 
% \end{assumption}
% \subsection{Proof Roadmap}
\textbf{Roadmap}. The lifted system does not immediately follow the static LQR theory \cite{fujinami2025policygradientlqrdomain, fazel2018global, hu2023toward} since $\lmin(\bbW) = \lmin(\bbQ) = 0$ for $H\ge2$. We proceed in five steps. 
\begin{enumerate}
    \item \textbf{The landscape of a single FIR-LQR cost} (\Cref{s app: landscape of single fir-lqr}).
    Structural facts of the lifted problem remove the degeneracy (\Cref{lem app: strucure of lifted problem}). This yields coercivity, smoothness, and gradient domination of a single cost. 
    \item \textbf{Optimal FIR policy of a single system} (\Cref{s app: optimal fir policy}). 
    Memory is useless for a single system. The lifted $\dare$ solution admits a block-diagonal structure, and the optimal FIR gain is the static LQR gain padded with zeros (\Cref{lem app: policy evaluation with the optimal gain}). This yields $\dare$ norm bounds that are later used to compare multiple systems. 

    \item \textbf{Uniform bounds} (\Cref{s app: uniform bounds}). Under small heterogeneity, every FIR gain in the initial sub-level set ($K\in\calK_{\zeta_0}$) satisfies uniform cost bounds, which allow all the landscape constants to be uniformly bounded by the functions independent of $K$ (\Cref{cor:uniform}). 
    
    \item \textbf{Perturbation bounds} (\Cref{s app: perturbation} - \ref{s app: perturbation across gains}). We bound the perturbation on $\Sigma_K, P_K, E_K, \nabla J(K)$ across systems and gains. 
    
    \item \textbf{Gradient Domination of the sample average FIR-LQR cost} (\Cref{s app: gradient domination and global convergence}). Approximate gradient domination plus quadratic growth produces gradient domination of $\sa$ (\Cref{lem app: gradient domination of sa}), followed by global convergence of policy gradient over FIR policy classes (\Cref{thm app: global convergence}). 
\end{enumerate}

\section{The landscape of a single FIR-LQR cost}
\label{s app: landscape of single fir-lqr}
The convergence analysis of the sample average LQR objective builds on coercivity and gradient domination of a single LQR cost \cite{fujinami2025policygradientlqrdomain}. Since the lifted instance does not immediately have the required structure, we first introduce exact reformulations of the FIR-LQR cost that remove the degeneracy of the noise covariance and the state penalty matrices. Throughout this section, we drop the superscript $i$ when the argument is about a single cost. 

\subsection{Removing the degeneracy}
$x_t$ for all $t\geq1$ can be represented as a linear combination of the past noise $w_{t-l}, \forall l\leq t$, $\forall i \in\curly{1, \cdots, M}$, 
\begin{align}
    x_t = \sum_{l=0}^{t-1} \Phi_lw_{t-1-l}, \label{eq: closed-loop state response}
\end{align}
where we define the closed-loop state impulse response $\{\Phi_l^i\}_{l\geq0}$ by
\begin{align}
    \Phi_0 = I, \quad \Phi_l = (A+BK_0)\Phi_{l-1} + \sum_{h=1}^{H-1}BK_h\Phi_{l-1-h}, \quad l\geq 1, \quad \Phi_r=0(r<0). \label{eq app: closed-loop response}
\end{align}
% where $A_{K_0}^i = A^i+B^iK_0$ and with the convention $\Phi_r^i = 0$ for all $r < 0$. 
We also define the $H$-step noise controllability factor $C_K\coloneqq\brac{\bbB_w ~ \bbA_K\bbB_w ~ \cdots ~ \bbA_K^{H-1}\bbB_w}$. 
First, we show that we can remove the degeneracy of a single FIR-LQR cost by introducing exact reformulations with the non-degenerate H-step state cost matrix $\Sigma_H(K)$. 
\begin{lemma}[Structures of the lifted problem; proves \Cref{lem: reparameterized lifted lqr}]
    \label{lem app: strucure of lifted problem}
    Fix any $K$. Then
    \begin{enumerate}[label=(\alph*)]
        \item For every $t\ge0$, $\bbA_K^t\bbB_w = \brac{\Phi_t; \Phi_{t-1}; \cdots; \Phi_{t-H+1}}$. 
        \item $C_K$ is unit upper triangular matrix; in particular it is invertible and for every $K$
        \[
            \Sigma_H(K) = C_KC_K^\top \succ 0
        \]
        \item If $\rho(\bbA_K)<1$, the cost admits the exact reformulations
        \begin{align}
            J(K) = \trace\paren{(\bar\bbQ + K^\top RK)\Sigma_K} = \trace\paren{\bar{P}_K\Sigma_H(K)} \label{eq app: lifted cost}
        \end{align}
        where $\Sigma_K=\bbA_{K,H}\Sigma_K\bbA_{K,H}^\top+\Sigma_H(K)\succeq\Sigma_H(K)$. Moreover $P_K = \sum_{t=0}^{H-1}\bbA_K^{t\top}\bar{P}_K\bbA_K^t$ and $J(K) = \trace(P_K\bbW)$. 
    \end{enumerate}
\end{lemma}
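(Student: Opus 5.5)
We prove the three parts in order, since (b) and (c) both rest on the block description of $\bbA_K^t\bbB_w$ in (a).

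\textbf{Part (a).} We argue by induction on $t$, using the convention $\Phi_r=0$ for $r<0$. For $t=0$ we have $\bbB_w=[I;0;\cdots;0]=[\Phi_0;\Phi_{-1};\cdots;\Phi_{-H+1}]$. For the inductive step, suppose $\bbA_K^t\bbB_w=[\Phi_t;\cdots;\Phi_{t-H+1}]$ and multiply on the left by $\bbA_K=\bbA+\bbB K$.
\begin{itemize}
\item The first block row of $\bbA_K$ is $[A+BK_0,\ BK_1,\ \dots,\ BK_{H-1}]$. Applied to the stacked vector, it gives $(A+BK_0)\Phi_t+\sum_{h\ge1}BK_h\Phi_{t-h}$, which equals $\Phi_{t+1}$ by \eqref{eq app: closed-loop response}.
\item The remaining block rows form the shift operator, so they produce $\Phi_t,\dots,\Phi_{t-H+2}$.
\end{itemize}
This is exactly $[\Phi_{t+1};\cdots;\Phi_{t+1-H+1}]$. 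The computation is purely algebraic, so it holds for every $K$ and needs no stability.

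\textbf{Part (b).} By (a), the $(r,c)$ block of $C_K$ (with $r,c=0,\dots,H-1$) is $\Phi_{c-r}$. This block vanishes for $c<r$ and equals $\Phi_0=I$ for $c=r$, so $C_K$ is block unit upper triangular. Its determinant is therefore $1$, and $C_K$ is invertible for every $K$. Expanding $C_KC_K^\top$ gives $\sum_{t=0}^{H-1}\bbA_K^t\bbB_w\bbB_w^\top\bbA_K^{t\top}=\Sigma_H(K)$, because $\bbW=\bbB_w\bbB_w^\top$. Invertibility of $C_K$ then gives $\Sigma_H(K)\succ0$.

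\textbf{Part (c).} Assume $\rho(\bbA_K)<1$, so all series below converge absolutely.

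Covariance identity. Split the series $\Sigma_K=\sum_{t\ge0}\bbA_K^t\bbW\bbA_K^{t\top}$ by writing $t=sH+r$ with $0\le r<H$. This gives $\Sigma_K=\sum_{s\ge0}\bbA_{K,H}^s\Sigma_H(K)\bbA_{K,H}^{s\top}$, which is the unique solution of $\Sigma_K=\bbA_{K,H}\Sigma_K\bbA_{K,H}^\top+\Sigma_H(K)$ since $\rho(\bbA_{K,H})=\rho(\bbA_K)^H<1$. Because the first term is PSD, $\Sigma_K\succeq\Sigma_H(K)$.

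First cost equality. We need $\trace(\bar\bbQ\Sigma_K)=\trace(\bbQ\Sigma_K)$, i.e. that every diagonal block of $\Sigma_K$ is the same.
\begin{itemize}
\item By (a), the $(k,k)$ block of $\Sigma_K$ is $\sum_{t}\Phi_{t-k}\Phi_{t-k}^\top=\sum_{l\ge0}\Phi_l\Phi_l^\top$, after reindexing and dropping the zero terms with negative index.
\item This value does not depend on $k$, so $\trace(\bar\bbQ\Sigma_K)=H\cdot\frac1H\trace(Q\,\Sigma_{K,00})=\trace(\bbQ\Sigma_K)$.
\end{itemize}
Combined with \eqref{eq: FIR cost}, this proves $J(K)=\trace((\bar\bbQ+K^\top RK)\Sigma_K)$.

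Second cost equality. Write $X:=\bar\bbQ+K^\top RK$. Then $\bar P_K=\sum_s\bbA_{K,H}^{s\top}X\bbA_{K,H}^s$. Substituting the series for $\Sigma_K$ from the covariance identity, cyclicity of the trace gives
\[
\trace(X\Sigma_K)=\sum_s\trace\bigl(\bbA_{K,H}^{s\top}X\bbA_{K,H}^s\Sigma_H(K)\bigr)=\trace\bigl(\bar P_K\Sigma_H(K)\bigr).
\]

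Value-matrix identities. Grouping the series $P_K=\sum_t\bbA_K^{t\top}X\bbA_K^t$ by residues $t=sH+r$, as for $\Sigma_K$, gives $P_K=\sum_{r=0}^{H-1}\bbA_K^{r\top}\bar P_K\bbA_K^r$. Finally, $J(K)=\trace(P_K\bbW)$ follows from the standard duality $\trace(X\Sigma_K)=\trace(P_K\bbW)$, obtained by the same trace-swapping argument applied to the two series.

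The only non-routine step is the equality of the diagonal blocks of $\Sigma_K$, which is what justifies replacing $\bbQ$ by $\bar\bbQ$. Part (a) settles it directly.
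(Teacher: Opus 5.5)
Your proposal is correct and follows essentially the same route as the paper's proof. Part (a) is induction on $t$, part (b) uses the block unit upper triangular structure of $C_K$, and part (c) uses the equality of the diagonal blocks of $\Sigma_K$ to replace $\bbQ$ by $\bar\bbQ$, together with grouping the Lyapunov series into blocks of $H$ consecutive powers and trace cyclicity. If anything, you are slightly more explicit than the paper about reindexing each diagonal block to $\sum_{l\ge 0}\Phi_l\Phi_l^\top$ and about uniqueness of the $H$-step Lyapunov solution.
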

\begin{proof}
    (a) Prove by induction. For $t=0$, $\bbB_w = \brac{\Phi_0; \Phi_{-1}; \cdots; \Phi_{-H+1}}$ from \eqref{eq app: closed-loop response}. For $t\ge1$, the first block row of $\bbA_K$ is $\begin{bmatrix}
            A + BK_0 & BK_1 & \cdots & BK_{H-1}
    \end{bmatrix}$
    while the remaining rows are the block shift matrix. Hence the first block row equals $(A+BK_0)\Phi_t + \sum_{h=1}^{H-1}BK_h\Phi_{t-h} = \Phi_{t+1}$, and the lower blocks are the shifted copies of $\Phi_{t}, \cdots, \Phi_{t-H+2}$. 

    (b) By (a), the $(j,t)$ block of $C_K$, $j,t\in\curly{0, \cdots, H-1}$ is $\Phi_{t-j}$: it vanishes for $t<j$ and equals $I$ for $t=j$. Thus $C_K$ is block unit upper triangular
    \[
        C_K = \begin{bmatrix}
            I & \Phi_1 & \Phi_2 & \cdots & \Phi_{H-1} \\
            0 & I & \Phi_1 & \cdots & \Phi_{H-2} \\
            0 & 0 & I & \cdots & \Phi_{H-3} \\
            \vdots & \vdots & \vdots & \ddots & \vdots\\
            0 & 0 & 0 & \cdots & I
        \end{bmatrix},
    \]
    hence invertible. It follows that $\Sigma_H(K) = \sum_{t=0}^{H-1}\bbA_K^t\bbB_w\bbB_w^\top\bbA_K^{t\top} = C_KC_K^\top\succ 0$. 

    (c) By (a), the $j$th diagonal block of $\Sigma_K = \sum_{t\ge0}\bbA_K^t\bbB_w\bbB_w^\top\bbA_K^{t\top}$, $j\in\curly{0, \cdots, H-1}$ is $\Sigma_x\coloneqq\sum_{t\ge0}\Phi_{t-j}\Phi_{t-j}^\top$ in the steady state. Hence $\trace(\bbQ\Sigma_K) = \trace(Q\Sigma_x) = \sum_{j=0}^{H-1}\trace\paren{\frac{1}{H}Q\Sigma_x} = \trace\paren{\bar\bbQ\Sigma_K}$, which gives the first equality in \eqref{eq app: lifted cost}. Splitting $\Sigma_K$ into groups of $H$ consecutive powers yields 
    \[
        \Sigma_K = \sum_{t\ge0}\bbA_K^t\bbW\bbA_K^{t\top} = \sum_{l\ge0}\bbA_{K,H}^l\Sigma_H(K)\bbA_{K,H}^{l\top} = \bbA_{K,H}\Sigma_K\bbA_{K,H}^\top + \Sigma_H(K)\succeq\Sigma_H(K).
    \]
    Substituting this series into the lifted cost and swapping the convergent sums from the cyclic property of trace gives $J(K) = \trace\paren{\bar{P}_K\Sigma_H(K)}$. The same $H$-step splitting of $P_K$ gives $P_K = \sum_{t\ge0}\bbA_K^{t\top}(\bar\bbQ + K^\top RK)\bbA_K^{t} = \sum_{t=0}^{H-1}\bbA_K^{t\top}\bar{P}_K\bbA_K^{t}$, and $\trace\paren{(\bar\bbQ+K^\top RK)\Sigma_K} = \trace\paren{P_K\bbW}$ follows from the standard Lyapunov duality. 
\end{proof}

\subsection{Coercivity, smoothness, and descent}
Next, we characterize the landscape of a single FIR-LQR cost, such as coercivity and smoothness, extending the corresponding LQR results \cite{hu2023toward}. 
\begin{lemma}[The cost bounds on the gain and the state excitation]
    \label{lem app: cost bounds on the FIR gain}
    Let $\rho(\bbA_K)<1$, and assume $J(K)\le\bar{J}$ with $\bar{J}\ge1$. Then
    \begin{enumerate}[label=(\alph*)]
        \item $\norm{\Phi_l}\le\sqrt{\bar{J}}$ for all $l$. 
        \item $\norm{K}_F \le\kappa(\bar{J})\coloneqq (1+\sqrt{\bar{J}})^H-1$. 
        \item $\lmin(\Sigma_K)\ge\lmin(\Sigma_H(K))\ge c(\bar{J})\coloneqq \paren{\sum_{h=0}^{H-1}(H\bar{J})^{h/2}}^{-2}.$
    \end{enumerate}
    Note that $\kappa(\cdot)$ is increasing and $c(\cdot)$ is decreasing. 
\end{lemma}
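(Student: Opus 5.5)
We prove (a)--(c) in order, all from the impulse-response picture of \Cref{lem app: strucure of lifted problem}. Throughout we use $W=I$, $R=I$ and $Q\succeq I$.

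\textbf{Step (a).} From \eqref{eq: closed-loop state response}, the steady-state state covariance is $\Sigma_x=\sum_{l\ge0}\Phi_l\Phi_l^\top$. The state cost alone then gives
\[
J(K)\ge\trace(Q\Sigma_x)\ge\trace(\Sigma_x)\ge\norm{\Phi_l}_F^2\ge\norm{\Phi_l}^2 .
\]
Hence $\norm{\Phi_l}\le\sqrt{\bar J}$ for every $l$.

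\textbf{Step (b).} This follows the sketch around \eqref{eq: Phi bound}. The input is $u_t=\sum_h K_h x_{t-h}$, and we expand each $x_{t-h}$ in the noise. The coefficient of $w_{t-1-l}$ in $u_t$ is then
\[
G_l\coloneqq\sum_{h=0}^{\min(l,H-1)}K_h\Phi_{l-h}.
\]
Since $R=I$ and $W=I$,
\[
J(K)\ge\E[u_t^\top u_t]=\sum_{l\ge0}\norm{G_l}_F^2\ge\norm{G_l}_F^2 ,
\]
so $\norm{G_l}_F\le\sqrt{\bar J}$ for each $l$. Because $\Phi_0=I$, we can solve the triangular recursion for the gains:
\[
K_l=G_l-\sum_{h=0}^{l-1}K_h\Phi_{l-h}.
\]
Combining this with (a) gives
\[
\norm{K_l}_F\le\sqrt{\bar J}\Big(1+\sum_{h<l}\norm{K_h}_F\Big).
\]
Let $s_l\coloneqq\sum_{h\le l}\norm{K_h}_F$. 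The bound above becomes $1+s_l\le(1+\sqrt{\bar J})(1+s_{l-1})$, with $s_{-1}=0$. Induction gives $1+s_{H-1}\le(1+\sqrt{\bar J})^H$. Finally $\norm{K}_F\le s_{H-1}$, which yields $\kappa(\bar J)$.

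\textbf{Step (c).} The first inequality is immediate from $\Sigma_K\succeq\Sigma_H(K)$ in \Cref{lem app: strucure of lifted problem}(c). For the second, part (b) of that lemma gives $\Sigma_H(K)=C_KC_K^\top$, so
\[
\lmin(\Sigma_H(K))=\sigma_{\min}(C_K)^2=\norm{C_K^{-1}}^{-2}.
\]
The remaining task is to bound $\norm{C_K^{-1}}$.

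\textbf{Main obstacle.} The naive Neumann bound on $C_K^{-1}$ involves powers of $\norm{C_K-I}$, which can be large. Instead we exploit the block-Toeplitz structure.
\begin{itemize}
\item $C_K$ is block upper-triangular Toeplitz with symbol $\Phi(z)=\sum_{l<H}\Phi_lz^l$. Its inverse is therefore also block upper-triangular Toeplitz, with blocks $\Psi_l$ solving $\sum_{j\le l}\Phi_j\Psi_{l-j}=0$ for $l\ge1$ and $\Psi_0=I$.
\item From the recursion \eqref{eq app: closed-loop response}, $\Phi(z)$ is the truncation of $(I-z(A+BK(z)))^{-1}$. So the truncated inverse is $\Psi(z)=I-z\big(A+BK(z)\big)$ mod $z^H$, i.e. $\Psi_1=-(A+BK_0)$ and $\Psi_l=-BK_{l-1}$ for $l\ge2$.
\item This is a bounded expression, but bounding it this way introduces $\norm{A},\norm{B}$, which the target constant avoids.
\end{itemize}
To stay within $\bar J$ alone, the first attempt is the crude recursion $\Psi_l=-\sum_{j=1}^{l}\Phi_j\Psi_{l-j}$ together with (a). This gives $\norm{\Psi_l}\le\sqrt{\bar J}\sum_{j<l}\norm{\Psi_j}$, a geometric-type growth. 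Summing the blocks in Frobenius/row norms should give
\[
\norm{C_K^{-1}}\le\sum_{h=0}^{H-1}(H\bar J)^{h/2},
\]
where the factor $H$ absorbs the at most $H$ terms in each convolution sum. Squaring and inverting yields $c(\bar J)$. The delicate part is tuning this combinatorial growth bound so the constant comes out as exactly $(H\bar J)^{h/2}$ per level; if that fails, a slightly weaker but analogous explicit constant suffices for the downstream use.

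Monotonicity of $\kappa$ and $c$ in $\bar J$ is evident from their formulas.
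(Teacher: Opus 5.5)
\textbf{Parts (a) and (b).} These are correct and follow the paper's proof essentially line by line. The paper writes $\bar J\ge\trace(K\Sigma_KK^\top)\ge\|KC_K\|_F^2=\sum_{l<H}\|G_l\|_F^2$, which is your $\mathbb{E}[u_t^\top u_t]$ bound restricted to $l<H$. It then runs the same triangular induction on $\|K_l\|_F$.

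\textbf{The gap is in (c).} You never establish $\|C_K^{-1}\|\le\sum_{h=0}^{H-1}(H\bar J)^{h/2}$. You say the recursion ``should give'' it and attribute the factor $H$ to a vague counting heuristic. Your fallback of ``a slightly weaker constant'' would not prove the lemma, because the statement fixes $c(\bar J)$.

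Moreover, the route you discarded is exactly the paper's proof. Write $C_K=I+U$, where $U$ is the strictly block upper triangular Toeplitz part with blocks $\Phi_1,\dots,\Phi_{H-1}$. Then $U^H=0$, so $C_K^{-1}=\sum_{h=0}^{H-1}(-U)^h$ is a finite sum, and no smallness of $\|U\|$ is needed. The triangle inequality over block diagonals, Cauchy--Schwarz, and the trace bound from (a) give
$\|U\|\le\sum_{l=1}^{H-1}\|\Phi_l\|\le\sqrt{(H-1)\sum_{l\ge1}\|\Phi_l\|^2}\le\sqrt{H\bar J}$.
Hence $\|C_K^{-1}\|\le\sum_{h<H}(H\bar J)^{h/2}$. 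The ``large powers of $\|C_K-I\|$'' you worried about are precisely the terms $(H\bar J)^{h/2}$ of the target constant.

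\textbf{Your own recursion also closes the gap.} Set $t_l:=\sum_{j\le l}\|\Psi_j\|$, with $t_0=1$.
\begin{itemize}
\item From $\Psi_l=-\sum_{m=1}^{l}\Phi_m\Psi_{l-m}$ and (a), you get $\|\Psi_l\|\le\sqrt{\bar J}\,t_{l-1}$, so $t_l\le(1+\sqrt{\bar J})\,t_{l-1}$.
\item The triangle inequality over the $H$ block diagonals of $C_K^{-1}$ (each has norm $\|\Psi_l\|$) gives $\|C_K^{-1}\|\le t_{H-1}\le(1+\sqrt{\bar J})^{H-1}$.
\item For $\bar J\ge1$ this is at most $\sum_{h<H}(H\bar J)^{h/2}$. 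Check $H\le3$ directly. For $H\ge4$, use $(1+\sqrt{\bar J})^{H-1}\le(2\sqrt{\bar J})^{H-1}\le(H\bar J)^{(H-1)/2}$.
\end{itemize}
Either way, the missing piece is a one-line norm bound. As submitted, however, (c) is asserted rather than proved.
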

\begin{proof}
    (a) From \Cref{lem app: strucure of lifted problem} (c) and $Q\succeq I$, $\bar{J}\ge\trace(Q\Sigma_x)\ge \trace\Sigma_x = \sum_{l}\norm{\Phi_l}_F^2 \ge \norm{\Phi_l}^2$. 
    
    (b) By \Cref{lem app: strucure of lifted problem} (a), $K\bbA_K^l\bbB_w = \sum_{h=0}^{\min(l,H-1)}K_h\Phi_{l-h} \eqqcolon G_l$. From $R=I$ and $\Sigma_K\succeq\Sigma_H$ by \Cref{lem app: strucure of lifted problem} (c), $\bar{J}\ge\trace\paren{K\Sigma_KK^\top}\ge\norm{KC_K}_F^2 = \sum_{l=0}^{H-1}\norm{G_l}^2_F$, so $\norm{G_l}_F\le\sqrt{\bar{J}}$. Since $G_l = K_l + \sum_{h<l}K_h\Phi_{l-h}$, by the reverse triangle inequality $\norm{K_l}_F\le\sqrt{\bar{J}}(1+\sum_{h<l}\norm{K_h}_F)$. Now we  claim $\norm{K_l}_F \leq \sqrt{\bar{J}}\paren{1+\sqrt{\bar{J}}}^l$ by induction. It follows that
    \begin{align*}
        \norm{K_l}_F &\leq \sqrt{\bar{J}}(1+\sum_{h<l}\norm{K_h}_F) \leq \sqrt{\bar{J}} + \sqrt{\bar{J}}\sum_{k=0}^{l-1}\sqrt{\bar{J}}\paren{1+\sqrt{\bar{J}}}^k \\
        % &= \sqrt{J^i(K)} + J^i(K)\sum_{k=0}^{l-1}\paren{1+\sqrt{J^i(K)}}^k \\
        &= \sqrt{\bar{J}} + \bar{J}\frac{(1+\sqrt{\bar{J}})^l-1}{\sqrt{\bar{J}}}  = \sqrt{\bar{J}}\paren{1+\sqrt{\bar{J}}}^l,
    \end{align*}
    where the second inequality is from the induction assumption. Thus $\norm{K}_F = \left(\sum_{l<H}\norm{K_l}^2_F\right)^{1/2} 
    \leq \sum_{l<H}\norm{K_l}_F \leq \sqrt{\bar{J}}\sum_{l<H}\paren{1+\sqrt{\bar{J}}}^l = \paren{1+\sqrt{\bar{J}}}^{H}-1$ where the last equality follows from the finite geometric series. 

    (c) Write $C_K = I+U$. By \Cref{lem app: strucure of lifted problem} (b), U is strictly block upper triangular and block Toeplitz matrix with blocks $\Phi_1, \cdots, \Phi_{H-1}$. By the triangle and Cauchy-Schwarz inequality, $\norm{U}\le\sum_{l=1}^{H-1}\norm{\Phi_l}\le\sqrt{(H-1)\sum_{l\ge1}\norm{\Phi_l}^2}\le\sqrt{H\bar{J}}$. Since $U^H=0$, $C_K^{-1} = \sum_{h=0}^{H-1}(-U)^h$, so $\norm{C_K^{-1}}\le\sum_{h<H}(H\bar{J})^{h/2}$. By (b)-(c) of \Cref{lem app: strucure of lifted problem}, $\lmin(\Sigma_K)\ge\lmin(\Sigma_H)=\sigma_{min}(C_K)^2 \ge c(\bar{J}) \coloneqq (\sum_{h<H}(H\bar{J})^{h/2})^{-2}$. 
\end{proof}

\begin{lemma}[Coercivity, compact sublevel sets and smoothness% of sample average FIR-LQR Objective, 
, extended version of \Cref{lem: coercivity of sa FIR-LQR}]
    \label{lem app: coercivity of sa FIR-LQR}
    \addcontentsline{toc}{subsubsection}{Lemma \protect\ref{lem: coercivity of sa FIR-LQR appendix}: Coercivity and Smoothness}
    For $i\in\brac{M}$, $J^i(K)$ is coercive on $\calK^i$, i.e. for any sequence $\{K^l\}_{l=1}^\infty\subset\calK^i$ we have $J^i(K^l)\rightarrow\infty$ if either $\norm{K^l}_F\rightarrow\infty$, or $K^l\rightarrow K$ with $\rho(\bbA_K^i)=1$. Consequently, $\sa(K)$ is coercive on $\calK$, every sublevel set $\calK_\zeta$ is compact, and $\sa(K)$ is twice continuously differentiable and $L$-smooth on $\calK_{\zeta}$. Moreover, $\sa(K)$ attains its minimum on $\calK$ at an interior point $K^\star$ with $\nabla\sa(K^\star) = 0$. 
\end{lemma}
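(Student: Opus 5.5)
We split the statement into two coercivity mechanisms and then use compactness. \emph{Unbounded gains.} If $\norm{K^l}_F\to\infty$ while $J^i(K^l)$ stayed bounded along a subsequence, say $J^i(K^l)\le\bar J$ with $\bar J\ge1$ (we may assume $\bar J\ge \trace Q\ge1$), then \Cref{lem app: cost bounds on the FIR gain}(b) gives $\norm{K^l}_F\le\kappa(\bar J)=(1+\sqrt{\bar J})^H-1$, a contradiction. Hence $J^i(K^l)\to\infty$.

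\emph{Approach to the boundary of stability.} Suppose $K^l\to K$ with $\rho(\bbA^i_K)=1$ and $J^i(K^l)\le\bar J$ along a subsequence. By \Cref{lem app: strucure of lifted problem}(c) and $\bar\bbQ\succeq\frac1H I$,
\[
\frac1H\trace(\Sigma_{K^l}^i)\le\trace(\bar\bbQ\Sigma^i_{K^l})\le\bar J ,
\]
so $\Sigma^i_{K^l}=\sum_{m\ge0}(\bbA^i_{K^l,H})^m\Sigma^i_H(K^l)(\bbA^i_{K^l,H})^{m\top}$ is uniformly bounded. Moreover $\Sigma_H^i(K^l)\succeq c(\bar J)I$ by \Cref{lem app: cost bounds on the FIR gain}(c). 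Hence $\sum_m\norm{(\bbA^i_{K^l,H})^m}_F^2\le H\bar J/c(\bar J)$ uniformly in $l$. Each partial sum is continuous in $K$, so passing to the limit gives $\sum_m\norm{(\bbA^i_{K,H})^m}_F^2<\infty$. This forces $\rho(\bbA^i_{K,H})=\rho(\bbA^i_K)^H<1$, a contradiction.

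This step is the main obstacle: the original noise $\bbW$ is degenerate, so the standard argument of \cite{hu2023toward} does not apply directly. The $H$-step reformulation with the nondegenerate $\Sigma_H$ is exactly what repairs it. Coercivity of $\sa$ on $\calK$ follows because $M\sa\ge J^i$, and any boundary point of $\calK$ lies on the boundary of some $\calK^i$.

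\emph{Compactness, smoothness and the minimizer.} Each sublevel set $\calK_\zeta$ is bounded by part (b). It is also closed: a limit point of $\calK_\zeta$ outside $\calK$ would have $\rho(\bbA^i_K)=1$ for some $i$, which is excluded by the boundary argument above, and $\sa$ is continuous on $\calK$. Each $J^i$ is real-analytic on $\calK^i$, since $\Sigma_K^i$ solves a linear Lyapunov equation whose coefficients are polynomial in $K$ and whose operator is invertible when $\rho<1$. Hence $\sa\in C^2$ on the open set $\calK$. The Hessian is then bounded on the compact set $\calK_\zeta$ by some $L(\zeta)$, and $L$-smoothness follows by the mean value theorem. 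Finally, $\sa$ attains its minimum on a nonempty compact sublevel set. Since $\calK$ is open, this minimizer $K^\star$ is an interior point and satisfies $\nabla\sa(K^\star)=0$.
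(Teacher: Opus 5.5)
Your proof is correct and takes essentially the same route as the paper. The gain bound $\kappa(\bar J)$ handles $\|K^l\|_F\to\infty$, and the $H$-step unrolling of $\Sigma_K$, together with $\Sigma_H(K)\succeq c(\bar J)I$ and $\bar{\mathbb{Q}}\succeq \frac{1}{H}I$, handles the stability boundary. The remaining differences are cosmetic: you argue by contradiction along a bounded-cost subsequence where the paper uses the increasing function $\phi(J)=J/c(J)$, and you prove closedness of $\calK_\zeta$ and smoothness of $J^i$ directly rather than citing \cite{hu2023toward}.
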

% \tas{this is not a complete proof as it requires bounding the least singular value of Phi (which in turn depends on the norm of Phi since the diagonal is identity), which is not discussed. This is one of the technical challenges in this paper since we have to bound the norms of $\Phi$ and $K$ simultaneously. We should either remove the proof environment and discuss informally or do a proper full proof}
\begin{proof}
From \Cref{lem app: cost bounds on the FIR gain} (b), the cost goes to infinity as $\norm{K^l}_F\rightarrow\infty$. For the stability boundary, define $\phi(J) \coloneqq J/c(J)$ which is an increasing function of $J$ and continuous on $[1, \infty)$. From \Cref{lem app: strucure of lifted problem} (c), $\Sigma_{K^l}^i \succeq \sum_{m=0}^N (\bbA_{K^l,H}^i)^m\Sigma_H(\bbA_{K^l,H}^i)^{m\top}$ for any $N\in\mathbb{N}$. 
Thus from \Cref{lem app: cost bounds on the FIR gain} (c) and $\bar\bbQ + K^\top RK\succeq\frac{1}{H}I$, 
\[
    J^i(K^l)\ge\frac{1}{H}\trace\paren{\Sigma_{K^l}^i}\ge\frac{c(J^i(K^l))}{H}\sum_{m=0}^N\norm{(\bbA_{K^l,H}^{i})^m}^{2}_F \iff \phi(J^i(K^l)) \ge \frac{1}{H}\sum_{m=0}^N\norm{(\bbA_{K^l,H}^{i})^m}^{2}_F
\]
Suppose $K^l\rightarrow K$ with $\rho(\bbA_K^i)=1$, i.e. $K^l$ approaches the stability boundary. Each term in the sum converges to $\norm{(\bbA_{K,H}^{i})^m}_F \ge \rho(\bbA_{K,H}^{i})^m = \rho(\bbA_K^i)^{Hm} = 1$. Hence, $\frac{1}{H}\sum_{m=0}^N\|(\bbA_{K^l,H}^{i})^m\|^{2}\rightarrow (N+1)/H$. Since $N$ is arbitrary, $\phi(J^i(K^l)) \rightarrow \infty$. From monotonicity of $\phi$, $J^i(K^l)\rightarrow\infty$. Therefore, $J^i(K)$ is coercive on $\calK^i$.
Then, since $M\sa(K) \geq J^i(K)$, $\sa(K)$ is also coercive on $\calK=\cap_i\calK^i$ since leaving $\calK$ means leaving some $\calK^i$. Moreover, from \cite[Lemma~1]{hu2023toward}, $J^i(K)$ is twice continuously differentiable $C^2$ and thus $\sa(K)$ is also $C^2$. Therefore, from \cite[Thm.~1]{hu2023toward}, $\calK_{\zeta}$ is compact, and $\norm{\nabla_K^2 \sa(K)}$ is bounded on $\calK_{\zeta}$. 
Let this uniform upper bound as $L$, and hence $\sa(K)$ is $L$-smooth and satisfy the following inequality by the mean value theorem $\sa(K') - \sa(K) \leq \trace\paren{(K'-K)^\top \nabla_K \sa(K)} + \frac{L}{2}\norm{K'-K}_F^2$
for any $K, K'\in \calK_{\zeta}$. $L$ is determined as
\[
    L(\zeta) \coloneqq \max_{K\in\calK_\zeta}\norm{\nabla^2 \sa(K)}\le \max_{i\in\brac{M}}\max_{K\in\calK_\zeta}\norm{\nabla^2 J^i(K)} < \infty
\]
which depends on $\zeta$ and the problem parameters only. 
Also, since $\calK_{\zeta}$ is compact and $\calK$ is open, the minimum $K^\star\in\text{int}(\calK)$ with $\nabla\sa(K^\star) = 0$. 

% From \Cref{lem: K bound}, $J^i(K)$ is lower bounded by $\norm{K}^2$. Also, for any $K\in\calK^i$, $J^i(K)$ goes to infinity when $K$ approaches the boundary of stability region \tas{proof?}\Tesshu{detectability of $\bar{\bbQ}$ can justify}. Therefore, it holds that $J^i(K)\rightarrow \infty$ if $\norm{K}\rightarrow\infty$ or $K$ converges to an element of $\partial \calK^i$, and thus $J^i(K)$ is coercive. 
% Then, since $M\sa(K) \geq J^i(K)$, $\sa(K)$ is also coercive on the simultaneously stable region $\calK$ defined in Assumption \ref{asmp-app: simultaneous stabilization}. 
% Moreover, from \cite[Lemma~1]{hu2023toward}, $J^i(K)$ is twice continuously differentiable $C^2$ and thus $\sa(K)$ is also $C^2$. Therefore, from \cite[Thm.~1]{hu2023toward}, $\calK_{\zeta}$ is compact and thus $\norm{\nabla_K^2 \sa(K)}$ is bounded on $\calK_{\zeta}$. Let this uniform upper bound as $L$, and hence $\sa(K)$ is $L$-smooth and satisfy the following inequality by the mean value theorem
% \[
%     \sa(K') - \sa(K) \leq \trace\paren{(K'-K)^\top \nabla_K \sa(K)} + \frac{L}{2}\norm{K'-K}_F^2
% \]
% for any $K, K'\in \calK_{\zeta}$. 
\end{proof}

% \tas{we need to argue about the existence of an $L$ that is independent of the current iteration $K$}

Following \Cref{lem app: coercivity of sa FIR-LQR}, gradient descent converges to a fixed point. This is a direct consequence of \cite[Thm.~1]{hu2023toward}, which applies to $\sa(K)$ by coercivity and $L$-smoothness. 
% \tas{Is the statement of Theorem 1 in Hu et al for general J(K) with stated properties? Or is it for the specific LQR J(K)?}\Tesshu{mention that the proof follows the same structure}
\begin{lemma}[Convergence to a Fixed Point]
    \label{lem app: convergence to a fixed point}
    \addcontentsline{toc}{subsubsection}{Lemma \protect\ref{lem: convergence to a fixed point appendix}: Convergence to a fixed point}
    Let $\zeta_0 = J_{SA}(K^{(0)})$ and $L$ be the $L$-smoothness constant of $J_{SA}(K)$ on $\mathcal{K}_{\zeta_0}$.
    Consider the policy gradient method \eqref{eq: sa pg}.  Then, for any $0 < \alpha < \frac{2}{L}$, we have $K^{(n)}\in\mathcal{K}_{\zeta_0}$ and $J_{SA}(K^{(n+1)}) \leq J_{SA}(K^{(n)})$ for all n.
    Furthermore, we have the convergence of the gradient $\nabla_K J_{SA}(K) \rightarrow 0$ with the rate
    \begin{align}
        \min_{0\leq l\leq k}\norm{\nabla_K J_{SA}(K^{(l)})}^2_F \leq \frac{\zeta_0}{C(k+1)}, \label{eq app: grad convergence rate}
    \end{align}
    with $C = \alpha - \frac{L\alpha^2}{2} > 0$.
\end{lemma}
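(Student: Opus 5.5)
We prove this by the standard descent-lemma argument for an $L$-smooth function on a sublevel set. The one non-routine point is that every iterate must stay inside $\calK_{\zeta_0}$, where the smoothness bound is valid. We argue by induction on $n$. Suppose $K^{(n)}\in\calK_{\zeta_0}$ and set $G\coloneqq\nabla\sa(K^{(n)})$. If $G=0$ the iterate is stationary and there is nothing to prove, so assume $G\neq0$. Consider the segment $K(s)\coloneqq K^{(n)}-sG$ for $s\in[0,\alpha]$, and let
\[
\bar s\coloneqq\sup\{s\in[0,\alpha]:K(s')\in\calK_{\zeta_0}\ \forall s'\in[0,s]\}.
\]
Since $\calK$ is open and $\sa$ is continuous, and since \Cref{lem app: coercivity of sa FIR-LQR} gives that $\calK_{\zeta_0}$ is compact and $\sa$ is $C^2$, the compactness makes $\calK_{\zeta_0}$ closed. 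Hence the supremum is attained, and $K(\bar s)\in\calK_{\zeta_0}$.

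On the segment $[0,\bar s]$ the Hessian bound $\norm{\nabla^2\sa}\le L$ holds. Integrating Taylor's formula along the segment then gives
\[
\sa(K(s))\le\sa(K^{(n)})-\Big(s-\tfrac{L s^2}{2}\Big)\norm{G}_F^2 ,\qquad s\in[0,\bar s],
\]
and the bracket is positive for $0<s<2/L$. Suppose $\bar s<\alpha$. Then $\sa(K(\bar s))<\sa(K^{(n)})\le\zeta_0$, a strict inequality. By openness of $\calK$ and continuity of $\sa$, a neighbourhood of $K(\bar s)$ still lies in $\calK$ with cost at most $\zeta_0$, so the segment could be extended past $\bar s$. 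This contradicts the maximality of $\bar s$, so $\bar s=\alpha$.

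Evaluating the descent inequality at $s=\alpha$ gives $K^{(n+1)}\in\calK_{\zeta_0}$ together with
\[
\sa(K^{(n+1)})\le\sa(K^{(n)})-C\norm{\nabla\sa(K^{(n)})}_F^2,\qquad C=\alpha-\tfrac{L\alpha^2}{2}>0.
\]
This closes the induction and proves monotonicity. Summing the inequality over $l=0,\dots,k$ telescopes to
\[
C\sum_{l=0}^k\norm{\nabla\sa(K^{(l)})}_F^2\le\sa(K^{(0)})-\sa(K^{(k+1)})\le\zeta_0,
\]
where the last step uses $\sa\ge0$. Bounding the minimum by the average gives \eqref{eq app: grad convergence rate}. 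Since the summed squared gradient norms are finite, their terms tend to zero, so $\nabla\sa(K^{(n)})\to0$.

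The main obstacle is the continuation argument of the first two paragraphs. The constant $L$ is only valid on $\calK_{\zeta_0}$, so we cannot apply the descent lemma to the whole step before knowing the step stays inside; the strict decrease along the segment is what rules out leaving through the stability boundary or through the level set. This is exactly the structure of \cite[Thm.~1]{hu2023toward}, which we cite for the details.
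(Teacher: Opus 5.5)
Your route is the same as the paper's. The paper gives no argument of its own and simply cites \cite[Thm.~1]{hu2023toward}; you have written out the continuation-plus-descent-lemma proof behind that citation. Your telescoping, monotonicity and rate steps are correct.

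There is one small hole in the continuation step. The strict inequality $\sa(K(\bar s))<\sa(K^{(n)})$ requires $\bar s>0$, because the factor $\bar s-L\bar s^2/2$ vanishes at $\bar s=0$. You never rule out $\bar s=0$.

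This case is not vacuous. At $n=0$ we have $\sa(K^{(0)})=\zeta_0$ exactly, so $K^{(0)}$ sits on the level set. When $G\neq 0$, no neighbourhood of $K^{(0)}$ stays inside $\calK_{\zeta_0}$, so your extension argument as written cannot start from $\bar s=0$.

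The fix is one line. The derivative of $s\mapsto\sa(K(s))$ at $s=0$ equals $-\norm{G}_F^2<0$, and $\calK$ is open. Hence $K(s)\in\calK_{\zeta_0}$ for all sufficiently small $s>0$, so $\bar s>0$, and the rest of your argument goes through unchanged.
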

% \tas{notation should be consistent with (5) where you use $K^{(i)}$. Also $K_0$ clashes with the notation for $K=[K_0\dots]$, do you mean $K^{(0)}$ here?}
% Note that from \cite[Theorem 7]{fazel2018global} \tas{I couldn't find anything in the statement of Theorem 7 of fazel about the smoothness constant, did you extract it from the proof? Also here you have average of costs so it is not immediate why you would get that} the smoothness constant $L$ can be explicitly written as a function of the problem parameters over the sublevel set of $\sa(K^{(0)})$, i.e. 
% \begin{align*}
%     L \leq f_L(&\sup_{i\in\{1, \cdots, M\}}J^i(K^{(0)}),\sup_{i\in\{1, \cdots, M\}}\norm{\nabla_KJ^i(K_0)}, H, \sup_{i\in\{1, \cdots, M\}}\norm{A^i},\sup_{i\in\{1, \cdots, M\}}\norm{B^i})
% \end{align*}
% \tas{would it be easier to say $L_i$ is the smoothness of system $i$ bounded by... Can you then say that $L\le \max L_i$?}
% \tas{In any case it should be clarified what $f_L$ means}

\subsection{Gradient domination, quadratic growth}
Gradient domination is the key to global convergence in the non-convex optimization. We show that the FIR-LQR cost is gradient dominated, extending the corresponding LQR result \cite{fazel2018global}. 
\begin{lemma}[Cost difference]
    Let $K, K'$ be stabilizing gains for the lifted system, and let $\Delta \coloneq K' - K$. Then 
    \begin{align}
        \label{eq app: cost difference}
        J(K') - J(K) = \trace\paren{\Sigma_{K'}\brac{2\Delta^\top E_K + \Delta^\top (R+\bbB^\top P_K\bbB)\Delta}}
    \end{align}
\end{lemma}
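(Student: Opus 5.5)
This is the standard cost-difference lemma of \cite{fazel2018global}, applied to the lifted system \eqref{eq app: lifted system} with static feedback $u_t=K\xi_t$. Degeneracy of $\bbW$ or $\bar\bbQ$ plays no role, since only Lyapunov identities are used. We use $P_K=\dlyap(\bbA_K,\bar\bbQ+K^\top RK)$, which gives the same cost as with $\bbQ$ by \Cref{lem app: strucure of lifted problem}(c). We also use $J(K')=\trace\paren{(\bar\bbQ+K'^\top RK')\Sigma_{K'}}$ together with $\Sigma_{K'}=\sum_{t\ge0}\bbA_{K'}^t\bbW\bbA_{K'}^{t\top}$.

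\textbf{Step 1 (telescoping).} Since $\rho(\bbA_{K'})<1$, the sum $\sum_t \bbA_{K'}^{t\top}(\cdot)\bbA_{K'}^t$ converges, and so does $\sum_t\trace\paren{\bbA_{K'}^t\bbW\bbA_{K'}^{t\top}(\bbA_{K'}^\top P_K\bbA_{K'}-P_K)}$, which telescopes to $-\trace(P_K\bbW)$. Hence
\[
J(K')-J(K)=\trace\paren{\Sigma_{K'}\brac{\bar\bbQ+K'^\top RK'+\bbA_{K'}^\top P_K\bbA_{K'}-P_K}}.
\]
This uses $J(K)=\trace(P_K\bbW)$ from \Cref{lem app: strucure of lifted problem}(c), together with the identity $\trace(\Sigma_{K'}(\bbA_{K'}^\top P_K\bbA_{K'}-P_K))=-\trace(P_K\bbW)$, which follows from $\Sigma_{K'}=\bbA_{K'}\Sigma_{K'}\bbA_{K'}^\top+\bbW$.

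\textbf{Step 2 (expand).} Write $\bbA_{K'}=\bbA_K+\bbB\Delta$ and $K'=K+\Delta$. Substitute $P_K=\bbA_K^\top P_K\bbA_K+\bar\bbQ+K^\top RK$. The terms of zeroth order in $\Delta$ cancel. The first-order terms are $\Delta^\top RK+K^\top R\Delta+\Delta^\top\bbB^\top P_K\bbA_K+\bbA_K^\top P_K\bbB\Delta$. Using $\bbA_K=\bbA+\bbB K$, one gets $RK+\bbB^\top P_K\bbA_K=(R+\bbB^\top P_K\bbB)K+\bbB^\top P_K\bbA=E_K$, so the first-order terms equal $\Delta^\top E_K+E_K^\top\Delta$. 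The second-order term is $\Delta^\top(R+\bbB^\top P_K\bbB)\Delta$.

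\textbf{Step 3 (symmetrize).} $\Sigma_{K'}$ is symmetric, so $\trace(\Sigma_{K'}E_K^\top\Delta)=\trace(\Sigma_{K'}\Delta^\top E_K)$. This yields \eqref{eq app: cost difference}.

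The only point needing care is Step 1. There we must justify convergence and the exchange of the infinite sums, which follows from $\rho(\bbA_{K'})<1$. We must also confirm that the cost matrix inside $P_K$ is consistent: using $\bar\bbQ$ versus $\bbQ$ changes $P_K$ but not $J$, and the same $\bar\bbQ$ must appear in both $J(K')$ and $P_K$. The remaining algebra is routine.
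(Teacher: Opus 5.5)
Your proposal is correct and takes essentially the same route as the paper. The paper also rewrites $J(K)=\trace(P_K\bbW)$ as $\trace\paren{\Sigma_{K'}\brac{P_K-\bbA_{K'}^\top P_K\bbA_{K'}}}$ via the Lyapunov equation for $\Sigma_{K'}$, subtracts this from $J(K')$, expands with $\bbA_{K'}=\bbA_K+\bbB\Delta$ and $K'=K+\Delta$, cancels the zeroth-order terms via the Lyapunov equation for $P_K$, identifies $RK+\bbB^\top P_K\bbA_K=E_K$, and symmetrizes using $\Sigma_{K'}=\Sigma_{K'}^\top$. Your telescoping argument in Step 1 is only an unrolled version of that same one-line identity.
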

\begin{proof}
    From $J(K) = \trace(P_K\bbW)$, $\Sigma_{K'} = \bbA_{K'}\Sigma_{K'}\bbA_{K'}^\top + \bbW$ and the cyclic property of trace, $J(K) = \trace(P_K[\Sigma_{K'} - \bbA_{K'}\Sigma_{K'}\bbA_{K'}^\top]) = \trace(\Sigma_{K'}[P_K - \bbA_{K'}^\top P_K\bbA_{K'}])$. Subtracting this from $J(K')=\trace(\Sigma_{K'}(\bar\bbQ + K'^\top RK'))$, 
    \[
        J(K') - J(K) = \trace(\Sigma_{K'}M), \quad M\coloneqq \bar\bbQ + K'^\top RK' - P_K + \bbA_{K'}^\top P_K\bbA_{K'}
    \]
    Expanding $\bbA_K' = \bbA_K + B\Delta$, $K' = K + \Delta$, and canceling $-P_K + \bbA_K^\top P_K\bbA_K = -(\bar\bbQ + K^\top RK)$,
    \[
        M = \Delta^\top(RK + \bbB^\top P_K\bbA_K) + (RK + \bbB^\top P_K\bbA_K)^\top \Delta + \Delta^\top (R + \bbB^\top P_K\bbB) \Delta
    \]
    Observing $RK + \bbB^\top P_K\bbA_K = E_K$, and $\trace(\Sigma_{K'}\Delta^\top E_K) =  \trace(\Sigma_{K'}E_K^\top\Delta)$ from symmetry of $\Sigma_{K'}$ yield \eqref{eq app: cost difference}. 
\end{proof}

\begin{lemma}[Gradient domination and quadratic growth of a single cost]
    \label{lem: gradient dominance of lqr}
    \addcontentsline{toc}{subsubsection}{Lemma \protect\ref{lem: gradient dominance of lqr}: Gradient dominance of FIR-LQR}
    Let $K$ be a stabilizing gain, and $K^\star$ be the minimizer of $J(K)$. Write $c_{K,H} \coloneqq c(J(K))$. Then
    \begin{enumerate}[label=(\alph*)]
        \item $J(K) - J(K^\star) \le \frac{HJ(K^\star)}{4c_{K,H}^2}\norm{\nabla J(K)}_F^2$
        \item $J(K) - J(K^\star) \ge c_{K,H}\norm{K-K^\star}_F^2$.
    \end{enumerate}
    % Let $K$ and $K^\star$ be an arbitrary stabilizing and optimal FIR controller, and $J(K)$ denote a single LQR cost with a controller $K$. 
    % % \tas{define what J is here} Then it holds that
    % \[
    %     J(K) - J(K^\star) \leq \frac{HJ(K^\star)}{c_{K,H}^2}\norm{\nabla J(K)}_F^2
    % \]
\end{lemma}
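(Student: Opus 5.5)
Both parts will follow from the cost difference identity \eqref{eq app: cost difference}, used the same way as in \cite{fazel2018global}. The lifted degeneracy is absorbed by \Cref{lem app: cost bounds on the FIR gain}(c), which gives $\lmin(\Sigma_K)\ge c_{K,H}$.

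For (a), apply \eqref{eq app: cost difference} with $K'=K^\star$ and $\Delta=K^\star-K$. Completing the square in $\Delta$ with $M_K\coloneqq R+\bbB^\top P_K\bbB\succeq I$ gives
\[
2\Delta^\top E_K+\Delta^\top M_K\Delta\succeq -E_K^\top M_K^{-1}E_K\succeq -E_K^\top E_K .
\]
Hence
\[
J(K)-J(K^\star)\le \trace(\Sigma_{K^\star}E_K^\top E_K)\le \norm{\Sigma_{K^\star}}\,\norm{E_K}_F^2 .
\]
Two further bounds are needed.
\begin{itemize}
\item From $\nabla J(K)=2E_K\Sigma_K$ we get $\norm{E_K}_F\le \norm{\nabla J(K)}_F/(2\lmin(\Sigma_K))\le \norm{\nabla J(K)}_F/(2c_{K,H})$.
\item For $\norm{\Sigma_{K^\star}}$, use the first reformulation in \Cref{lem app: strucure of lifted problem}(c) together with $\bar\bbQ+K^\top RK\succeq \frac1H I$. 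This gives $J(K^\star)\ge \frac1H\trace\Sigma_{K^\star}$, so $\norm{\Sigma_{K^\star}}\le HJ(K^\star)$.
\end{itemize}
Combining these yields $J(K)-J(K^\star)\le \frac{HJ(K^\star)}{4c_{K,H}^2}\norm{\nabla J(K)}_F^2$, which is (a).

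For (b), apply \eqref{eq app: cost difference} in the reverse direction, with base point $K^\star$ and $K'=K$, so $\Delta=K-K^\star$. Since $K^\star$ is an interior minimizer, $\nabla J(K^\star)=2E_{K^\star}\Sigma_{K^\star}=0$. Because $\Sigma_{K^\star}\succ0$ (again by \Cref{lem app: cost bounds on the FIR gain}(c)), this forces $E_{K^\star}=0$. The linear term therefore vanishes, and
\[
J(K)-J(K^\star)=\trace\big(\Sigma_K\Delta^\top(R+\bbB^\top P_{K^\star}\bbB)\Delta\big)\ge \lmin(\Sigma_K)\norm{\Delta}_F^2\ge c_{K,H}\norm{K-K^\star}_F^2 .
\]
Here we use $R=I$ and $P_{K^\star}\succeq0$.

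The main obstacle is the degeneracy of $\bbW$, which makes $\lmin(\Sigma_K)$ non-obvious; \Cref{lem app: cost bounds on the FIR gain}(c), applied with $\bar J=J(K)$, handles it. One caveat is that this requires $J(K)\ge1$, which holds because $J(K)\ge\trace Q\ge1$. Note also that $P_K$ must be the value matrix built with $\bar\bbQ$; this is consistent because the cost is unchanged by \Cref{lem app: strucure of lifted problem}(c). Existence of the minimizer $K^\star$ of the single cost follows from coercivity, \Cref{lem app: coercivity of sa FIR-LQR} with $M=1$.
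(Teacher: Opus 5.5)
Your proposal is correct and follows the paper's proof essentially step for step. For (a) you complete the square in the cost-difference identity with $K'=K^\star$ and bound $\|\Sigma_{K^\star}\|\le HJ(K^\star)$ and $\|E_K\|_F\le\|\nabla J(K)\|_F/(2c_{K,H})$; for (b) you use $E_{K^\star}=0$ together with $\lambda_{\min}(\Sigma_K)\ge c_{K,H}$, and your explicit justifications of $E_{K^\star}=0$ (stationarity plus $\Sigma_{K^\star}\succ 0$) and of $J(K)\ge 1$ fill in steps the paper leaves implicit. The only cosmetic fix is that the Loewner inequality should be stated for the symmetrized term $\Delta^\top E_K+E_K^\top\Delta$, which is harmless inside the trace against the symmetric $\Sigma_{K^\star}$.
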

\begin{proof}
    (a) Completing the square in \eqref{eq app: cost difference} with $K' = K^\star$ (cf. \cite[Lemma~11]{fazel2018global}) gives
    \begin{align*}
        J(K) - J(K^\star) 
        &\leq \trace\paren{\Sigma_{K^\star}E_K^{\top}(R+\bbB^{\top} P_K\bbB)^{-1}E_K} 
        \leq \frac{\norm{\Sigma_{K^\star}}}{\sigma_{min}(R)}
        \trace\paren{E_K^\top E_K} 
        \le \norm{\Sigma_{K^\star}}\norm{E_K}_F^2
    \end{align*}
    using $R=I$. By \Cref{lem app: cost bound} (a) below, $\|\Sigma_{K^\star}\|\le HJ(K^\star)$, and $\nabla J(K) = 2E_K\Sigma_K$ with \Cref{lem app: cost bounds on the FIR gain} (c) gives $\|E_K^i\|_F\le\|\nabla J^i(K)\|_F/2c_{K,H}$. 

    (b) Let $\Delta_\star \coloneq K - K^\star$. 
    Since $E_{K^\star}=0$, \eqref{eq app: cost difference} with base point $K^\star$ reads 
    \[
        J(K) - J(K^\star) = \trace(\Sigma_K\Delta_\star^\top(R+\bbB^{\top}P_K^\star\bbB)\Delta_\star) = \VEC\Delta_\star^\top\paren{\Sigma_K\kron(R+\bbB^{\top}P_K^\star\bbB}\VEC\Delta_\star \ge c_{K,H}\|\Delta_\star\|_F^2
    \]
    from $R=I$. 
\end{proof}

\subsection{Optimal FIR policy}
\label{s app: optimal fir policy}
Memory carries no benefit for a single system; this yields the reference controllers used to compare systems in the next section. Let $\Psta$ denote the solution to the static discrete algebraic Riccati equation (or \dare) of $(A,B,Q,R)$ and $\Ksta = -(R+B^\top \Psta B)^{-1}B^\top \Psta A$ the static LQR gain. 
\begin{lemma}[Lifted $\dare$]
     \label{lem app: lifted dare}
    \addcontentsline{toc}{subsubsection}{Lemma \protect\ref{lem: lifted dare}: Lifted DARE}
    Let $P^\star$ be the solution to $\dare$ of the lifted instance $(\bbA, \bbB, \bar\bbQ, R)$. Then it holds that
    \begin{align}
        P^\star = \diag\paren{\Psta, ~ p_1Q, \cdots, ~ p_{H-1}Q}, \quad p_j = \frac{H-j}{H}
        \label{eq: P-bar lifted dare}
    \end{align}
    with optimal gain $\Ksingle^\star = \begin{bmatrix}
            \Ksta^\star & 0 & \cdots & 0
    \end{bmatrix}$, and the FIR optimal cost equals the optimal static cost: $J(\Ksingle^\star) = \trace\paren{P^\star\bbW} = \trace(\Psta)$. 
\end{lemma}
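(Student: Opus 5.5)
The plan is to verify directly that the proposed block-diagonal matrix satisfies the lifted Riccati equation, and then to argue that the stabilizing solution is unique. Write the lifted $\dare$ as
\[
P = \bbA^\top P\bbA + \bar\bbQ - \bbA^\top P\bbB(R+\bbB^\top P\bbB)^{-1}\bbB^\top P\bbA .
\]
Take the candidate $P=\diag(\Psta, p_1Q,\dots,p_{H-1}Q)$. Since $\bbB$ has only a first block, $\bbB^\top P\bbB = B^\top\Psta B$. Since the first block row of $\bbA$ is $[A~0~\cdots~0]$, we get $\bbB^\top P\bbA = [B^\top\Psta A~0~\cdots~0]$. Hence the Riccati correction term is $\diag\paren{A^\top\Psta B(R+B^\top\Psta B)^{-1}B^\top\Psta A, 0,\dots,0}$, and the induced gain is $-(R+\bbB^\top P\bbB)^{-1}\bbB^\top P\bbA = [\Ksta~0~\cdots~0]$.

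Next I compute $\bbA^\top P\bbA$ using the shift structure. The lifted map sends $\xi=(\xi_0,\dots,\xi_{H-1})$ to $(A\xi_0,\xi_0,\dots,\xi_{H-2})$, so
\[
\xi^\top\bbA^\top P\bbA\xi = \xi_0^\top(A^\top\Psta A + p_1Q)\xi_0 + \sum_{j=1}^{H-2}p_{j+1}\,\xi_j^\top Q\xi_j .
\]
This matrix is block diagonal with blocks $A^\top\Psta A+p_1Q$, $p_2Q,\dots,p_{H-1}Q$, and $0$. Adding $\bar\bbQ=\diag(Q/H,\dots,Q/H)$ and subtracting the correction gives the following blocks.
\begin{itemize}
\item Block $0$: $A^\top\Psta A + (p_1+\tfrac1H)Q - A^\top\Psta B(\cdot)^{-1}B^\top\Psta A$. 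This equals $\Psta$ because $p_1+1/H=1$ and $\Psta$ solves the static $\dare$ with cost $Q$.
\item Block $j$ for $1\le j\le H-2$: $p_{j+1}Q+Q/H = p_jQ$.
\item Block $H-1$: $Q/H = p_{H-1}Q$.
\end{itemize}
So the candidate satisfies the lifted $\dare$. The telescoping $p_j=p_{j+1}+1/H$ with $p_H=0$ is exactly what fixes $p_j=(H-j)/H$.

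To conclude that this is \emph{the} solution and that $\Ksingle^\star$ is optimal over FIR gains, I need stabilizability and detectability of the lifted pair, so that the $\dare$ has a unique stabilizing PSD solution. Stabilizability holds because $\bbA_{\Ksingle^\star}$ is block lower triangular with diagonal blocks $A+B\Ksta$ and zeros, so its spectral radius equals $\rho(A+B\Ksta)<1$. Detectability follows from $\bar\bbQ\succ0$. Uniqueness and stabilizability of this solution then identify $P^\star$. Standard LQR theory for the lifted system, which is Fazel et al.'s setting with $\bar\bbQ\succ0$, shows that static feedback $K\xi_t$ attains the minimal cost $\trace(P^\star\bbW)$. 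Since $\bbW=\diag(I,0,\dots)$, this cost equals $\trace(\Psta)$.

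The step most likely to need care is this uniqueness/optimality argument. The naive lifted noise $\bbW$ is singular, so the optimality of the $\dare$ gain in average cost must come from the value-function argument with a positive definite $\bar\bbQ$. One clean route is the cost-difference lemma \eqref{eq app: cost difference} at base point $\Ksingle^\star$. With $P_{\Ksingle^\star}=P^\star$ we get $E_{\Ksingle^\star}=0$, so $J(K)-J(\Ksingle^\star)=\trace(\Sigma_K\Delta^\top(R+\bbB^\top P^\star\bbB)\Delta)\ge0$ for every stabilizing $K$. This gives optimality directly, provided I check that $P_{\Ksingle^\star}$, the Lyapunov solution with $\bar\bbQ+K^\top RK$, coincides with $P^\star$. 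That holds because at the $\dare$ gain the $\dare$ reduces to the closed-loop Lyapunov equation.
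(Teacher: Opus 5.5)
Your proposal is correct and takes the same route as the paper: you verify directly that the block-diagonal candidate solves the lifted Riccati equation, reading off the gain from $\bbB^\top P\bbB = B^\top \Psta B$ and $\bbB^\top P\bbA = [B^\top\Psta A~0~\cdots~0]$, and decoupling the diagonal blocks via $p_j = p_{j+1} + 1/H$. Your extra steps make explicit what the paper leaves implicit: uniqueness of the PSD solution from stabilizability (block-triangular closed loop) and detectability ($\bar\bbQ\succ 0$), and optimality over stabilizing FIR gains via the cost-difference identity with $E_K=0$ at the Riccati gain. These are sound, since that identity is proved earlier in the appendix and does not need a nonsingular $\bbW$.
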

\begin{proof}
    Assume ${P^\star}$ has the diagonal form as in \eqref{eq: P-bar lifted dare}. $\bbB^\top {P^\star}\bbB = B^\top \Psta B$ and $\bbB^\top {P}\bbA = \begin{bmatrix}
            B^\top \Psta A & 0 & \cdots & 0
    \end{bmatrix}$, so the $\dare$ gain is $\begin{bmatrix}
            \Ksta^\star & 0 & \cdots & 0
    \end{bmatrix}$. Since $\bbA^\top {P}\bbA = \diag\paren{A^\top \Psta A + p_1Q, p_2Q, \cdots, p_{H-1}Q, 0}$, the Riccati recursion $P^\star = {\bbQ} + \bbA^\top {P^\star}\bbA - \bbA^\top {P^\star}\bbB(R + \bbB^\top {P^\star}\bbB)^{-1}\bbB^\top {P^\star}\bbA$ decouples: block $(0,0)$ reads \[\frac{1}{H}Q + A^\top \Psta A + p_1Q - A^\top \Psta B(R+B^\top \Psta B)^{-1}B^\top \Psta A = \Psta\] which is the static $\dare$ since $p_1Q + \frac{1}{H}Q = Q$; block $(j,j), j\ge1$ reads $p_{j+1}Q + \frac{1}{H}Q = p_jQ$. Also, $\bbW = \diag(I, 0, \cdots, 0)$ gives $\trace\paren{P^\star\bbW} = \trace(\Psta)$. 
    
    % \begin{align*}
    %     &\bbA^\top {P}\bbA = \diag\paren{A^\top \Psta A + p_1Q, p_2Q, \cdots, p_{H-1}Q, 0} \\
    %     &\bbB^\top {P}\bbA = \paren{\bbA^\top {P}\bbB}^\top = \begin{bmatrix}
    %         B^\top \Psta A & 0 & \cdots & 0
    %     \end{bmatrix} \\
    %     &\bbB^\top {P}\bbB = B^\top \Psta B \\
    %     &\bbA^\top {P}\bbB(R + \bbB^\top {P}\bbB)^{-1}\bbB^\top {P}\bbA = \diag\paren{A^\top \Psta B(R+B^\top \Psta B)^{-1}B^\top \Psta A, 0, \cdots, 0}
    % \end{align*}
    % By summarizing, it follows from \eqref{eq: dare to lifted lqr} that
    % \begin{align*}
    %     {P} &= \diag\paren{\frac{1}{H}Q, \cdots, \frac{1}{H}Q} + \diag\paren{A^\top \Psta A + p_1Q, p_2Q, \cdots, p_{H-1}Q, 0} \\
    %     & \hspace{5mm} + \diag\paren{A^\top \Psta B(R+B^\top \Psta B)^{-1}B^\top \Psta A, 0, \cdots, 0} \\
    %     &= \diag\paren{Q + A^\top \Psta A + A^\top \Psta B(R+B^\top \Psta B)^{-1}B^\top \Psta A, p_1Q, \cdots, p_{H-1}Q} \\
    %     &= \diag\paren{\Psta , p_1Q, \cdots, p_{H-1}Q}
    % \end{align*}
    % where the last equality follows from $\dare$ of LQR. Thus we conclude that ${P}$ has the diagonal form in \eqref{eq: P-bar lifted dare}. 

    % Furthermore, $K^\star$ satisfies
    % \begin{align*}
    %     K^\star = -(R+\bbB^\top {P}\bbB)^{-1}\bbB^\top {P}\bbA = -(R+B^\top \Psta B)^{-1}\begin{bmatrix}
    %         B^\top \Psta A & 0 & \cdots & 0
    %     \end{bmatrix} = \begin{bmatrix}
    %         \Ksta^\star & 0 & \cdots & 0
    %     \end{bmatrix}
    % \end{align*}
\end{proof}

We further show that the diagonal structure of the $\dare$ solution $P$ remains as long as the controller $K$ only has a first element. 
\begin{lemma}[Policy evaluation with a structured gain]
    \label{lem app: policy evaluation with the optimal gain}
    \addcontentsline{toc}{subsubsection}{Lemma \protect\ref{lem: policy evaluation with the optimal gain}: Policy Evaluation with the Optimal Gain}
    Let $K = \begin{bmatrix}
            \Ksta & 0 & \cdots & 0
        \end{bmatrix}$.
    Then with $\Psta(\Ksta) \coloneqq \dlyap(A+B\Ksta, Q+\Ksta^\top R\Ksta)$, it holds that
    \begin{align}
        P_K = \diag\paren{\Psta(\Ksta), ~ p_1Q, \cdots, ~ p_{H-1}Q}, 
        \quad p_j = \frac{H-j}{H}
        \label{eq: lifted P_K diagonal}
    \end{align}
    Consequently, $J(K) = \trace\paren{P_K\bbW} = \trace(\Psta(\Ksta))$. 
\end{lemma}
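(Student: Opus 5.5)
Since $P_K$ is the unique solution of the Lyapunov equation $P_K = \bbA_K^\top P_K\bbA_K + \bar\bbQ + K^\top RK$ (uniqueness holds because $\rho(\bbA_K)<1$), I will guess the block-diagonal candidate in \eqref{eq: lifted P_K diagonal} and check that it satisfies this equation. This mirrors the argument of \Cref{lem app: lifted dare}, except that the Riccati minimization is replaced by fixed-gain policy evaluation.

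\textbf{Stability.} With $K = [\Ksta~0~\cdots~0]$, the closed-loop matrix $\bbA_K$ has first block row $[A+B\Ksta~0~\cdots~0]$ and the shift structure below it. It is therefore block lower triangular, with diagonal blocks $A+B\Ksta, 0, \dots, 0$. Hence $\rho(\bbA_K)=\rho(A+B\Ksta)$, and $\bbA_K$ is stable whenever $\Ksta$ stabilizes $(A,B)$. I take this as the standing hypothesis, since $\Psta(\Ksta)$ must be well defined. Then $P_K$ exists and is unique.

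\textbf{Verification.} Let $P=\diag(\Psta(\Ksta), p_1Q,\dots,p_{H-1}Q)$ with $p_0$ playing no role and $p_H\coloneqq0$. I compute $\bbA_K^\top P\bbA_K$ block by block. The first block column of $\bbA_K$ is $[A+B\Ksta; I; 0;\dots;0]$, and block column $j\ge1$ is $e_{j+1}$ (the identity in block row $j+1$) for $j\le H-2$, and zero for $j=H-1$. Because $P$ is block diagonal, the cross terms vanish. This gives
\[
    \bbA_K^\top P\bbA_K=\diag\bigl((A+B\Ksta)^\top\Psta(\Ksta)(A+B\Ksta)+p_1Q,\ p_2Q,\ \dots,\ p_{H-1}Q,\ 0\bigr).
\]
Also $K^\top RK=\diag(\Ksta^\top R\Ksta,0,\dots,0)$ and $\bar\bbQ=\diag(Q/H,\dots,Q/H)$. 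I then add these terms and compare blocks:
\begin{itemize}
    \item Block $(0,0)$ gives $(A+B\Ksta)^\top\Psta(\Ksta)(A+B\Ksta)+(p_1+\tfrac1H)Q+\Ksta^\top R\Ksta$. Since $p_1+1/H=1$, this equals $\Psta(\Ksta)$ by the static Lyapunov equation defining $\Psta(\Ksta)$.
    \item Block $(j,j)$ for $1\le j\le H-1$ gives $p_{j+1}Q+Q/H=p_jQ$, because $p_j-p_{j+1}=1/H$. This includes $j=H-1$, where $p_H=0$ and $p_{H-1}=1/H$.
\end{itemize}
So $P$ solves the Lyapunov equation, and uniqueness gives $P_K=P$.

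\textbf{Cost.} By \Cref{lem app: strucure of lifted problem}(c), $J(K)=\trace(P_K\bbW)$, and $\bbW=\diag(I,0,\dots,0)$. Hence $J(K)=\trace(\Psta(\Ksta))$.

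\textbf{Main obstacle.} The only delicate step is the block bookkeeping of $\bbA_K^\top P\bbA_K$, in particular the edge case of the last block. Writing $\bbA_K=\bbA+\bbB K$ with the shift structure explicit keeps this routine.
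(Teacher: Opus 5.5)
Your proposal is correct and follows the paper's proof essentially verbatim: you take the block-diagonal candidate, compute $\bbA_K^\top P\bbA_K$ and $\bar\bbQ + K^\top RK$ blockwise, and match blocks using $p_1 + 1/H = 1$ and $p_{j+1} + 1/H = p_j$. You also observe that $\bbA_K$ is block lower triangular with $\rho(\bbA_K)=\rho(A+B\Ksta)$, so the Lyapunov solution is unique and the verified candidate must equal $P_K$; this fills in a step the paper leaves implicit.
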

\begin{proof}
    Verify the Lyapunov equation $P_K = \bbA_K^\top P_K\bbA_K + \bar\bbQ + K^\top RK$ with the block diagonal form \eqref{eq: lifted P_K diagonal}.
    First it follows that $\bar\bbQ + K^\top RK = \diag\paren{\Ksta^\top R\Ksta + \frac{Q}{H}, ~ \frac{Q}{H}, \cdots, ~ \frac{Q}{H}}$.
    Second from the structure of $\bbA$ and $\bbB$, $\bbA_K^\top P_K\bbA_K = \diag\paren{A_{\Ksta}^\top \Psta(\Ksta)A_{\Ksta} + p_1Q, ~ p_2Q, \cdots, ~ p_{H-1}Q, ~ 0}$. 
    Thus the first diagonal element of $P_K$ becomes
    \[
        A_{\Ksta}^\top \Psta(\Ksta)A_{\Ksta} + p_1Q + \Ksta^\top R\Ksta + \frac{Q}{H} = \Psta(\Ksta),
    \]
    and the rest of elements follow $p_jQ + \frac{Q}{H} = p_{j-1}Q$
    from the definition. 
    % Thus we conclude that $P_K$ has the diagonal form in \eqref{eq: lifted P_K diagonal}. 
\end{proof}

\begin{lemma}[$\dare$ inequalities]
    \label{lem app: dare inequalities}
    \addcontentsline{toc}{subsubsection}{Lemma \protect\ref{lem: useful inequalities for dare}: Useful inequalities for $\dare$}
    For any $\theta = (A,B)\in\calS$ with static $\dare$ solution $\Psta$ and gain $\Ksta$, and $K = \brac{\Ksta ~ 0 \cdots 0}$, the following inequalities hold: 
    \begin{enumerate}[label=(\alph*)]
        \item $1\le\norm{P^\star} = \norm{\Psta}\le\trace(\Psta) = J^\theta(K(\theta))\le\bar{J}_\star$
        \item $\norm{\Ksta} \leq \norm{\Psta}^{1/2}$ and $\norm{A+B\Ksta}\le\norm{\Psta}^{1/2}$
        \item $\norm{\bbA_K}^2\le\norm{\Psta} + 1$
    \end{enumerate}
\end{lemma}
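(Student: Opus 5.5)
The plan is to prove the three items in order, using the standard facts about the static Riccati solution together with \Cref{lem app: lifted dare} and \Cref{lem app: policy evaluation with the optimal gain}.

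For (a), \Cref{lem app: lifted dare} gives $P^\star = \diag(\Psta, p_1Q,\dots,p_{H-1}Q)$ with $p_j\le 1$. Since $\Psta$ solves the static $\dare$, it satisfies $\Psta \succeq Q + \Ksta^\top R\Ksta + A_{\Ksta}^\top\Psta A_{\Ksta}\succeq Q$. Hence $\Psta\succeq Q\succeq p_jQ$, so the spectral norm of the block-diagonal $P^\star$ equals $\norm{\Psta}$. In addition, $\Psta\succeq Q\succeq I$ gives $\norm{\Psta}\ge 1$. Because $\Psta\succeq0$, we also have $\norm{\Psta}\le\trace(\Psta)$. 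By \Cref{lem app: lifted dare}, $\trace(\Psta)=J^\theta(K(\theta))$, where $K(\theta)=\brac{\Ksta~0\cdots0}$ is the optimal FIR gain. The last bound, $\trace(\Psta)\le\bar J_\star$, then holds by the definition \eqref{eq app: sup optimal gain}, since $\theta\in\calS$.

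For (b), write the Riccati equation in Lyapunov form: $\Psta = Q+\Ksta^\top R\Ksta + A_{\Ksta}^\top \Psta A_{\Ksta}$. Using $R=I$ and $\Psta\succeq0$, this gives $\Ksta^\top\Ksta\preceq \Psta$, hence $\norm{\Ksta}^2\le\norm{\Psta}$. For the closed-loop matrix, I will also use $\Psta\succeq I$: then $A_{\Ksta}^\top A_{\Ksta}\preceq A_{\Ksta}^\top\Psta A_{\Ksta}\preceq \Psta$, which gives $\norm{A+B\Ksta}^2\le\norm{\Psta}$.

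For (c), with $K=\brac{\Ksta~0\cdots0}$ the lifted closed-loop matrix $\bbA_K$ has first block row $\brac{A_{\Ksta}~0\cdots0}$ and a pure shift below it. So for $v=(v_0,\dots,v_{H-1})$, $\norm{\bbA_Kv}^2=\norm{A_{\Ksta}v_0}^2+\sum_{j=0}^{H-2}\norm{v_j}^2$. For $j\ge1$, each $\norm{v_j}^2$ term is bounded by $\norm{v}^2$ directly. The $v_0$ terms combine as $(\norm{A_{\Ksta}}^2+1)\norm{v_0}^2\le(\norm{\Psta}+1)\norm{v_0}^2$ by (b). Summing gives $\norm{\bbA_Kv}^2\le(\norm{\Psta}+1)\norm{v}^2$. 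Alternatively, one could use $\bbA_K^\top\bbA_K=\diag(A_{\Ksta}^\top A_{\Ksta}+I, I,\dots,I,0)$ directly.

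The main subtlety lies in (b): the closed-loop bound needs $\Psta\succeq I$, which comes from $Q\succeq I$. Everything else is a routine manipulation of the Riccati and Lyapunov identities already established.
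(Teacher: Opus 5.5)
Your proof is correct and follows the paper's route. For (a) you use the block-diagonal lifted $\dare$ solution, for (b) the Lyapunov form $\Psta = Q+\Ksta^\top\Ksta + A_{\Ksta}^\top\Psta A_{\Ksta}$ with $\Psta\succeq Q\succeq I$ (the paper cites a prior lemma here, so your version is self-contained), and for (c) the identity $\bbA_K^\top\bbA_K=\diag(A_{\Ksta}^\top A_{\Ksta}+I,I,\dots,I,0)$. One wording fix in (c): say that every block of $v$ carries coefficient at most $\norm{\Psta}+1$, rather than bounding each $\norm{v_j}^2$ by $\norm{v}^2$, which read literally loses a factor of order $H$.
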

\begin{proof}
    (a) Since $P$ takes the diagonal structure from \eqref{eq: P-bar lifted dare},$\norm{P} = \max\curly{\norm{\Psta}, p_1\norm{Q}, \cdots, p_{H-1}\norm{Q}} = \norm{\Psta}$
    since $p_i\le1$ for $i=1, \cdots, H-1$ and $\norm{Q}\le\norm{\Psta}$. $Q\succeq I$ yields the first inequality. 
    (b) follows from Lemma 6 in \cite{fujinami2025domainrandomizationsampleefficient} with $R=I$. 
    (c) $\bbA_K^\top \bbA_K$ is the block diagonal with blocks $(A+B\Ksta)^\top(A+B\Ksta)+I$ in $(0,0)$, $I$ in $(j,j)$, $j = 1, \cdots, H-2$, and $0$. Thus $\norm{\bbA_K}^2 = \norm{A+B\Ksta}^2+1\le\norm{\Psta}+1$. 
    % follows from 
    % \begin{align*}
    %     \norm{\bbA_{K}}^2 &= \lambda_{max}\paren{(\bbA_{K})^\top\bbA_{K}} = \lambda_{max}\brac{\diag\paren{(A_{\Ksta^i}^i)^\top A_{\Ksta^i}^i + I, I, \cdots, I, 0}} \\
    %     &= \max\curly{\norm{A_{\Ksta^i}^i}^2+1, 1} = \norm{A_{\Ksta^i}^i}^2+1 \le \norm{\Psta^i}+1
    % \end{align*}
\end{proof}

\section{Perturbation argument on the FIR-LQR cost}
\label{s app: perturbation argument on the fir-lqr cost}
Given \Cref{s app: landscape of single fir-lqr}, the goal is to derive gradient domination of the sample average FIR-LQR cost, which holds only when the heterogeneity gap between the samples is small. To formalize this, we develop perturbation bounds on the closed-loop system-theoretic quantities. All named constants are summarized in \Cref{tab:constants}.   
\begin{table}[h!]
\centering
{\footnotesize
\renewcommand{\arraystretch}{1.45}
\begin{tabular}{@{}l@{\;\;}l@{\quad}l@{}}
\toprule
 & Definition & Role (where established) \\
\midrule
$\bar{J}_\star$ & $\sup_{\theta\in\calS}J^{\theta}(K(\theta))$,  & optimal cost bound over $\calS$ (\eqref{eq app: sup optimal gain}, \Cref{lem app: dare inequalities})\\
$\tilde{J}$ & $32\,\Jbs$ & uniform cost bound on $\calK_{\zeta_0}$ (\Cref{cor:uniform})\\
$\bar{c}$ & $\paren{\sum_{h=0}^{H-1}(H\tJ)^{h/2}}^{-2}$ & uniform excitation: $\lmin(\Sigma^i_K)\ge\cb$ (\Cref{lem app: cost bounds on the FIR gain}, \Cref{cor:uniform})\\
% $\kb$ & $(1+\sqrt{\tJ}\,)^{H}-1$ & uniform gain bound: $\norm{K}_F\le\kb$ (\Cref{cor:uniform})\\
$\nb$ & $2\,\sqrt{\tJ/\cb}$ & uniform bound: $1+\norm{K}_F\le\nb$ (\Cref{lem app: cost bound}(d))\\
\midrule
$C_A$ & $H^{2}\tJ/\cb$ & closed-loop perturbation constants (\Cref{lem app: perturbation bound} (b))\\
$C_\Sigma$ & $12\,H^{9/2}\tJ^{4}/\cb^{3}$ & state covariance perturbation constants (\Cref{lem app: perturbation bound} (c))\\
$C_P$ & $8\,H^{7/2}\tJ^{4}/\cb^{4}$ & value matrix perturbation constants (\Cref{lem app: perturbation bound} (d))\\
$C_E$ & $13\,\tau_B\,H^{11/2}\tJ^{5}/\cb^{5}$ & gradient factor $E_K$, model direction (\Cref{lem app: E_K perturbation})\\
$C_{G,1}$ & $2H\tJ\,C_E$ & gradient, model direction: absolute part (\Cref{lem app: gradient perturbation})\\
$C_{G,2}$ & $2C_\Sigma/\cb$ & gradient, model direction: relative part (\Cref{lem app: gradient perturbation}) \\
$\Lambda_1$ & $2\,C_{G,1}$ & $\Lambda(K)=2\grad(K)+\Lambda_1\het$ (\Cref{lem app: gradient bound})\\
$C_\gamma$ & $4\,\tB\,H^{9/2}\tJ^{7/2}/\cb^{7/2}$ & gain perturbation constants (\Cref{lem app: gain perturbation})\\
\midrule
$\barhet^1$ & $\paren{54\,H\,\Jbs^{\,5}}^{-1}$ & cost bound with heterogeneity (Lem.~\ref{lem app: cost bound with heterogeneity})\\
$\barhet^2$ & $\cb^{\,3}\big/\big(12\,H^{7/2}\tJ^{3}\big)$ & scenario boundedness (Cor.~\ref{cor:uniform})\\
$\barhet^3$ & $(2\,C_{G,2})^{-1}=\cb^{\,4}\big/\big(48\,H^{9/2}\tJ^{4}\big)$ & gradient bound (\Cref{lem app: gradient bound})\\
$\barhet^{4}$ & $\cb^{\,2}\big/\big(2\,\Lambda_1H\tJ\,C_\gamma\big)$ & quadratic growth (\Cref{lem app: quadratic growth of sa})\\
% $\bar\eps_{\mathrm{grad}}$ & $\cb/(8\gamma)=1/f_{\mathrm{grad}}$, see \eqref{eq:fgrad} & gradient-domination region (Lem.~\ref{lem:gd})\\
% $\bar\eps_{\mathrm{het}}$ & $\min\big\{\bar\eps^{\,1}_{\mathrm{het}},\dots,\bar\eps^{\,4}_{\mathrm{het}},\,\cb/(4\gamma\Lambda_1)\big\}=1/f_{\mathrm{het}}$, see \eqref{eq:fhet} & Assumption~\ref{ass:het} (Lem.~\ref{lem:gd})\\
\bottomrule
\end{tabular}}
\caption{All named constants of this appendix. Each is an explicit function of $(H,\tau_B,\Jbs)$ only, independent of $M$, of the gain $K$, and of the iterates of \eqref{eq app: pg}. Numerical constants are not optimized; only the functional dependence matters for Theorem~\ref{thm app: global convergence}.}
\label{tab:constants}
\end{table}

\subsection{Cost and Lyapunov bounds}
We first introduce upper bounds on the closed-loop system-theoretic quantities in terms of the LQR cost and Lyapunov functions that are used throughout the rest of the proof. 

\begin{lemma}[Cost and Lyapunov bounds]
    \label{lem app: cost bound}
    \addcontentsline{toc}{subsubsection}{Lemma \protect\ref{lem: cost bound}: Cost bound}
    Let $K\in\calK^i$. Then it holds that
    \begin{enumerate}[label=(\alph*)]
        \item $\norm{{\Sigma}_K^i}_{\{\cdot,F\}} \leq HJ^i(K)$ 
        % \tas{isn't $\Sigma=\bar{\Sigma}$?}
        \item $\norm{K}^2 \leq \norm{\bar{P}_K^i}_{\{\cdot,F\}} \leq \frac{J^i(K)}{c_{K,H}}$ 
        \item $\norm{(\bbA_K^i)^t}_{\{\cdot,F\}} \leq \sqrt{\frac{HJ^i(K)}{c_{K,H}}}$ for $t\geq1$
        % \item $\norm{P_K} \leq C\norm{\bar{P}_K}$
        \item $1\leq\nu(K)\leq2\sqrt{\frac{J^i(K)}{c_{K,H}}}$
        % , \quad $1\leq\zeta(K)\leq\frac{2J^i(K)}{c_{K,H}}$\tas{do we need the new symbol for $\nu$ since it is simply $1+\|K\|^{1,2}$?} \tas{why not use (27) directly to bound them?}
        \item $\norm{P_K^i}\leq\frac{H^2J^i(K)^2}{c_{K,H}^2}$
    \end{enumerate}
    We abbreviate $c_{K,H} \coloneqq c(J(K))$. 
\end{lemma}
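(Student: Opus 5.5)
The plan is to derive every bound from the two exact representations of the cost in \Cref{lem app: strucure of lifted problem} (c), $J^i(K)=\trace((\bar\bbQ+K^\top RK)\Sigma_K^i)=\trace(\bar P_K^i\Sigma_H^i(K))$, together with the excitation bound $\lmin(\Sigma_H^i(K))\ge c_{K,H}$ of \Cref{lem app: cost bounds on the FIR gain} (c). We also record two elementary facts and use them freely. First, $J^i(K)\ge \trace(Q\Sigma_x)\ge\trace Q\ge 1$, because $\Sigma_x\succeq\Phi_0\Phi_0^\top=I$. Second, $c_{K,H}\le 1$, because the $h=0$ term of the sum defining $c(\cdot)$ equals $1$. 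Hence $J^i(K)/c_{K,H}\ge1$ and $HJ^i(K)/c_{K,H}\ge 1$.

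For (a), we use $\bar\bbQ\succeq\frac1H I$ and $K^\top RK\succeq0$. These give $J^i(K)\ge\trace(\bar\bbQ\Sigma_K^i)\ge\frac1H\trace\Sigma_K^i$. Since $\Sigma_K^i\succeq0$, both the spectral and Frobenius norms are bounded by the trace. For (b), the second representation gives $J^i(K)=\trace(\bar P_K^i\Sigma_H^i)\ge\lmin(\Sigma_H^i)\trace\bar P_K^i\ge c_{K,H}\norm{\bar P_K^i}_F$. The Lyapunov equation for $\bar P_K^i$ shows $\bar P_K^i\succeq K^\top RK=K^\top K$, so $\norm{K}^2\le\norm{\bar P_K^i}\le\norm{\bar P_K^i}_F$.

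For (c), expand $\Sigma_K^i=\sum_{s\ge0}(\bbA_K^i)^s\bbW(\bbA_K^i)^{s\top}$. This gives $(\bbA_K^i)^t\Sigma_K^i(\bbA_K^i)^{t\top}\preceq\Sigma_K^i$ for every $t\ge 1$. Combining with $\Sigma_K^i\succeq c_{K,H}I$ yields
\[
c_{K,H}\norm{(\bbA_K^i)^t}_F^2\le\trace\paren{(\bbA_K^i)^t\Sigma_K^i(\bbA_K^i)^{t\top}}\le\trace\Sigma_K^i\le HJ^i(K),
\]
where the last step is (a). For (d), (b) gives $\nu(K)=1+\norm{K}\le1+\sqrt{J^i(K)/c_{K,H}}$. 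Using $J^i(K)/c_{K,H}\ge1$, this is at most $2\sqrt{J^i(K)/c_{K,H}}$.

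For (e), we use the identity $P_K^i=\sum_{t=0}^{H-1}(\bbA_K^i)^{t\top}\bar P_K^i(\bbA_K^i)^t$ from \Cref{lem app: strucure of lifted problem} (c). Applying (b) to $\bar P_K^i$ and (c) to each $t\ge1$ gives
\[
\norm{P_K^i}\le\frac{J^i(K)}{c_{K,H}}\paren{1+(H-1)\frac{HJ^i(K)}{c_{K,H}}}\le\frac{J^i(K)}{c_{K,H}}\cdot H\cdot\frac{HJ^i(K)}{c_{K,H}},
\]
where the last inequality uses $HJ^i(K)/c_{K,H}\ge1$. The only delicate point in the whole argument is this bookkeeping: the normalizations $J\ge1$ and $c_{K,H}\le1$ are what allow the additive constants in (d) and (e) to be absorbed into the stated products. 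Everything else is a direct trace and eigenvalue comparison.
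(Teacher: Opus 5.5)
Your proof is correct and follows the paper's argument step for step in (a), (b), (d) and (e). You also state explicitly the normalizations $J^i(K)\ge 1$ and $c_{K,H}\le 1$, which the paper uses tacitly to absorb additive constants (including the $t=0$ term in (e)). The only deviation is in (c): you bound $c_{K,H}\norm{(\bbA_K^i)^t}_F^2\le\trace\paren{(\bbA_K^i)^t\Sigma_K^i((\bbA_K^i)^t)^\top}\le\trace\Sigma_K^i$ directly, rather than using the paper's similarity transform $(\Sigma_K^i)^{1/2}((\bbA_K^i)^t)^\top(\Sigma_K^i)^{-1/2}$, and this gives the Frobenius-norm version more cleanly than the paper's condition-number chain.
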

\begin{proof}
    (a) By \Cref{lem app: cost bounds on the FIR gain} (c) and $\bar\bbQ\succeq\frac{1}{H}I$, 
    $J^i(K) = \trace\paren{(\bar{\bbQ} + K^\top RK){\Sigma}_K^i} \geq \frac{1}{H}\trace(\Sigma_K^i)$. 
    Since $\Sigma_K^i\succ0$, $ \norm{\Sigma_K^i} \leq \norm{\Sigma_K^i}_F \leq \trace\paren{\Sigma_K^i} \leq HJ^i(K)$. 
    
    (b) The first inequality holds since $\bar{P}_K^i = \dlyap((\bbA_K^i)^H, \bar{\bbQ} + K^\top RK) \succeq K^\top K$. The second inequality follows from \Cref{lem app: strucure of lifted problem} (c) that $J^i(K) = \trace(\bar{P}^i_K\Sigma_H^i) \geq c_{K,H}\trace\paren{\bar{P}_K^i}$
    and $\bar{P}_K^i\succ0$ yields the result. 
    
    (c) Note that $(\bbA_K^i)^t\Sigma_K^i((\bbA_K^i)^t)^\top \preceq \Sigma_K^i - \bbW \preceq \Sigma_K^i$ for $t\geq1$. 
    Since $\Sigma_K^i \succ 0$, we get
    \[\norm{(\Sigma_K^i)^{-1/2}(\bbA_K^i)^t\Sigma_K^i((\bbA_K^i)^t)^\top(\Sigma_K^i)^{-1/2}} = \norm{(\Sigma_K^i)^{1/2}((\bbA_K^i)^t)^\top(\Sigma_K^i)^{-1/2}}^2 \leq 1
    \]
    Let $\Tilde{\bbA}_K^i = (\Sigma_K^i)^{1/2}((\bbA_K^i)^t)^\top(\Sigma_K^i)^{-1/2}$. It holds that
    \[
        \norm{(\bbA_K^i)^t}_{\{\cdot,F\}}  = \norm{(\Sigma_K^i)^{-1/2}\Tilde{\bbA}_K^i(\Sigma_K^i)^{1/2}}_{\{\cdot,F\}} \leq \norm{(\Sigma_K^i)^{-1/2}}\norm{\Tilde{\bbA}_K^i}\norm{(\Sigma_K^i)^{1/2}}_{\{\cdot,F\}} \leq \sqrt{\frac{\sigma_{max}(\Sigma_K^i)}{\sigma_{min}(\Sigma_K^i)}} \leq \sqrt{\frac{HJ^i(K)}{c_{K,H}}}
    \]
    where we used $\norm{ABC}_F\leq\norm{AB}\norm{C}_F\leq\norm{A}\norm{B}\norm{C}_F$ in the first inequality. 
    % \Tesshu{separate op and F}
    % \tas{should the middle part be
    % \[\norm{(\Sigma_K^i)^{-1/2}}_{op}\norm{\Tilde{\bbA}_K^i}_{\{op,F\}}\norm{(\Sigma_K^i)^{1/2}}_{op}
    % \]}\Tesshu{Can it be $\norm{ABC}_F\leq\norm{AB}\norm{C}_F\leq\norm{A}\norm{B}\norm{C}_F?$}
    % \tas{also what does $\|\cdot\|$  without subscripts denote for matrices? does it denote operator norm? if yes then we should make the notation consistent. Either go for op everywhere or remove it everywhere}
    % For the second part, it holds that
    % \[
    % \norm{\bbA_{K,H}^i} = \norm{\bbA_{K,H}(t)^\top \bbA_{K,H}(t)}^{1/2} \leq \sqrt{H}\norm{P(t)}^{1/2}
    % \]
    % since $H(\bbQ + K^\top RK) \succeq 1$.
    
    % For (d)
    % \begin{align*}
    %     P_K &= \sum_{t=0}^\infty (\bbA_K^t)^\top (\bar{\bbQ} + K^\top RK)\bbA_K^t 
    %     = \sum_{j=0}^{H-1}(\bbA_K^j)^\top\bar{P}_K\bbA_K^j \preceq \paren{\sum_{j=0}^{H-1} \norm{\bbA_K}^{2j}}\bar{P}_K 
    % \end{align*}
    (d) It follows from (b) that $1 \leq \nu(K) = 1 + \norm{K} \leq 1 + \sqrt{\frac{J^i(K)}{c_{K,H}}} \leq 2\sqrt{\frac{J^i(K)}{c_{K,H}}}$. 
    % where the last inequality holds since $J^i(K) \geq 1$ from $W = I$. The same argument applies to $\zeta(K)$. 

    (e) observe that $P_K^i = \sum_{t=0}^{H-1}((\bbA_K^i)^t)^\top\bar{P}_K^i(\bbA_K^i)^t$. 
    Then from (b), (c) and the triangle inequality $\norm{P_K^i} \leq H\norm{(\bbA_K^i)^t}^2\norm{\bar{P}_K^i}\leq\frac{H^2J^i(K)^2}{c_{K,H}^2}$. 
    % \[
    %     \norm{P_K^i} \leq H\norm{(\bbA_K^i)^t}^2\norm{\bar{P}_K^i}\leq\frac{H^2J^i(K)^2}{c_{K,H}}
    % \]
\end{proof}

\subsection{Comparing systems and uniform constants on the sublevel set}
\label{s app: uniform bounds}
Based on the above results, we compare the cost of two systems under a heterogeneity bound. 
\begin{lemma}[Cost Bound with Heterogeneity]
    \label{lem app: cost bound with heterogeneity}
    \addcontentsline{toc}{subsubsection}{Lemma \protect\ref{lem: cost bound with heterogeneity}: Cost bound with heterogeneity}
   Consider $\theta_i,{\theta_j}\in\calS$ with optimal FIR controllers $K_i^\star, K_j^\star$ and suppose $\het\leq\barhet^1$ (\Cref{tab:constants}). 
   % \tas{What is $\theta$? Should it be minimum over $\theta_1,\theta_2$} 
   % holds and let $K_i, K_j$ be corresponding optimal controllers for the lifted system. 
   Then $K_i^\star\in\calK^j, K_j^\star\in\calK^i$ and
    \[
        J^i(K_j^\star) \leq 2J^i(K_i^\star), \quad J^j(K_i^\star) \leq 2J^i(K_i^\star)
    \]
\end{lemma}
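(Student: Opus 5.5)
The plan is to use the zero-padded structure of the optimal FIR gains to reduce everything to a static LQR Lyapunov perturbation argument.

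\textbf{Step 1: reduction to static gains.} By \Cref{lem app: lifted dare}, $K_j^\star = \begin{bmatrix}\Ksta^{j} & 0 & \cdots & 0\end{bmatrix}$, where $\Ksta^j$ is the static LQR gain of $(A^j,B^j,Q,R)$. Applying \Cref{lem app: policy evaluation with the optimal gain} to system $i$ with this padded gain, $P^i_{K_j^\star}$ is block diagonal. Its leading block is $\Psta^{i}(\Ksta^j) = \dlyap(A^i+B^i\Ksta^j,\, Q+\Ksta^{j\top}\Ksta^j)$, and the remaining blocks $p_kQ$ are fixed. Two consequences follow. First, $\bbA^i_{K_j^\star}$ is block lower triangular with diagonal blocks $A^i+B^i\Ksta^j$ and zeros, so $K_j^\star\in\calK^i$ iff $\rho(A^i+B^i\Ksta^j)<1$. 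Second, $J^i(K_j^\star)=\trace(\Psta^i(\Ksta^j))$. Hence it suffices to prove the claim for the static closed loops $A_j\coloneqq A^j+B^j\Ksta^j$ and $A'\coloneqq A^i+B^i\Ksta^j$.

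\textbf{Step 2: a multiplicative Lyapunov perturbation bound.} Write $A' = A_j + D$ with $D = (A^i-A^j)+(B^i-B^j)\Ksta^j$. Let $P\coloneqq \Psta^j = \dlyap(A_j, Q+\Ksta^{j\top}\Ksta^j)$. By \Cref{lem app: dare inequalities}, $\norm{P}\le\Jbs$ and $\norm{\Ksta^j},\norm{A_j}\le\norm{P}^{1/2}\le\Jbs^{1/2}$. Since $\Jbs\ge1$, this gives $\norm{D}\le\het(1+\Jbs^{1/2})\le 2\het\Jbs^{1/2}$. Expanding the quadratic form,
\[
A'^\top P A' \preceq A_j^\top P A_j + \paren{2\norm{D}\norm{P}\norm{A_j}+\norm{D}^2\norm{P}}I \preceq A_j^\top P A_j + \varepsilon I ,
\]
with $\varepsilon \coloneqq 4\het\Jbs^{2}+4\het^2\Jbs^{2}$. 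Because $Q\succeq I$, the stage cost satisfies $S\coloneqq Q+\Ksta^{j\top}\Ksta^j\succeq I$, so $\varepsilon I\preceq \varepsilon S$. Combining with $A_j^\top P A_j = P - S$ yields
\[
A'^\top P A' + (1-\varepsilon)S \preceq P .
\]
If $\varepsilon<1$, this is a strict Lyapunov inequality with $S\succ0$, so $\rho(A')<1$. Iterating it gives $\dlyap(A',S)\preceq P/(1-\varepsilon)$. Therefore $K_j^\star\in\calK^i$ and
\[
J^i(K_j^\star)\le \frac{1}{1-\varepsilon}J^j(K_j^\star).
\]
By symmetry, the same holds with $i$ and $j$ swapped.

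\textbf{Step 3: chaining to get factor $2$ relative to $J^i(K_i^\star)$.} Optimality of $K_j^\star$ for system $j$ gives $J^j(K_j^\star)\le J^j(K_i^\star)$, and Step 2 with roles swapped gives $J^j(K_i^\star)\le (1-\varepsilon)^{-1}J^i(K_i^\star)$. Combining,
\[
J^i(K_j^\star)\le(1-\varepsilon)^{-1}J^j(K_j^\star)\le(1-\varepsilon)^{-2}J^i(K_i^\star),
\]
and also $J^j(K_i^\star)\le(1-\varepsilon)^{-1}J^i(K_i^\star)$. It therefore suffices to have $(1-\varepsilon)^{-2}\le 2$, e.g.\ $\varepsilon\le 1/4$. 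Under $\het\le\barhet^1=(54H\Jbs^5)^{-1}\le 1/54$, we get $\varepsilon\le 8\het\Jbs^2\le 8/54<1/4$, which closes the argument with room to spare.

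The main obstacle is Step 1: legitimately transferring stability and cost between the lifted and static problems for a non-optimal padded gain. The point to check carefully is that \Cref{lem app: policy evaluation with the optimal gain} is stated for an arbitrary static gain $\Ksta$, not only the optimal one, and that its block-diagonal solution is \emph{the} Lyapunov solution. For the latter, I would verify that the spectrum of $\bbA^i_{K}$ equals that of $A^i+B^i\Ksta^j$ together with zeros, so that the Lyapunov equation has a unique solution.
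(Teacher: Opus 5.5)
Your proof is correct, but it takes a different route from the paper after the shared first step.

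\textbf{Shared reduction.} Both arguments start from the zero-padded structure of the optimal FIR gains. \Cref{lem app: policy evaluation with the optimal gain} then turns $J^i(K_j^\star)$ into the static cost of $\Ksta^j$ on $\theta_i$. You also check that $\bbA^i_{K_j^\star}$ is block lower triangular with spectrum $\mathrm{spec}(A^i+B^i\Ksta^j)\cup\{0\}$. This correctly settles the stability and uniqueness point, which the paper uses without comment.

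\textbf{The paper's route.} The paper proves the first claim by citing the certainty-equivalence bound of \cite{simchowitz2020naive}, which gives $\Psta^i(\Ksta^j)\preceq\frac{21}{20}\Psta^i$ directly. For the second claim it expands the lifted Lyapunov difference $P^j_{K_i^\star}-P^i_{K_i^\star}\preceq H\|X\|P^i_{K_i^\star}$. Here $X$ contains the perturbed value matrix $P^j_{K_i^\star}$, whose norm the paper controls with a second result from the same reference plus the swapped first claim. The factor $H$, coming from $\bar\bbQ\succeq I/H$, is what produces the condition $\het\le 1/(30H\|\Psta^i\|^2)$.

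\textbf{Your route.} You stay in the static problem and use the nominal value matrix as a Lyapunov certificate for the perturbed closed loop. This gives $\rho(A')<1$ and $\dlyap(A',S)\preceq P/(1-\varepsilon)$ in one step. It needs no a priori bound on the perturbed value matrix and no external theorem. You then get the first claim by chaining through the optimality of $K_j^\star$ on system $j$.

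\textbf{What each buys.}
\begin{itemize}
\item The paper's route gives the sharper constant $21/20$ in the first claim, a suboptimality bound measured against system $i$'s own optimum. It pays for this with the restrictive $\Jbs^{-5}$ requirement inherited from the certainty-equivalence theorem.
\item Your argument is self-contained and loses only a factor $(1-\varepsilon)^{-2}$. It works under $\het\lesssim\Jbs^{-2}$ with no dependence on $H$, which shows $\barhet^1$ could be relaxed.
\item The factor-$2$ conclusion is all that is used downstream in \Cref{cor:uniform}, so the sharper constant is not needed.
\end{itemize}
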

\begin{proof}
    Since $1\le\norm{\Psta^{i}}, \norm{\Psta^j}\le\Jbs$ and $\norm{P_{K_i^\star}^i} = \norm{\Psta^i}$ (from \Cref{lem app: dare inequalities}), the definition of $\barhet^1$ implies $\het\le\min\curly{1/54\norm{\Psta}^5, 1/30H\norm{P_{K_i^\star}^i}^2, 1}$. 
    
    \textit{First claim.} Note that, from \Cref{lem app: lifted dare}, $K_i^\star$ and $K^\star_j$ have the structured form because of optimality to $i$ and $j$th lifted system $K^\star_i = \begin{bmatrix}
            \Ksta^i & 0 & \cdots & 0
        \end{bmatrix}$
        and 
        $K^\star_j = \begin{bmatrix}
            \Ksta^j & 0 & \cdots & 0
        \end{bmatrix}$.
    Then from \Cref{lem app: policy evaluation with the optimal gain}, $J^i(K^\star_j)$ equals the static cost of the gain $\Ksta^j$ on $\theta_i$. By the certainty-equivalence bound \cite[Thm 5.4]{simchowitz2020naive} (applicable since $\norm{\theta_i-\theta_j}\le\het\le1/54\norm{\Psta}^5$), $K^j$ stabilizes $\theta_i$ and $\Psta^i(\Ksta^j) \preceq \frac{21}{20}\Psta^i(\Ksta^i)$. Thus $J^i(K^\star_j) \le \frac{21}{20}J^i(K^\star_i) \le 2J^i(K^\star_i)$.

    % Then since $\het\le(1/\norm{P_{K_i}^i}^5) = (1/\norm{\Psta^i}^5)$, it follows from \cite[Thm 5.4]{simchowitz2020naive} that
    % \[
    %     \Psta^i(\Ksta^j) \preceq \frac{21}{20}\Psta^i(\Ksta^i)
    % \]
    % Based on above, we can invoke \Cref{lem: policy evaluation with the optimal gain}
    % \begin{align}
    %     P_{K_j}^i &= \diag\paren{\Psta^i(\Ksta^j), ~ p_1Q, \cdots, ~ p_{H-1}Q} \nonumber \\
    %     &\preceq \diag\paren{\frac{21}{20}\Psta^i(\Ksta^i), ~ p_1Q, \cdots, ~ p_{H-1}Q} 
    %     \preceq \frac{21}{20}P^i_{K_i} \label{eq: P bound with heterogeneity}
    % \end{align}
    % where the last inequality follows from \Cref{lem: lifted dare}. 
    % Thus in the cost form, it follows that
    % \[
    %     J^i(K_j) = \trace\paren{P_{K_j}^i\bbW} \leq \frac{21}{20}\trace\paren{P^i_{K_i}\bbW} \le 2J^i(K_i)
    % \]
    % which concludes the part (a).

    \textit{Second claim.} We first compute the difference of Lyapunov matrices. 
    With $\Delta_A\coloneqq\bbA_{K^\star_i}^j - \bbA_{K^\star_i}^i$ and $X\coloneqq (\bbA_{K^\star_i}^i)^\top P_{K^\star_i}^j\Delta_A + \Delta_A^\top P_{K^\star_i}^j\bbA_{K^\star_i}^i + \Delta_A^\top P_{K^\star_i}^j \Delta_A$, it holds that
    \begin{align}
        P_{K^\star_i}^j - P_{K^\star_i}^i = \sum_{l=0}^\infty \paren{(\bbA_{K^\star_i}^i)^l}^\top X(\bbA_{K^\star_i}^i)^l \preceq H\norm{X}\sum_{l=0}^\infty \paren{(\bbA_{K^\star_i}^i)^l}^\top (\bar{\bbQ} + K^\top RK)(\bbA_{K^\star_i}^i)^l = H\norm{X}P^i_{K^\star_i} \label{eq: P_k difference}
    \end{align}
    where the inequality holds since $H(\bar{\bbQ}+K^\top RK)\succeq I$.
    Since $K^\star_i$ is optimal for $i$th lifted system, 
    \[
        \norm{\Delta_A} \leq \norm{A^j - A^i} + \norm{(B^j - B^i)\begin{bmatrix}
            \Ksta^i & 0 & \cdots & 0
        \end{bmatrix}} \leq (1+\norm{\Ksta^i})\het \le (1+\norm{\Psta^i}^{1/2})\het
    \]
    where the last inequality is from \Cref{lem app: dare inequalities} (b). 
    By \Cref{lem app: dare inequalities} (c), $\norm{\bbA_{K^\star_i}^i}\le(\norm{\Psta^i}+1)^{1/2}\le2\norm{\Psta^i}^{1/2}$. 
    % Furthermore, let $A_K = A+BK$ denote the non-lifted closed-loop matrix. Then it follows from the structure of $K_i$ that
    % \begin{align*}
    %     \norm{\bbA_{K_i}^i}^2 &= \lambda_{max}\paren{(\bbA_{K_i}^i)^\top\bbA_{K_i}^i} = \lambda_{max}\brac{\diag\paren{(A_{\Ksta^i}^i)^\top A_{\Ksta^i}^i + I, I, \cdots, I, 0}} \\
    %     &= \max\curly{\norm{A_{\Ksta^i}^i}^2+1, 1} = \norm{A_{\Ksta^i}^i}^2+1 \le \norm{\Psta^i}+1
    % \end{align*}
    Also from \cite[Thm 5.1]{simchowitz2020naive}, $\norm{P^j_{K^\star_j}} = \norm{\Psta^j}\le1.1\norm{\Psta^i} = 1.1\norm{P^i_{K^\star_i}}$. Thus by applying the first claim, we get $\norm{P_{K^\star_i}^j} \leq \frac{21}{20}\norm{P^j_{K^\star_j}} \leq 1.2\norm{P^i_{K^\star_i}}$. Hence, using $\het\le1$, 
    \[
        \norm{X} \le \norm{P_{K^\star_i}^j}\norm{\Delta_A}\paren{2\norm{\bbA_{K^\star_i}^i} + \norm{\Delta_A}} \le 1.2\norm{P^i_{K^\star_i}}\cdot 2\norm{P^i_{K^\star_i}}^{1/2}\het\cdot6\norm{P^i_{K^\star_i}}^{1/2} \le 15\norm{P^i_{K^\star_i}}^2\het.
    \]
    so $H\norm{X}\le1/2$ since $\het\le(1/30H\norm{P_{K^\star_i}^i}^2)$.
    % Since $\het\le(1/20H\norm{P^i}^2)$, it follows that 
    % \[
    %     P_{K_i}^j - P_{K_i}^i \preceq \frac{1}{2}P^i_{K_i}
    % \]
    Finally, the cost error is computed as $J^j(K^\star_i) - J^i(K^\star_i) = \trace\paren{(P_{K^\star_i}^j - P_{K^\star_i}^i)\bbW} \leq \frac{1}{2}\trace\paren{P^i_{K^\star_i}\bbW} = \frac{1}{2}J^i(K_i)$.
\end{proof}

\begin{corollary}[Uniform constants on the initial sublevel set]
    \label{cor:uniform}
    Let $\tJ = 32\Jbs$, $\cb = c(\tJ)$ and $\nb = \nu(\tJ)$ be the $K$-independent constants defined on \Cref{tab:constants}. Suppose $\het\le\min\curly{\barhet^1, \barhet^2}$ (\Cref{tab:constants}). Then it holds that
    \begin{enumerate}[label=(\alph*)]
        \item $\sa(K^\star)\le2\Jbs$ and $K^\star\in\calK_{\zeta_0}$
        \item (scenario boundedness) for every $K\in\calK_{\zeta_0}$ and every $i\in\brac{M}$, $J^i(K)\le2\sa(K)\le\tJ$
        \item for every $K\in\calK_{\zeta_0}$, ~ $c\ge\cb$ and $\nu\le\nb$. 
    \end{enumerate}
\end{corollary}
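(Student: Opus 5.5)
For part (a), the plan is to compare $K^\star$ with a single-system optimal gain. Fix any index, say $1$, and let $K_1^\star=\begin{bmatrix}\Ksta^1 & 0 & \cdots & 0\end{bmatrix}$ be the optimal FIR gain of system $1$ (\Cref{lem app: lifted dare}). Since $\het\le\barhet^1$, \Cref{lem app: cost bound with heterogeneity} gives $K_1^\star\in\calK^i$ for every $i\in\brac{M}$, so $K_1^\star\in\calK$. It also gives $J^i(K_1^\star)\le 2J^i(K_i^\star)\le 2\Jbs$, using \Cref{lem app: dare inequalities}(a). Averaging over $i$ and using optimality of $K^\star$ yields $\sa(K^\star)\le\sa(K_1^\star)\le2\Jbs$. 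Membership $K^\star\in\calK_{\zeta_0}$ is immediate, since $\sa(K^\star)\le\sa(K^{(0)})=\zeta_0$. Combined with Assumption~\ref{asmp-app: simultaneous stabilization}, this also gives $\zeta_0\le 8\sa(K^\star)\le16\Jbs$.

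For part (b), which I expect to be the main obstacle, fix $K\in\calK_{\zeta_0}$. Because a minimum is at most an average, there is an index $j$ with $J^j(K)\le\sa(K)\le16\Jbs$. It then suffices to show that $J^i(K)\le 2J^j(K)$ for every $i$. This gives $J^i(K)\le2\sa(K)\le 2\zeta_0\le32\Jbs=\tJ$.

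Since $K\in\calK$, both $P_K^i$ and $P_K^j$ exist, so no homotopy argument is needed for stability. I would write the difference as a Lyapunov series driven by system $j$:
\[
    P_K^i-P_K^j=\sum_{l\ge0}\big((\bbA_K^j)^l\big)^\top X\,(\bbA_K^j)^l,\qquad X\coloneqq(\bbA_K^i)^\top P_K^i\bbA_K^i-(\bbA_K^j)^\top P_K^i\bbA_K^j .
\]
Exactly as in \eqref{eq: P_k difference}, using $H(\bar\bbQ+K^\top RK)\succeq I$, this gives $P_K^i-P_K^j\preceq H\norm{X}P_K^j$. Next, set $\Delta_A\coloneqq\bbA_K^i-\bbA_K^j$. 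Then $\norm{\Delta_A}\le\nu(K)\het$ and $\norm{X}\le\norm{P_K^i}\norm{\Delta_A}(2\norm{\bbA_K^j}+\norm{\Delta_A})$.

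Every $j$-quantity is controlled by \Cref{lem app: cost bound} at cost level $16\Jbs$:
\begin{itemize}
    \item $\nu(K)\le2\sqrt{J^j/c}$ by part (d);
    \item $\norm{\bbA_K^j}\le\sqrt{HJ^j/c}$ by part (c);
    \item $\norm{P_K^j}\le H^2(J^j)^2/c^2$ by part (e);
\end{itemize}
where $c=c(16\Jbs)\ge\cb$. This produces a self-bounding inequality of the form $\norm{P_K^i}\le\norm{P_K^j}+\eps H\norm{P_K^j}\norm{P_K^i}$, where $\eps$ collects the factor $\het\cdot\mathrm{poly}(H,\tJ)/\cb^{3/2}$.

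The threshold $\barhet^2=\cb^3/(12H^{7/2}\tJ^3)$ is designed to make $2H\eps\norm{P_K^j}\le1$. Under that condition, first $\norm{P_K^i}\le2\norm{P_K^j}$. Substituting back gives $P_K^i\preceq(1+2H\eps\norm{P_K^j})P_K^j\preceq2P_K^j$. Taking the trace against $\bbW$ yields $J^i(K)\le2J^j(K)$.

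The delicate point is bookkeeping the powers of $H$, $\tJ$ and $\cb$ so that they match $\barhet^2$. If the constants do not close with the crude bound $\norm{X}\le\norm{P^i}(\cdots)$, I would fall back on bounding $\norm{\Delta_A}$ in Frobenius norm. Alternatively, I would run the same argument on the $H$-step objects $\bar P_K$, which have the cleaner bound $\norm{\bar P_K}\le J/c$.

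For part (c), since $J^i(K)\le\tJ$ for all $i$ by (b), and $c(\cdot)$ is decreasing, we get $c(J^i(K))\ge c(\tJ)=\cb$. Similarly, \Cref{lem app: cost bound}(d) together with monotonicity gives $\nu(K)\le2\sqrt{J^i(K)/c(J^i(K))}\le2\sqrt{\tJ/\cb}=\nb$.
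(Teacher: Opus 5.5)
Your proposal is correct and follows essentially the paper's proof. Part (a) goes through \Cref{lem app: cost bound with heterogeneity}: you use its first claim with the fixed gain $K_1^\star$, while the paper uses its second claim with $K(\theta)$. Part (b) uses the minimum-cost sample together with the Lyapunov-difference bound based on $H(\bar\bbQ+K^\top RK)\succeq I$ and \Cref{lem app: cost bound}, and part (c) follows by monotonicity.

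The one variation is in (b). You unroll $P_K^i-P_K^j$ along the low-cost system's closed loop, so the unknown $P_K^i$ appears inside $X$ and you need a self-bounding step. The paper instead unrolls along the other system's closed loop. Its $X'$ then contains only the low-cost quantities $P_K^{\tilde\theta},\bbA_K^{\tilde\theta}$, and $P_K^j-P_K^{\tilde\theta}\preceq\tfrac12 P_K^j$ follows directly. Both routes reduce to $12H^{7/2}\tJ^3\het/\cb^3\le 1$, which is exactly the threshold $\barhet^2$, so your fallback plan is unnecessary.
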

\begin{proof}
    (a) Fix $\theta\in\calS$. Since $\het\le\barhet^1$, $K(\theta)\in\calK$ and $\sa(K^\star)\le\sa(K(\theta)) = \frac{1}{M}\sum_iJ^i(K(\theta)\le 2J^\theta(K(\theta)) \le 2\Jbs$. Also since $\sa(K^\star)\le\zeta_0$, $K^\star\in\calK_{\zeta_0}$. 

    (b) Let $K\in\calK_{\zeta_0}$ and pick some sample $\Tilde\theta\in\calS$ such that $J^{\tth}(K)\le\sa(K)$. From the initial policy assumption \ref{asmp-app: simultaneous stabilization} and (a), $J^{\tth}(K)\le\sa(K)\le\zeta_0\le8\sa(K^\star)\le16\Jbs\le\tJ$. 
    For any $j\in\brac{M}$, compare $P_K^j$ and $P_K^{\tth}$ via \eqref{eq: P_k difference} (roles $i\rightarrow\tth$), it follows that
    \[
        P_K^{j} - P_K^{\tilde{\theta}} \preceq H\norm{X'}P_K^j, \quad X' \coloneqq (\bbA_K^{\tilde{\theta}})^\top P_{K}^{\tilde{\theta}}\Delta_A' + \Delta_A'^\top P_{K}^{\tilde{\theta}}\bbA_K^{\tilde{\theta}} + \Delta_A'^\top P_{K}^{\tilde{\theta}}\Delta_A', \quad \Delta_A'\coloneqq \bbA_K^j - \bbA_K^{\tilde{\theta}}
    \]
    By invoking \Cref{lem app: cost bound} with $J^{\tth}(K)\le\tJ$ and $c\ge c(\tJ)=\cb$, $\norm{\bbA_K^{\tth}}\le\sqrt{H\tJ/\cb}$, $\norm{P_K^{\tth}}\le H^2\tJ^2/\cb^2$, and $\norm{\Delta_A'}\le\nu(K)\het\le2\sqrt{\tJ/\cb}\het\le\sqrt{H\tJ/\cb}$ since $\het\le\barhet^2\le1/2$. Thus we get 
    \[
        \norm{X'} \le \norm{P_{K}^{\tth}}\norm{\Delta_A'}\paren{2\norm{\bbA_{K}^{\tth}} + \norm{\Delta_A'}} \le \frac{H^2\tJ^2}{\cb^2}\cdot 2\sqrt{\frac{\tJ}{\cb}}\het\cdot3\sqrt{\frac{H\tJ}{\cb}}\le\frac{6H^{5/2}\tJ^3}{\cb^3}\het
    \]
    so $H\norm{X'}\le1/2$ since $\het\le\barhet^2$. Finally, the cost error is computed as $J^j(K) - J^{\tth}(K) = \trace\paren{(P_{K}^j - P_{K}^{\tth})\bbW} \leq \frac{1}{2}\trace\paren{P^j_{K}\bbW} = \frac{1}{2}J^j(K)$, i.e. $J^j(K)\le2J^{\tth}(K)\le2\sa(K)\le32\Jbs=\tJ$.

    (c) follows from (b) and monotonicity of $c(\cdot), \nu(\cdot)$.

\end{proof}

The perturbation argument below are for $K\in\calK_{\zeta_0}$, and therefore from \Cref{cor:uniform} every constant below defined in \Cref{tab:constants} is written as an explicit polynomial in $(H,\tau_B, \tJ, \cb^{-1})$, hence a function of $(H,\tau_B, \Jbs)$ only, independent of $K$. 

\subsection{Perturbation across sytems}
\label{s app: perturbation}
% In this section, we aim to characterize the perturbation of the gradient which is in turn used to bound each system's gradient. 
% We first establish the perturbation bound on multiple system theoretic quantities. 
\begin{lemma}[Perturbation bound]
    \label{lem app: perturbation bound}
    \addcontentsline{toc}{subsubsection}{Lemma \protect\ref{lem: perturbation bound}: Perturbation bound}
    Let $i,j \in \brac{M}$ and $K\in\calK_{\zeta_0}$. 
    % Suppose $\het \leq \barhet^3\triangleq\frac{c_{K,H}^{3/2}}{H^{3/2}\sup_{\theta\in\calS}J^{\theta}(K)}$. 
    Then with $C_A, C_\Sigma, C_P$ from \Cref{tab:constants}, it holds that
    \begin{enumerate}[label=(\alph*)]
        \item $\norm{\bbA_K^i-\bbA_K^j} \leq \nb\het$
        \item $\norm{(\bbA_{K}^i)^t - (\bbA_{K}^j)^t} \leq C_A\nb\het$ for every $0\le t\le H$
        \item $\norm{{\Sigma}_K^i - {\Sigma}_K^j} \leq C_\Sigma\het$
        \item $\norm{\bar{P}_K^i - \bar{P}_K^j}\leq C_P\het$
        % \item $\norm{E_K^i - E_K^j} \leq \paren{\frac{J^i(K)}{c} + \frac{2\tau_BJ^i(K)^{1/2}J^j(K)}{c^{3/2}} + \frac{8\tau_BH^4J^i(K)^2J^j(K)^2}{c^4}}\het$
    \end{enumerate}
\end{lemma}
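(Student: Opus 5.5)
Throughout, I fix $K\in\calK_{\zeta_0}$ and use \Cref{cor:uniform}: $J^i(K),J^j(K)\le\tJ$, $c\ge\cb$, $\nu(K)\le\nb$. Consequently \Cref{lem app: cost bound} supplies uniform bounds, namely
\[
\norm{\Sigma_K^i}\le H\tJ,\quad \norm{\bar P_K^i}\le \tJ/\cb,\quad \norm{(\bbA_K^i)^t}\le\sqrt{H\tJ/\cb}\ (t\ge1),\quad \lmin(\Sigma_K^i)\ge\cb,
\]
with the same bounds for $j$. Each part is then a standard Lyapunov/telescoping perturbation argument in which these uniform bounds replace the $K$-dependent quantities.

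\textbf{Parts (a) and (b).} For (a), $\bbA_K^i-\bbA_K^j$ has nonzero entries only in its first block row, which is $[A^i-A^j+(B^i-B^j)K_0,\ (B^i-B^j)K_1,\dots]$. Its norm is therefore at most $\norm{A^i-A^j}+\norm{B^i-B^j}\norm{K}\le(1+\norm{K})\het\le\nb\het$. For (b), I telescope
\[
X^t-Y^t=\sum_{s=0}^{t-1}X^{t-1-s}(X-Y)Y^s,
\]
bounding each power by $\max\{1,\sqrt{H\tJ/\cb}\}=\sqrt{H\tJ/\cb}$; the $s=0$ and $t-1-s=0$ terms are identities. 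This gives $\norm{(\bbA_K^i)^t-(\bbA_K^j)^t}\le t\,(H\tJ/\cb)\,\nb\het\le H^2\tJ/\cb\cdot\nb\het=C_A\nb\het$ for $t\le H$.

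\textbf{Part (c).} This is the main obstacle, because a naive Lyapunov perturbation bound needs contraction of $\bbA_K$, which fails for degenerate $\bbW$. I work instead with the $H$-step form from \Cref{lem app: strucure of lifted problem}(c): $\Sigma_K^\ell=\bbA_{K,H}^\ell\Sigma_K^\ell\bbA_{K,H}^{\ell\top}+\Sigma_H^\ell(K)$ for $\ell\in\{i,j\}$. Subtracting the two equations gives
\[
\Delta\Sigma=\bbA_{K,H}^i\Delta\Sigma\bbA_{K,H}^{i\top}+Z,
\]
where $Z$ collects two kinds of terms: $(\bbA_{K,H}^i-\bbA_{K,H}^j)\Sigma_K^j(\cdot)^\top$ cross terms, and $\Sigma_H^i-\Sigma_H^j$. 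I bound $Z$ with (b) for $t=H$ and $t<H$, using $\norm{\Sigma_H^\ell}\le\norm{\Sigma_K^\ell}\le H\tJ$ and $\norm{\Sigma_K^j}\le H\tJ$. I then write $\Delta\Sigma=\sum_l(\bbA_{K,H}^i)^lZ(\cdot)^\top$. To sum this series I use the standard trick $\sum_l(\bbA_{K,H}^i)^lZ(\bbA_{K,H}^i)^{l\top}\preceq\frac{\norm{Z}}{\lmin(\Sigma_H^i)}\Sigma_K^i$ (and the same with $-Z$), which gives
\[
\norm{\Delta\Sigma}\le\norm{Z}\,H\tJ/\cb.
\]
Collecting the powers of $H,\tJ,\cb$ should reproduce $C_\Sigma$ up to the numerical factor. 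Part (d) follows identically from the dual equation $\bar P_K^\ell=\bbA_{K,H}^{\ell\top}\bar P_K^\ell\bbA_{K,H}^\ell+\bar\bbQ+K^\top RK$. Here the forcing difference contains only the cross terms in $\bbA_{K,H}^i-\bbA_{K,H}^j$, since $\bar\bbQ+K^\top RK$ is identical for $i$ and $j$. I sum the series using $\bar\bbQ+K^\top RK\succeq\frac1H I$, which gives $\sum_l(\cdot)^\top Z(\cdot)\preceq H\norm{Z}\bar P_K^i$, together with $\norm{\bar P_K}\le\tJ/\cb$.

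One point needs care in both (c) and (d): the deviation bounds from (b) multiply $\nb\het$. Since $\nb=2\sqrt{\tJ/\cb}$, this factor must be absorbed into the stated constants.
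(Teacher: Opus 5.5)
Your proposal is correct and matches the paper's proof. Parts (a) and (b) go through the first-block-row structure and telescoping. Part (c) subtracts the $H$-step Lyapunov equations and uses $\Sigma_H^\ell(K)\succeq \cb I$ to dominate the forcing term by a multiple of $\Sigma_K^\ell$. Part (d) uses the dual $H$-step equation with $\bar\bbQ+K^\top RK\succeq \frac{1}{H}I$. The only difference is cosmetic: you unroll along $\bbA_{K,H}^i$ with $\Sigma_K^j$ in the cross terms, whereas the paper unrolls along $\bbA_{K,H}^j$ with $\Sigma_K^i$, and both give the same constants.
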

\begin{proof}
    Throughout, we use $\norm{(\bbA_K^i)^t}, \norm{(\bbA_K^j)^t}\le\sqrt{H\tJ/\cb}$ for all $t\ge0$ from \Cref{lem app: cost bound} with \Cref{cor:uniform}. 
    
    (a) $\|\bbA_K^i - \bbA_K^j\| \leq \norm{A^i - A^j} + \norm{(B^i - B^j)[K_0 ~ K_1 ~ \cdots ~ K_{H-1}]} \leq (1+\norm{K})\het \le \nb\het$.

    (b) Telescoping and using the cost bound:
    \[
        \norm{(\bbA_{K}^i)^t - (\bbA_{K}^j)^t} \leq \norm{\bbA_K^i - \bbA_K^j}\sum_{r=0}^{t-1}\norm{(\bbA_K^i)^{t -1-r}}\norm{(\bbA_K^j)^{r}} \leq \frac{H^2\tJ}{\cb}\nb\het
    \]
    
    (c) Let $D\coloneqq\Sigma_K^i - \Sigma_K^j$ and $D_H\coloneqq \Sigma_H^i(K) - \Sigma_H^j(K)$. 
    Subtracting the two $H$-step Lyapunov equations (\Cref{lem app: strucure of lifted problem} (c)) yields
    % \tas{Why not just use $\Sigma^i_K=A^i_{K}\Sigma^i_K A^i_K+\mathbb W$? Isn't this much easier to analyze?}\Tesshu{It's to invoke $\Sigma_H^j\succeq c_{K,H}I$}
    \[
        D = \underbrace{\bbA_{K,H}^i{\Sigma}_K^i(\bbA_{K,H}^i)^\top - \bbA_{K,H}^j{\Sigma}_K^j(\bbA_{K,H}^j)^\top}_{(i)} + D_H
    \]
    Let $\Delta_{A,H} \coloneqq \bbA_{K,H}^i - \bbA_{K,H}^j$ and $X \coloneqq \bbA_{K,H}^j{\Sigma}_K^i\Delta_{A,H}^\top + \Delta_{A,H}{\Sigma}_K^i(\bbA_{K,H}^i)^\top$. Then since $(i) = \bbA_{K,H}^jD(\bbA_{K,H}^j)^\top + X$, it follows that
    \begin{align}
        \label{eq app: D}
        D 
        = \bbA_{K,H}^jD(\bbA_{K,H}^j)^\top + X + D_H
        = \sum_{t=0}^\infty(\bbA_{K,H}^j)^t(X+D_H)((\bbA_{K,H}^j)^t)^\top \preceq \frac
        {\norm{X} + \norm{D_H}}{\cb}\Sigma_K^j
        % \sum_{t=0}^\infty(\bbA_{K,H}^j)^t(\Sigma_H^j)((\bbA_{K,H}^j)^t)^\top 
        % =  \frac{\norm{X''}\Sigma_K^j}{\cb}
    \end{align}
    where we used $\Sigma_H^j(K) \succeq \cb I$.
    % \begin{align*}
    %     (i) \preceq \frac
    %     {\norm{X}}{c_{K,H}}\sum_{t=0}^\infty(\bbA_{K,H}^j)^t(\Sigma_H^j)((\bbA_{K,H}^j)^t)^\top =  \frac{\norm{X}\Sigma_K^j}{c_{K,H}}
    % \end{align*}
    From \Cref{lem app: cost bound}, $\norm{X}\le2C_A\nb\het\cdot H\tJ\cdot\sqrt{H\tJ/\cb}$.     
    % the norm bound on $X$ is given by
    % \[
    %     \norm{X} \leq 2\paren{\frac{H^{5/2}J^i(K)^{1/2}J^j(K)}{c_{K,H}^{3/2}}\nu(K)\het + \frac{H^4J^i(K)J^j(K)}{c_{K,H}^2}\zeta(K)\het^2}\norm{\Sigma_K^i}
    % \]
    For $D_H$, 
    \begin{align*}
        D_H 
        &= \sum_{t=0}^{H-1}\brac{(\bbA_K^i)^t\bbW\paren{(\bbA_K^i)^t - (\bbA_K^j)^t}^\top + \paren{(\bbA_K^i)^t - (\bbA_K^j)^t}\bbW(\bbA_K^j)^t} 
    \end{align*}
    Thus
    \begin{align}
        \norm{D_H} &\leq H\norm{(\bbA_K^i)^t - (\bbA_K^j)^t}\norm{\bbW}\paren{\norm{(\bbA_K^i)^t} + \norm{(\bbA_K^j)^t}} 
        \le H\cdot C_A\nb\het\cdot2\sqrt{H\tJ/\cb}
        \label{eq app: D_H}
        % \le 4H^{7/2}\tJ^2\cb^{-2}\het
        % \\
        % &\leq \frac{H^3}{c_{K,H}}\sqrt{J^i(K)J^j(K)}\nu(K)\het\cdot\norm{\bbW}\paren{\norm{(\bbA_K^i)^t} + \norm{(\bbA_K^j)^t}} && \because (b) \\
        % &\leq \frac{2H^{7/2}}{c_{K,H}^{3/2}}\sqrt{J^i(K)J^j(K)}\paren{\sqrt{J^i(K)} + \sqrt{J^j(K)}}\nu(K)\het && \because \text{\Cref{lem: cost bound}}
    \end{align}
    Substituting this into \eqref{eq app: D} gives $C_\Sigma$.

    % \begin{align*}
    %     (i) = \bbA_{K,H}^j({\Sigma}_K^i - {\Sigma}_K^j)(\bbA_{K,H}^j)^\top + X = \sum_{t=0}^\infty(\bbA_{K,H}^j)^tX((\bbA_{K,H}^j)^t)^\top
    % \end{align*}
    % where $X = \bbA_{K,H}^j{\Sigma}_K^i(\bbA_{K,H}^i - \bbA_{K,H}^j)^\top + (\bbA_{K,H}^i - \bbA_{K,H}^j){\Sigma}_K^i(\bbA_{K,H}^j)^\top + (\bbA_{K,H}^i - \bbA_{K,H}^j){\Sigma}_K^i(\bbA_{K,H}^i - \bbA_{K,H}^j)^\top$. Since $\Sigma_H^j \succeq cI$,
    % \begin{align*}
    %     {\Sigma}_K^i - {\Sigma}_K^j \preceq \frac
    %     {\norm{X}}{c}\sum_{t=0}^\infty(\bbA_{K,H}^j)^t(\Sigma_H^j)((\bbA_{K,H}^j)^t)^\top =  \frac{\norm{X}\Sigma_K^j}{c}
    % \end{align*}
    % % \tas{this is not $\bar{P}$ since we have $A$ times $A^\top$ not the other way around. Should it involve $\Sigma$ instead. Also I don't get why use $\bar{\Sigma}$ since we have $\Sigma=\bar{\Sigma}$}
    % From \Cref{lem: cost bound}, the norm bound on $X$ is given by
    % \[
    %     \norm{X} \leq 2\paren{\frac{H^{5/2}J^i(K)^{1/2}J^j(K)}{c^{3/2}}\nu(K)\het + \frac{H^4J^i(K)J^j(K)}{c^2}\zeta(K)\het^2}\norm{\Sigma_K^i}
    % \]
    % $\het$ condition yields the result in (c). 

    (d) Identically, let $X' \coloneqq \bbA_{K,H}^{j\top}\bar{P}_K^i\Delta_{A,H} + \Delta_{A,H}^\top\bar{P}_K^i\bbA_{K,H}^i$. Then $\bar{P}_K^i - \bar{P}_K^j = \sum_{l=0}^\infty((\bbA_{K,H}^j)^l)^\top X'(\bbA_{K,H}^j)^l$. 
    % where $X' = (\bbA_{K,H}^j)^\top\bar{P}_K^i(\bbA_{K,H}^i - \bbA_{K,H}^j) + (\bbA_{K,H}^i - \bbA_{K,H}^j)^\top\bar{P}_K^i(\bbA_{K,H}^j) + (\bbA_{K,H}^i - \bbA_{K,H}^j)^\top\bar{P}_K^i(\bbA_{K,H}^i - \bbA_{K,H}^j)$. 
    Since $H(\bar{Q} + K^\top RK) \succeq I$, 
    \begin{align}
        \label{eq app: P perturbation}
        \bar{P}_K^i - \bar{P}_K^j \preceq H\norm{X'}\sum_{l=0}^\infty((\bbA_{K,H}^j)^l)^\top (\bar{\bbQ} + K^\top RK)(\bbA_{K,H}^j)^l = H\norm{X'}\bar{P}_K^j
    \end{align}
    By \Cref{lem app: cost bound}, $\norm{X'}\le2C_A\nb\het\cdot\sqrt{H\tJ/\bc}\cdot\tJ/\bc$. Substituting this into \eqref{eq app: P perturbation} with \Cref{lem app: cost bound} (b) yields $C_P$. 
    % \[
    %     \norm{X'} \leq 2\paren{\frac{H^{5/2}J^i(K)^{1/2}J^j(K)}{c_{K,H}^{3/2}}\nu(K)\het + \frac{H^4J^i(K)J^j(K)}{c_{K,H}^2}\zeta(K)\het^2}\norm{\bar{P}_K^i}
    % \]
    % which yields the result in (d).
\end{proof}

% Now we reparameterize gradient $\nabla_K\sa(K)$ in \eqref{eq: gradient} with $\bar{P}_K^i$ and instantiate both the cost and perturbation bounds from previous lemmas to obtain gradient perturbation. 
\begin{lemma}[$E_K$ Perturbation]
    \label{lem app: E_K perturbation}
    \addcontentsline{toc}{subsubsection}{Lemma \protect\ref{lem: E_K perturbation}: $E_K$ perturbation}
    % Suppose $\het \leq \barhet^3$ which is defined in \Cref{lem: perturbation bound}. 
    Let $K\in\calK_{\zeta_0}$ and $i,j\in\brac{M}$. 
    Then $E_K^i$ admits the representation
    \[
        E_K^i = RK + \sum_{t=0}^{H-1}\bbB^{i\top}(\bbA_K^i)^{t\top}\bar{P}_K^i(\bbA_K^i)^{t+1}.
        % \tas{\text{if this is equal to $E^i_K$ let's not introduce new notation}}\Tesshu{first explain E_K=\bar{E}_K}
    \]
    and $\norm{E_K^i - E_K^j} \leq C_E\het$
    with $C_E$ from \Cref{tab:constants}. 
\end{lemma}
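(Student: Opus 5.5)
The argument has two parts: first derive the representation of $E_K^i$ from the $H$-step decomposition of the value matrix, then bound $E_K^i-E_K^j$ termwise with a telescoping argument.

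\textbf{Representation.} By \Cref{lem app: strucure of lifted problem} (c), $P_K^i=\sum_{t=0}^{H-1}(\bbA_K^i)^{t\top}\bar P_K^i(\bbA_K^i)^t$. Substituting into $E_K^i=(R+\bbB^{i\top}P_K^i\bbB^i)K+\bbB^{i\top}P_K^i\bbA^i$ gives $E_K^i=RK+\bbB^{i\top}P_K^i(\bbA^i+\bbB^iK)=RK+\bbB^{i\top}P_K^i\bbA_K^i$. Expanding the sum yields the stated form directly. The point of this form is that it involves only the nondegenerate $\bar P_K^i$, for which \Cref{lem app: perturbation bound} (d) gives the perturbation bound $C_P\het$.

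\textbf{Perturbation.} The $RK$ terms cancel in $E_K^i-E_K^j$. For each $t\le H-1$, I would write the difference of the triple products $\bbB^{i\top}(\bbA_K^i)^{t\top}\bar P_K^i(\bbA_K^i)^{t+1}$ and $\bbB^{j\top}(\bbA_K^j)^{t\top}\bar P_K^j(\bbA_K^j)^{t+1}$ as a telescoping sum of four terms, changing one factor at a time:
\begin{enumerate}[label=(\roman*)]
\item $\bbB^i-\bbB^j$, with norm $\le\het$;
\item $(\bbA_K^i)^t-(\bbA_K^j)^t$, with norm $\le C_A\nb\het$ by \Cref{lem app: perturbation bound} (b);
\item $\bar P_K^i-\bar P_K^j$, with norm $\le C_P\het$ by (d);
\item $(\bbA_K^i)^{t+1}-(\bbA_K^j)^{t+1}$, with norm $\le C_A\nb\het$ by (b), valid since $t+1\le H$.
\end{enumerate}
In each term the remaining factors are bounded uniformly on $\calK_{\zeta_0}$ using \Cref{cor:uniform} and \Cref{lem app: cost bound}:
\begin{itemize}
\item $\norm{\bbB^i}\le\tau_B$;
\item $\norm{(\bbA_K^{\cdot})^s}\le\sqrt{H\tJ/\cb}$;
\item $\norm{\bar P_K^{\cdot}}\le\tJ/\cb$.
\end{itemize}
Summing over the $H$ values of $t$ gives a bound of the form $H\tau_B\cdot\mathrm{poly}(H,\tJ,\cb^{-1})\,\het$.

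\textbf{Collecting constants.} I would then substitute $\nb=2\sqrt{\tJ/\cb}$, $C_A=H^2\tJ/\cb$ and $C_P=8H^{7/2}\tJ^4/\cb^4$, and use $\tJ\ge1$, $\cb\le1$ and $\tau_B\ge1$ to dominate every term by the largest monomial. The dominant contribution is the $\bar P$ term:
\[
H\tau_B\cdot\frac{H\tJ}{\cb}\cdot C_P\het\approx 8\tau_B H^{11/2}\tJ^5\cb^{-5}\,\het .
\]
The $\bbA$-power terms contribute roughly $H\tau_B\cdot 2\sqrt{H\tJ/\cb}\cdot(\tJ/\cb)\cdot C_A\nb$, which is lower order. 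Absorbing everything then gives the stated $C_E=13\tau_BH^{11/2}\tJ^5/\cb^5$.

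\textbf{Main obstacle.} The only delicate step is this exponent bookkeeping. The chosen powers must dominate each of the four term types, because $\cb$ is small and $\tJ$ is large. If the numerical constant $13$ does not close, I would fall back on the crude bound $\sqrt{H\tJ/\cb}\le H\tJ/\cb$ to consolidate the terms.
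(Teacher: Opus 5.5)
Your proposal is correct and takes essentially the same route as the paper: the same representation obtained from $P_K^i=\sum_{t=0}^{H-1}(\bbA_K^i)^{t\top}\bar P_K^i(\bbA_K^i)^t$, and the same termwise perturbation bound using parts (b) and (d) of the perturbation-bound lemma together with the uniform bounds. The only difference is bookkeeping: the paper first bounds the middle product $\tilde P_t^i=(\bbA_K^i)^{t\top}\bar P_K^i(\bbA_K^i)^{t+1}$ and then peels off $\bbB^i$, whereas you telescope all four factors at once. Your constants close exactly, with a per-$t$ total of $(8+2+2+1)\,\tau_B H^{9/2}\tilde J^5\bar c^{-5}\het$ before summing over $t$, so no fallback bound is needed.
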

\begin{proof}
    From \Cref{lem app: strucure of lifted problem} (c), $P_K^i = \sum_{t=0}^{H-1}(\bbA_K^i)^{t\top}\bar{P}_K^i(\bbA_K^i)^t$.
    Applying this to $E_K^i$ yields
    \[
        E_K^i = RK + \bbB^{i\top} P_K^i \bbA_K^i = RK + \sum_{t=0}^{H-1}\bbB^{i\top}(\bbA_K^i)^{t\top}\bar{P}_K^i(\bbA_K^i)^{t+1}
    \]
    Let $\Tilde{P}_t^i \triangleq (\bbA_K^i)^{t\top}\bar{P}_K^i(\bbA_K^i)^{t+1}$, so that $\|\tilde{P}_t^i\|\le\|(\bbA_K^i)^t\|\|(\bbA_K^i)^{t+1}\|\|\bar{P}_K^i\| \leq H\tJ^2/\cb^2$. By \Cref{lem app: perturbation bound} (b) (d), 
    \[
        \norm{\Tilde{P}_t^i - \Tilde{P}_t^j} \le C_A\nb\het\cdot(\|\bar{P}_K^i(\bbA_K^i)^{t+1}\|+\|(\bbA_K^j)^t\bar{P}_K^j\|) + C_P\het\cdot\|(\bbA_K^i)^{t+1}\|\|(\bbA_K^j)^t\| \le 12H^{9/2}\tJ^5\cb^{-5}\het
    \]
    Then it follows that $\|\bbB^{i\top}\tilde{P}_t^i - \bbB^{j\top}\tilde{P}_t^j\| \le \het\|\tilde{P}_t^i\| + \|\bbB^{j\top}\|\|\tilde{P}_t^i - \tilde{P}_t^j\| \le 13\tau_BH^{9/2}\tJ^5\cb^{-5}\het$. 
    Since the $RK$ terms cancel, by summing over $0\le t<H$ gives $\|E_K^i - E_K^j\| \le H\|\bbB^{i\top}\tilde{P}_t^i - \bbB^{j\top}\tilde{P}_t^j\| \le C_E\het$.     
\end{proof}

\begin{lemma}[Gradient Perturbation]
    \label{lem app: gradient perturbation}
    \addcontentsline{toc}{subsubsection}{Lemma \protect\ref{lem: gradient perturbation}: Gradient Perturbation}
    Let $K\in\calK_{\zeta_0}$. For every $i\in\brac{M}$, 
    \begin{align*}
        \sum_{j=1,j\neq i}^M\norm{\nabla J^i(K) - \nabla J^j(K)}_F \leq M\paren{C_{G,1} + C_{G,2}\norm{\nabla J^i(K)}_F}\het
    \end{align*}
    with $C_{G,1}=2H\tJ C_E$ and $C_{G,2} = 2C_\Sigma/\cb$ as in \Cref{tab:constants}. 
    % where
    % \begin{align*}
    %     M_1^i &= \frac{96\tau_BM\bfs^4H^{13/2}\sa(K)^5}{c_{K,H}^5}J^i(K),\qquad M_2 = \frac{12M\bfs H^{7/2}\sa(K)^2}{c_{K,H}^3}J^i(K)
    % \end{align*}
\end{lemma}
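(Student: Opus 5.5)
Since $\nabla J^i(K) = 2E_K^i\Sigma_K^i$, I will bound each pairwise difference by splitting it as
\[
    \nabla J^i(K) - \nabla J^j(K) = 2(E_K^i - E_K^j)\Sigma_K^j + 2E_K^i(\Sigma_K^i - \Sigma_K^j).
\]
The first term is controlled by \Cref{lem app: E_K perturbation} together with \Cref{lem app: cost bound} (a) and \Cref{cor:uniform}. These give $\norm{\Sigma_K^j}\le H\tJ$ uniformly on $\calK_{\zeta_0}$. Passing to the Frobenius norm costs a factor that I will absorb, either by bounding $\norm{E}_F$ through the operator bound on a $\du\times H\dx$ matrix, or by reading $C_E$ as already covering the Frobenius version, since $\Sigma^j_K$ is controlled in Frobenius norm by (a). The result is $2\norm{E_K^i - E_K^j}\,\norm{\Sigma_K^j}_F \le 2H\tJ C_E\het = C_{G,1}\het$, the absolute part.

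For the second term I will not bound $E_K^i$ directly. Instead I will re-express it through the gradient itself, $E_K^i = \tfrac12\nabla J^i(K)(\Sigma_K^i)^{-1}$, which is legitimate because $\Sigma_K^i\succeq\Sigma_H^i(K)\succeq \cb I$ by \Cref{lem app: cost bounds on the FIR gain} (c) and \Cref{cor:uniform} (c). This gives
\[
    \norm{2E_K^i(\Sigma_K^i - \Sigma_K^j)}_F \le \norm{\nabla J^i(K)}_F\,\norm{(\Sigma_K^i)^{-1}}\,\norm{\Sigma_K^i - \Sigma_K^j} \le \frac{C_\Sigma}{\cb}\norm{\nabla J^i(K)}_F\,\het,
\]
using \Cref{lem app: perturbation bound} (c). This is within $C_{G,2}=2C_\Sigma/\cb$. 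The extra factor $2$ leaves slack, for instance to handle a symmetric lower-bound side of the Lyapunov comparison in (c): \eqref{eq app: D} only gives a one-sided Loewner bound, and its operator norm must come from applying the same argument with $i,j$ swapped.

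Adding the two bounds gives, for each $j\ne i$, $\norm{\nabla J^i(K) - \nabla J^j(K)}_F \le (C_{G,1} + C_{G,2}\norm{\nabla J^i(K)}_F)\het$. Summing over the at most $M-1\le M$ indices $j\neq i$ yields the claim. The main obstacle is the norm bookkeeping. Specifically, I must check that \Cref{lem app: perturbation bound} (c) indeed yields a two-sided operator-norm bound. The plan is to apply \eqref{eq app: D} in both directions and use $\norm{\Sigma_K^j}\le H\tJ$, which is what the constant $C_\Sigma$ encodes. I also need to ensure that the Frobenius conversion of $E_K^i-E_K^j$ stays within the stated constant, invoking the slack in $C_{G,1}$ if necessary.
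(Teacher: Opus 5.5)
Your proposal is correct and follows the paper's proof. It uses the same split $\nabla J^i(K)-\nabla J^j(K)=2(E_K^i-E_K^j)\Sigma_K^j+2E_K^i(\Sigma_K^i-\Sigma_K^j)$, the same rewriting $2E_K^i=\nabla J^i(K)(\Sigma_K^i)^{-1}$ via $\lambda_{\min}(\Sigma_K^i)\ge\bar c$, and the same appeals to \Cref{lem app: E_K perturbation}, \Cref{lem app: perturbation bound}~(c) and \Cref{lem app: cost bound}~(a), giving $C_\Sigma/\bar c\le C_{G,2}$ and $2H\tilde J C_E=C_{G,1}$. The bookkeeping you flag is harmless: $\|(E_K^i-E_K^j)\Sigma_K^j\|_F\le\|E_K^i-E_K^j\|\,\|\Sigma_K^j\|_F$ needs no Frobenius bound on $E$, and the exact series $\Sigma_K^i-\Sigma_K^j=\sum_{t\ge0}(\mathbb{A}_{K,H}^j)^t(X+D_H)(\mathbb{A}_{K,H}^j)^{t\top}$ in \eqref{eq app: D} gives both Loewner sides at once, so no $i\leftrightarrow j$ swap is needed.
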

\begin{proof}
    $\frac{1}{2}(\nabla J^i - \nabla J^j) = E_K^i(\Sigma_K^i - \Sigma_K^j) + (E_K^i-E_K^j)\Sigma_K^j$ and $E_K^i\Sigma_K^i = \frac{1}{2}\nabla J^i$. Hence
    \begin{align*}
        \norm{\nabla J^i - \nabla J^j}_F \le \norm{\nabla  J^i}_F\norm{(\Sigma_K^i)^{-1}(\Sigma_K^i - \Sigma_K^j)} + 2\norm{E_K^i - E_K^j}\norm{\Sigma_K^j}_F \le \paren{\frac{C_\Sigma}{\cb}\norm{\nabla J^i}_F + 2H\tJ\cdot C_E}\het
    \end{align*}
    using $\lmin(\Sigma_K^i)\ge\cb$, \Cref{lem app: cost bound} (a), \Cref{lem app: perturbation bound} (c) and \Cref{lem app: E_K perturbation}. Summing over $j\neq i$ concludes the proof. 
    % To bound $(i)$, observe that $\Sigma_K^i - \Sigma_K^j$ can be bounded as follows from \Cref{lem: perturbation bound}
    % \[
    %     \Sigma_K^i - \Sigma_K^j \preceq \alpha(K)\Sigma_K^i + (\Sigma_H^i(K) - \Sigma_H^j(K)) \preceq \brac{\alpha(K) + \beta(K)}\Sigma_K^i
    % \]
    % where $\alpha(K) = \paren{\frac{8H^{7/2}}{c_{K,H}^3}J^i(K)J^j(K)^2\het}$ and $\beta(K) = \norm{\Sigma_H^i(K) - \Sigma_H^j(K)}/c_{K,H}$. 
    % Thus we get
    % \[
    %     (i) = \sqrt{\norm{(\Sigma_K^i)^{-1}(\Sigma_K^i - \Sigma_K^j)^2(\Sigma_K^i)^{-1}}} = \alpha(K) + \beta(K)
    % \]
    % With \Cref{lem: cost bound} and \Cref{lem: E_K perturbation}, it follows that
    % \begin{align*}
    %     \eqref{eq: gradient diff} \leq \frac{12H^{7/2}}{c^3}J^i(K)J^j(K)^2\norm{\nabla_K J^i(K)}_F\het + \frac{96\tau_BH^{13/2}}{c_{K,H}^5}J^i(K)^{5/2}J^j(K)^{7/2}\het
    % \end{align*}    
    % \begin{align*}
    %      \norm{\nabla J^i - \nabla J^j} \leq& HJ^j\paren{\frac{J^i(K)}{c} + \frac{2\tau_BJ^i(K)^{1/2}J^j(K)}{c^{3/2}} + \frac{8\tau_BH^4J^i(K)^2J^j(K)^2}{c^4}}\het \\
    %      &+ \norm{\nabla J^i}\frac{8H^{7/2}}{c^3}J^i(K)J^j(K)^2\het
    % \end{align*}
    % Applying $J^i(K)\leq\bfs\sa(K)$ to $i$th and $j$ cost and summing up from $j = 1$ to $M$ concludes the proof. 
\end{proof}

\begin{lemma}[Gradient bound]
    \label{lem app: gradient bound}
    \addcontentsline{toc}{subsubsection}{Lemma \protect\ref{lem: gradient bound}: Gradient bound}
    Suppose $\het\leq\barhet^3\coloneqq (2C_{G,2})^{-1}$ (\Cref{tab:constants}).  
    % \tas{I am quite confused that the definition of $\grad$ depends on $K$. The results seems a bit cyclical since both sides depend on $K$}
    % /$\grad \triangleq \norm{\nabla_K \sa(K)}_F$. 
    Then for every $K\in\calK_{\zeta_0}$ and $i\in\brac{M}$, 
    \begin{align}
        \label{eq: grad-i-ub}
        \norm{\nabla J^i(K)}_F \leq \Lambda(K) \coloneqq 2\grad(K) + \Lambda_1\het
    \end{align}
    with $\Lambda_1 = 2C_{G,1}$ (\Cref{tab:constants}). 
    Furthermore, it holds that
    \begin{align}
        \frac{1}{M}\sum_{i=1}^M\norm{\nabla J^i(K)}_F \leq \Lambda(K). \label{eq: Lambda}
    \end{align}
\end{lemma}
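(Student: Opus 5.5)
We will derive the bound from the gradient perturbation estimate of \Cref{lem app: gradient perturbation}. The approach is to write each individual gradient as the sample-average gradient plus a correction, and then absorb the term proportional to $\norm{\nabla J^i(K)}_F$ into the left-hand side using the smallness of $\het$.

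Fix $K\in\calK_{\zeta_0}$ and $i\in\brac{M}$. Since $\nabla\sa(K)=\frac{1}{M}\sum_j\nabla J^j(K)$, we have the exact identity
\[
    \nabla J^i(K) = \nabla\sa(K) + \frac{1}{M}\sum_{j\neq i}\paren{\nabla J^i(K) - \nabla J^j(K)}.
\]
Taking Frobenius norms and using the triangle inequality gives
\[
    \norm{\nabla J^i(K)}_F \le \grad(K) + \frac{1}{M}\sum_{j\neq i}\norm{\nabla J^i(K)-\nabla J^j(K)}_F .
\]
We then apply \Cref{lem app: gradient perturbation}, whose hypotheses require only $K\in\calK_{\zeta_0}$; the uniform constants behind it come from \Cref{cor:uniform}, which is assumed to be in force. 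This bounds the sum by $M(C_{G,1}+C_{G,2}\norm{\nabla J^i(K)}_F)\het$, so
\[
    \norm{\nabla J^i(K)}_F \le \grad(K) + C_{G,1}\het + C_{G,2}\het\,\norm{\nabla J^i(K)}_F .
\]

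The absorption step uses $\het\le\barhet^3=(2C_{G,2})^{-1}$, which gives $C_{G,2}\het\le 1/2$. The coefficient of $\norm{\nabla J^i(K)}_F$ on the left is then at least $1/2$, and since that norm is finite for stabilizing $K$, rearranging yields
\[
    \norm{\nabla J^i(K)}_F \le 2\grad(K) + 2C_{G,1}\het = 2\grad(K) + \Lambda_1\het = \Lambda(K).
\]
This bound holds uniformly in $i$, so averaging over $i\in\brac{M}$ immediately gives \eqref{eq: Lambda}.

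No step here is a genuine obstacle. The only points to check are that the constants $C_{G,1}$ and $C_{G,2}$ supplied by \Cref{lem app: gradient perturbation} are $K$-independent on $\calK_{\zeta_0}$, which holds by \Cref{cor:uniform}, and that the factor $M$ in that lemma cancels against the $1/M$ in the decomposition.
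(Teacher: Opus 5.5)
Your proposal is correct and matches the paper's proof step for step. It uses the same decomposition $\nabla J^i(K)=\nabla\sa(K)+\frac{1}{M}\sum_{j\neq i}(\nabla J^i(K)-\nabla J^j(K))$, the same gradient perturbation bound, the same absorption using $C_{G,2}\het\le 1/2$, and the same averaging over $i$; like the paper, you rely on the uniform constants of \Cref{cor:uniform} being in force, which tacitly needs $\het\le\min\{\barhet^1,\barhet^2\}$ and the initial-policy assumption beyond the stated $\het\le\barhet^3$.
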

\begin{proof}
$\nabla J^i = \nabla \sa + \frac{1}{M}\sum_{j\neq i}(\nabla J^i - \nabla J^j)$. Hence from \Cref{lem app: gradient perturbation}, $\|\nabla J^i\|_F \le \grad(K) + (C_{G,1} + C_{G,2}\|\nabla J^i\|)\het$. By the choice of $\barhet^3$, $C_{G,2}\het\le 1/2$. Thus by rearranging, we get \eqref{eq: grad-i-ub}. 
% \begin{align*}
%     \norm{\nabla_K J^i(K)}_F &= \norm{\nabla_K J^i(K) + \frac{1}{M}\sum_{j=1, j\neq i}^M\paren{\nabla_K J^j(K) - \nabla_K J^j(K)}}_F \\
%     &= \norm{\frac{1}{M}\sum_{i=1}^M\nabla_KJ^i(K) + \frac{1}{M}\sum_{j=1, j\neq i}^M\paren{\nabla_K J^i(K) - \nabla_K J^j(K)}}_F \\
%     &\leq \norm{\nabla_K\sa(K) + \frac{1}{M}\sum_{j=1, j\neq i}^M\paren{\nabla_K J^i(K) - \nabla_K J^j(K)}}_F \\
%     &\leq \frac{1}{M}\paren{M \grad + \paren{M^i_1 + M_2\norm{\nabla_KJ^i(K)}_F}\het}.
% \end{align*}
% Note that the second term is characterized by a function of the heterogeneity with the coefficients defined in \Cref{lem: gradient perturbation}.
% \tas{in many parts I see both $\nabla$ and $\nabla_K$. Polish notation}
% By grouping the term, we get
% \begin{align*}
%     \norm{\nabla J^i(K)}_F \leq \frac{M \grad + M^i_1\het}{M - M_2\het} \leq \frac{2}{M}\paren{M \grad + M^i_1\het}.
% \end{align*}
% which matches \eqref{eq: grad-i-ub}.
% The second inequality in the above bound follows from the fact that for $\het \leq \frac{c_{K,H}^3}{24H^{7/2}\bfs^2\sa(K)^3}$, 
% \begin{align*}
%     &M_2\het < \frac{M}{2}.
% \end{align*}
Taking the average of \eqref{eq: grad-i-ub} from $i = 1$ to $M$ yields \eqref{eq: Lambda}.
\end{proof}

\subsection{Perturbation across gains}
\label{s app: perturbation across gains}
\begin{lemma}[Gain perturbation]
    \label{lem app: gain perturbation}
    Let $K, K'\in\calK_{\zeta_0}$, $\Delta\coloneqq K'-K$, and $i\in\brac{M}$. Then with $C_A, C_\gamma$ from \Cref{tab:constants}, it holds that
    \begin{enumerate}[label=(\alph*)]
        \item $\norm{(\bbA_{K'}^i)^t - (\bbA_K^i)^t}\le\tau_BC_A\norm{\Delta}_F$ for $0\le t\le H$
        \item $\norm{(\Sigma_K^i)^{-1}(\Sigma_{K'}^i - \Sigma_K^i)}\le C_\gamma\norm{\Delta}_F$
        \item $\norm{\Sigma_{K'}^i - \Sigma_K^i}_F\le H\tJ C_\gamma\norm{\Delta}_F$.
    \end{enumerate}
\end{lemma}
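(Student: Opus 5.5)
Part (a) is proved by telescoping, exactly as in \Cref{lem app: perturbation bound} (b), but in the gain direction. Since $\bbA_{K'}^i - \bbA_K^i = \bbB^i\Delta$, we have $\norm{\bbA_{K'}^i - \bbA_K^i}\le\tau_B\norm{\Delta}_F$. Writing
\[
(\bbA_{K'}^i)^t - (\bbA_K^i)^t = \sum_{r=0}^{t-1}(\bbA_{K'}^i)^{t-1-r}\,\bbB^i\Delta\,(\bbA_K^i)^r,
\]
we bound every power by $\sqrt{H\tJ/\cb}$. This uses \Cref{lem app: cost bound} (c), which applies to both $K$ and $K'$ because each lies in $\calK_{\zeta_0}$ and so satisfies $J^i\le\tJ$ and $c\ge\cb$ by \Cref{cor:uniform}. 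With $t\le H$ this gives $\tau_B H\cdot H\tJ/\cb\,\norm{\Delta}_F = \tau_BC_A\norm{\Delta}_F$.

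For (b) I will reuse the $H$-step Lyapunov comparison from \Cref{lem app: perturbation bound} (c), with the roles of systems replaced by gains. Set $D\coloneqq\Sigma_{K'}^i-\Sigma_K^i$, $D_H\coloneqq\Sigma_H^i(K')-\Sigma_H^i(K)$ and $\Delta_{A,H}\coloneqq\bbA_{K',H}^i-\bbA_{K,H}^i$. Then
\[
D=\bbA_{K,H}^iD(\bbA_{K,H}^i)^\top+X+D_H,\qquad X\coloneqq\bbA_{K,H}^i\Sigma_{K'}^i\Delta_{A,H}^\top+\Delta_{A,H}\Sigma_{K'}^i(\bbA_{K',H}^i)^\top,
\]
so that $D=\sum_l(\bbA_{K,H}^i)^l(X+D_H)(\bbA_{K,H}^i)^{l\top}$. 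Because $\Sigma_H^i(K)\succeq\cb I$, this yields the two-sided sandwich
\[
-\frac{\norm{X}+\norm{D_H}}{\cb}\Sigma_K^i\preceq D\preceq\frac{\norm{X}+\norm{D_H}}{\cb}\Sigma_K^i.
\]
Conjugating by $(\Sigma_K^i)^{-1/2}$ gives $\norm{(\Sigma_K^i)^{-1/2}D(\Sigma_K^i)^{-1/2}}\le(\norm{X}+\norm{D_H})/\cb$. To turn this into a bound on $(\Sigma_K^i)^{-1}D$, I pay a factor $\norm{(\Sigma_K^i)^{-1/2}}\norm{(\Sigma_K^i)^{1/2}}\le\sqrt{H\tJ/\cb}$, using \Cref{lem app: cost bound} (a) together with $\lmin(\Sigma_K^i)\ge\cb$.

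The remaining step is to bound $X$ and $D_H$. Using (a) with $t=H$, $\norm{\Sigma_{K'}^i}\le H\tJ$, and the power bounds, we get $\norm{X}\le 2\sqrt{H\tJ/\cb}\,H\tJ\,\tau_BC_A\norm{\Delta}_F$. As in \eqref{eq app: D_H}, $\norm{D_H}\le 2H\sqrt{H\tJ/\cb}\,\tau_BC_A\norm{\Delta}_F$. Collecting the powers of $H$, $\tJ$ and $\cb$ should reproduce $C_\gamma=4\tau_BH^{9/2}\tJ^{7/2}/\cb^{7/2}$. I expect this bookkeeping to be the main obstacle: getting the two-sided matrix inequality right, and making sure the conversion to the nonsymmetric quantity $(\Sigma_K^i)^{-1}D$ costs only the square-root condition-number factor rather than the full ratio $H\tJ/\cb$. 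If the square-root factor falls short, I would instead bound $\norm{(\Sigma_K^i)^{-1}}\norm{D}$ directly via \Cref{lem app: cost bound}, absorbing the extra power into looser numerical constants.

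Part (c) follows immediately from (b): $\norm{D}_F\le\norm{\Sigma_K^i}_F\,\norm{(\Sigma_K^i)^{-1}D}\le H\tJ\, C_\gamma\norm{\Delta}_F$, using \Cref{lem app: cost bound} (a) for the Frobenius norm of $\Sigma_K^i$.
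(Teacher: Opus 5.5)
Your proof is correct and follows the paper's argument: telescoping with $\bbA_{K'}^i-\bbA_K^i=\bbB^i\Delta$ for (a), unrolling the $H$-step Lyapunov difference with $\Sigma_H^i(K)\succeq\cb I$ for (b), and $\norm{\Sigma_K^i}_F\le H\tJ$ for (c). The only difference is the last step of (b). The paper bounds $\norm{D}\le(\norm{X}+\norm{D_H})H\tJ/\cb$ and then $\norm{(\Sigma_K^i)^{-1}D}\le\norm{D}/\cb$, which gives exactly $C_\gamma$; this is your fallback, and it needs no loosening of constants. Your two-sided sandwich with conjugation by $(\Sigma_K^i)^{-1/2}$ pays only $\sqrt{H\tJ/\cb}$ and gives the sharper constant $4\tau_B H^4\tJ^3/\cb^3\le C_\gamma$, so the stated bound follows either way.
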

\begin{proof}
    (a) $\bbA_{K'}^i - \bbA_K^i = \bbB^i\Delta$, thus $\|\bbA_{K'}^i - \bbA_K^i\|\le\tau_B\|\Delta\|_F$. Telescoping as in \Cref{lem app: perturbation bound} (b) gives the bound.

    (b) Let $D\coloneqq\Sigma_{K'} - \Sigma_K$ and $\Delta_{A,H} \coloneqq \bbA_{K',H}^i - \bbA_{K,H}^i$. Unrolling along $\bbA_{K'}^i$ as in \Cref{lem app: perturbation bound} (c), \eqref{eq app: D} and using $\|\Sigma_{K'}^i\|\le H\tJ$, 
    \[
        \norm{D}\le \paren{\norm{X} + \norm{\Sigma_H^i(K') - \Sigma_H^i(K)}}\frac{H\tJ}{\cb}, \quad X\coloneqq \bbA_{K,H}^i\Sigma_K^i\Delta_{A,H}^\top + \Delta_{A,H}^\top\Sigma_K^i\bbA_{K',H}^i.
    \]
    From (a) and \Cref{lem app: cost bound}, $\|X\|\le 2\tau_BC_A\|\Delta\|_F\cdot H\tJ\cdot \sqrt{H\tJ/\cb}$. Also, from (a) and the analogy to \eqref{eq app: D_H}, $\|\Sigma_H^i(K') - \Sigma_H^i(K)\| \le H\cdot \tau_BC_A\|\Delta\|_F\cdot 2\sqrt{H\tJ/\cb}$. Finally, $\|(\Sigma_K^i)^{-1} D\|\le\|D\|/\cb$ gives $C_\gamma$ in \Cref{tab:constants}. 

    (c) $\|D\|_F = \|\Sigma_K^i[(\Sigma_K^i)^{-1} D]\|_F \le \|\Sigma_K^i\|_F\|(\Sigma_K^i)^{-1} D\|$ by \Cref{lem app: cost bound} gives the bound. 
\end{proof}

\section{Gradient domination and global convergence}
\label{s app: gradient domination and global convergence}
Throughout this section, Assumptions \ref{asmp-app: simultaneous stabilization}-\ref{asmp-app: heterogeneity assumption} hold. In particular, $\het\le\min\curly{\barhet^1, \cdots, \barhet^4}$ (\Cref{tab:constants}), so \Cref{cor:uniform} and \Cref{lem app: gradient bound} are valid for every $K\in\calK_{\zeta_0}$. Also, recall $K^\star\in\calK_{\zeta_0}$ and $\nabla\sa(K^\star) = 0$ (\Cref{lem app: coercivity of sa FIR-LQR}).

\begin{lemma}[Approximate gradient domination, adapted from {\cite[{Lemma~III.3}]{fujinami2025policygradientlqrdomain}}]
    \label{lem app: approximate gradient domination}
    For every $K\in\calK_{\zeta_0}$, 
    \[
        \sa(K) - \sa(K^\star) \le \frac{1}{2\cb}\norm{\nabla\sa(K)}_F^2 + \frac{2}{\cb}\norm{\sfR(K,K^\star)}_F^2, \quad \sfR(K,K^\star)\coloneqq \frac{1}{M}\sum_{i=1}^ME_K^i(\Sigma_{K^\star}^i - \Sigma_K^i)
    \]
\end{lemma}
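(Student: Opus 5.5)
We start from the cost-difference identity \eqref{eq app: cost difference} applied to each system $i$ with base point $K$ and perturbation $\Delta = K^\star - K$:
\[
J^i(K^\star) - J^i(K) = \trace\paren{\Sigma_{K^\star}^i\brac{2\Delta^\top E_K^i + \Delta^\top(R+\bbB^{i\top}P_K^i\bbB^i)\Delta}}.
\]
Since $R+\bbB^{i\top}P_K^i\bbB^i\succeq I$ and $\Sigma_{K^\star}^i\succeq 0$, dropping part of the quadratic term only increases the right-hand side if we keep $\trace(\Sigma_{K^\star}^i\Delta^\top\Delta)$. Averaging over $i$ and rearranging gives
\[
\sa(K)-\sa(K^\star) \le -\frac{1}{M}\sum_i\trace\paren{\Sigma_{K^\star}^i\brac{2\Delta^\top E_K^i+\Delta^\top\Delta}}.
\]

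Next we split $\Sigma_{K^\star}^i = \Sigma_K^i + (\Sigma_{K^\star}^i-\Sigma_K^i)$ in the linear term only. The linear term becomes
\[
-2\trace\paren{\Delta^\top \tfrac{1}{M}\textstyle\sum_i E_K^i\Sigma_K^i} - 2\trace\paren{\Delta^\top \sfR(K,K^\star)}
= -\trace(\Delta^\top\nabla\sa(K)) - 2\trace(\Delta^\top\sfR).
\]
Here we use $\nabla J^i = 2E_K^i\Sigma_K^i$ and the symmetry of the covariances. For the quadratic term, $\Sigma_{K^\star}^i\succeq\Sigma_H^i(K^\star)\succeq\cb I$ by \Cref{lem app: cost bounds on the FIR gain}(c) combined with \Cref{cor:uniform}, which applies because $K^\star\in\calK_{\zeta_0}$ by \Cref{cor:uniform}(a). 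Hence $\frac1M\sum_i\trace(\Sigma^i_{K^\star}\Delta^\top\Delta)\ge\cb\|\Delta\|_F^2$.

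This yields
\[
\sa(K)-\sa(K^\star)\le \langle -\Delta,\nabla\sa(K)+2\sfR\rangle - \cb\|\Delta\|_F^2 .
\]
Maximizing the right-hand side over $\Delta$, i.e.\ completing the square, gives
\[
\sa(K)-\sa(K^\star)\le \frac{1}{4\cb}\|\nabla\sa(K)+2\sfR\|_F^2 \le \frac{1}{2\cb}\|\nabla\sa(K)\|_F^2+\frac{2}{\cb}\|\sfR\|_F^2,
\]
where the last step uses $\|a+b\|^2\le2\|a\|^2+2\|b\|^2$.

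The main point of care is the sign and transpose bookkeeping. Writing $\trace(\Sigma_{K^\star}\Delta^\top E_K)=\langle \Delta, E_K\Sigma_{K^\star}\rangle$ uses the symmetry of $\Sigma$. The other point is justifying the uniform excitation bound at $K^\star$ rather than at $K$: this is exactly what the sublevel-set membership of $K^\star$ provides. No bound at $K$ beyond stabilizability is needed.
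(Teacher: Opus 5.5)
Your proposal is correct and is essentially the paper's argument. The paper cites the cost-difference/completing-the-square proof of \cite[Lemma~III.3]{fujinami2025policygradientlqrdomain} with $\lmin(\Sigma_{K^\star}^i)\ge\cb$ (from \Cref{cor:uniform}, since $K^\star\in\calK_{\zeta_0}$) to obtain $\sa(K)-\sa(K^\star)\le\frac{1}{\cb}\norm{T}_F^2$ with $T=\frac{1}{M}\sum_i E_K^i\Sigma_{K^\star}^i=\frac12\nabla\sa(K)+\sfR(K,K^\star)$, then splits $T$; you unroll that cited lemma and split before completing the square, which gives the identical bound because $\nabla\sa(K)+2\sfR(K,K^\star)=2T$.
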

\begin{proof}
    Following the proof of \cite[Lemma~III.3]{fujinami2025policygradientlqrdomain} with $\lmin(\Sigma_{K^\star}^i)\ge\cb$ (\Cref{cor:uniform}), 
    \[
        \sa(K) - \sa(K^\star) \le \frac{1}{\cb}\norm{T}_F^2, \quad T\coloneqq \frac{1}{M}\sum_{i=1}^ME_K^i\Sigma_{K^\star}^i. 
    \]
    Since $T = \frac{1}{M}\sum_{i=1}^ME_K^i\Sigma_{K}^i + \sfR(K,K^\star) = \frac{1}{2}\nabla\sa(K) + \sfR(K,K^\star)$, $\|T\|_F^2\le\frac{1}{2}\|\nabla\sa(K)\|_F^2 + 2\|\sfR\|_F^2$. 
\end{proof}

\begin{lemma}[Quadratic growth of $\sa$]
    \label{lem app: quadratic growth of sa}
    Suppose $\het\le\barhet^4$ (\Cref{tab:constants}). Then for every $K\in\calK_{\zeta_0}$, 
    \begin{align}
        \sa(K) - \sa(K^\star) \ge \frac{\cb}{2}\norm{K-K^\star}_F^2.
        \label{eq app: quadratic growth of sa}
    \end{align}
\end{lemma}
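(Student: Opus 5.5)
The plan is to write the sample-average gap as an average of the single-system cost differences \eqref{eq app: cost difference}, taken with base point $K^\star$ and perturbed gain $K$. Set $\Delta\coloneqq K-K^\star$. For each $i\in\brac{M}$, since $K,K^\star\in\calK_{\zeta_0}\subseteq\calK^i$,
\[
J^i(K)-J^i(K^\star)=\trace\paren{\Sigma_K^i\brac{2\Delta^\top E_{K^\star}^i+\Delta^\top(R+\bbB^{i\top}P_{K^\star}^i\bbB^i)\Delta}}.
\]
Unlike the single-system case (\Cref{lem: gradient dominance of lqr} (b)), $E_{K^\star}^i$ need not vanish. Only the average stationarity $\nabla\sa(K^\star)=\frac{2}{M}\sum_i E_{K^\star}^i\Sigma_{K^\star}^i=0$ holds (\Cref{lem app: coercivity of sa FIR-LQR}). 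The proof therefore splits into a quadratic term, which supplies curvature, and a linear term, which must be absorbed using small heterogeneity.

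\emph{Quadratic term.} We have $R+\bbB^{i\top}P_{K^\star}^i\bbB^i\succeq I$ because $R=I$. By \Cref{cor:uniform} (c) and \Cref{lem app: cost bounds on the FIR gain} (c), $\Sigma_K^i\succeq\cb I$ for $K\in\calK_{\zeta_0}$. Hence each quadratic term is at least $\cb\norm{\Delta}_F^2$, and so is their average.

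\emph{Linear term.} Write $\Sigma_K^i=\Sigma_{K^\star}^i+(\Sigma_K^i-\Sigma_{K^\star}^i)$. The part with $\Sigma_{K^\star}^i$ averages to $\langle\Delta,\nabla\sa(K^\star)\rangle=0$. The remaining part is
\[
\frac{2}{M}\sum_i\trace\paren{(\Sigma_K^i-\Sigma_{K^\star}^i)\Delta^\top E_{K^\star}^i},
\]
which is bounded in absolute value by $\frac{2}{M}\sum_i\norm{E_{K^\star}^i}_F\norm{\Sigma_K^i-\Sigma_{K^\star}^i}_F\norm{\Delta}_F$. Each factor is then controlled as follows.
\begin{itemize}
\item $\norm{E_{K^\star}^i}_F\le\norm{\nabla J^i(K^\star)}_F/(2\cb)$, using $\nabla J^i=2E^i\Sigma^i$ and $\lmin(\Sigma_{K^\star}^i)\ge\cb$.
\item By \Cref{lem app: gradient bound} (valid since $\barhet^3$ is assumed) and $\grad(K^\star)=0$, we get $\norm{\nabla J^i(K^\star)}_F\le\Lambda(K^\star)=\Lambda_1\het$.
\item By \Cref{lem app: gain perturbation} (c), applicable since both $K$ and $K^\star$ lie in $\calK_{\zeta_0}$ by \Cref{cor:uniform} (a), $\norm{\Sigma_K^i-\Sigma_{K^\star}^i}_F\le H\tJ C_\gamma\norm{\Delta}_F$.
\end{itemize}
Combining these, the linear term is at least $-\frac{\Lambda_1 H\tJ C_\gamma}{\cb}\het\,\norm{\Delta}_F^2$.

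\emph{Conclusion.} With $\het\le\barhet^4=\cb^2/(2\Lambda_1H\tJ C_\gamma)$, the negative contribution is at most $\frac{\cb}{2}\norm{\Delta}_F^2$. Subtracting it from the quadratic lower bound $\cb\norm{\Delta}_F^2$ gives \eqref{eq app: quadratic growth of sa}.

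The main obstacle is the linear term: the per-system first-order terms do not vanish. The key step is to extract the averaged stationarity first, so that what remains is a product of two small factors. One factor, $\norm{E_{K^\star}^i}$, is $O(\het)$ because the SA gradient vanishes at $K^\star$. The other, $\norm{\Sigma_K^i-\Sigma_{K^\star}^i}$, is $O(\norm{\Delta})$. Their product is second order in $\Delta$ and carries a factor of $\het$, which is exactly what $\barhet^4$ is calibrated to absorb.
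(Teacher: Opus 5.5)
Your proposal is correct and matches the paper's proof: average the cost-difference identity at base point $K^\star$, bound the quadratic term below by $\cb\norm{\Delta}_F^2$, and subtract the vanishing term $\langle \Delta, \nabla\sa(K^\star)\rangle$ from the linear term. The remainder is then controlled by $\norm{E_{K^\star}^i}\le \Lambda_1\het/(2\cb)$ (the gradient bound at $K^\star$) and $\norm{\Sigma_K^i-\Sigma_{K^\star}^i}_F\le H\tJ C_\gamma\norm{\Delta}_F$ (the gain-perturbation bound), which $\barhet^4$ absorbs. The only cosmetic difference is that you keep the per-$i$ sum where the paper pulls out $\max_i\norm{E_{K^\star}^i}$.
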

\begin{proof}
    Let $\Delta\coloneqq K - K^\star$. \eqref{eq app: cost difference} with base point $K^\star$ averaged over $i\in\brac{M}$ gives
    \[
        \sa(K) - \sa(K^\star) = S + \frac{1}{M}\sum_{i=1}^M\trace\paren{\Sigma_K^i\Delta^\top(R + \bbB^{i^\top}P_{K^\star}^i\bbB^i)\Delta} \ge S + \cb\norm{\Delta}_F^2, \quad S \coloneqq \frac{2}{M}\sum_{i}\trace\paren{\Sigma_K^i\Delta^\top E_{K^\star}^i}
    \]
    where the quadratic term is bound using $R = I$ and $\lmin(\Sigma_K^i)\ge\cb$ from \Cref{cor:uniform}. Now note that $\nabla\sa(K^\star) = \frac{2}{M}\sum_{i=1}^ME_{K^\star}^i\Sigma_{K^\star}^i = 0$. Then using Cauchy-Schwarz inequality, 
    \[
        S = \frac{2}{M}\sum_{i}\trace\paren{(\Sigma_K^i - \Sigma_{K^\star}^i)\Delta^\top E_{K^\star}^i} \ge -\max_i\norm{E_{K^\star}^i}\cdot\frac{2}{M}\sum_{i}\norm{\Sigma_K^i - \Sigma_{K^\star}^i}_F\norm{\Delta}_F.
    \]
    By \Cref{lem app: gradient bound}, $\grad(K^\star)=0$, and $E_{K^\star}^i = \frac{1}{2}\nabla J^i(K^\star)(\Sigma_{K^\star}^i)^{-1}$; $\|\nabla J^i(K^\star)\|_F\le \Lambda_1\het$ and $\|E_{K^\star}^i\|\le\Lambda_1\het/2\cb$. Also, by \Cref{lem app: gain perturbation} (c), $\|\Sigma_K^i-\Sigma_{K^\star}^i\|_F\le H\tJ C_\gamma\|\Delta\|_F$. Hence $S\ge-(\Lambda_1H\tJ C_\gamma /\cb)\het\|\Delta\|_F^2 \ge -\cb/2\|\Delta\|_F^2$ by the choice of $\barhet^4$ (\Cref{tab:constants}), and \eqref{eq app: quadratic growth of sa} follows. 
\end{proof}

\begin{lemma}[Gradient domination of $\sa$]
    \label{lem app: gradient domination of sa}
    Define
    \begin{align}
        &\bargrad \coloneqq \frac{\cb}{8C_\gamma}, \quad \fgrad \coloneqq \frac{1}{\bargrad} \label{eq app: bargrad} \\
        &\barhet \coloneqq \min\curly{\barhet^1, \barhet^2, \barhet^3, \barhet^4, \frac{\cb}{4C_\gamma\Lambda_1}}, \quad \fhet \coloneq \frac{1}{\barhet} \label{eq app: barhet}
    \end{align}
    with all constants from \Cref{tab:constants}. Then $\fgrad, \fhet$ are explicit, finite, positive functions of $(H,\tau_B,\Jbs)$, and under Assumptions \ref{asmp-app: simultaneous stabilization}-\ref{asmp-app: heterogeneity assumption}, every $K\in\calK_{\zeta_0}$ with $\grad(K)\le\bargrad$ satisfies
    \begin{align}
        \sa(K) - \sa(K^\star) \le \frac{1}{\cb}\norm{\nabla\sa(K)}_F^2. \label{eq app: gradient domination of sa}
    \end{align}
\end{lemma}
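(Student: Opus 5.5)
We start from the approximate gradient domination of \Cref{lem app: approximate gradient domination}, which gives
\[
\sa(K)-\sa(K^\star)\le \frac{1}{2\cb}\grad(K)^2+\frac{2}{\cb}\norm{\sfR(K,K^\star)}_F^2 .
\]
The task is then to show that the remainder is a small fraction of the suboptimality gap. Concretely, we aim for $\frac{2}{\cb}\norm{\sfR}_F^2\le\frac12(\sa(K)-\sa(K^\star))$. Rearranging then yields $\sa(K)-\sa(K^\star)\le\frac{1}{\cb}\grad(K)^2$, which is exactly \eqref{eq app: gradient domination of sa}.

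To bound $\sfR$, write $E_K^i=\frac12\nabla J^i(K)(\Sigma_K^i)^{-1}$. This gives
\[
\sfR(K,K^\star)=\frac{1}{2M}\sum_i\nabla J^i(K)\,(\Sigma_K^i)^{-1}(\Sigma_{K^\star}^i-\Sigma_K^i).
\]
Since $K,K^\star\in\calK_{\zeta_0}$ (\Cref{cor:uniform}(a)), \Cref{lem app: gain perturbation}(b) applies with $K'=K^\star$ and gives $\norm{(\Sigma_K^i)^{-1}(\Sigma_{K^\star}^i-\Sigma_K^i)}\le C_\gamma\norm{K-K^\star}_F$. Combining this with the averaged gradient bound \eqref{eq: Lambda} of \Cref{lem app: gradient bound} yields
\[
\norm{\sfR}_F\le \frac{C_\gamma}{2}\Lambda(K)\norm{K-K^\star}_F=\frac{C_\gamma}{2}\bigl(2\grad(K)+\Lambda_1\het\bigr)\norm{K-K^\star}_F .
\]
Under the hypotheses $\grad(K)\le\bargrad=\cb/(8C_\gamma)$ and $\het\le\cb/(4C_\gamma\Lambda_1)$, this collapses to
\[
\norm{\sfR}_F\le\frac{C_\gamma}{2}\Bigl(\frac{\cb}{4C_\gamma}+\frac{\cb}{4C_\gamma}\Bigr)\norm{K-K^\star}_F=\frac{\cb}{4}\norm{K-K^\star}_F .
\]

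Next we invoke quadratic growth, \Cref{lem app: quadratic growth of sa}, which is valid since $\het\le\barhet^4$. It gives $\norm{K-K^\star}_F^2\le\frac{2}{\cb}(\sa(K)-\sa(K^\star))$. Hence
\[
\frac{2}{\cb}\norm{\sfR}_F^2\le\frac{2}{\cb}\cdot\frac{\cb^2}{16}\cdot\frac{2}{\cb}(\sa(K)-\sa(K^\star))=\frac14(\sa(K)-\sa(K^\star)).
\]
This is even better than the required factor $\frac12$. Substituting back gives
\[
\tfrac34(\sa(K)-\sa(K^\star))\le\frac{1}{2\cb}\grad(K)^2 ,
\]
so $\sa(K)-\sa(K^\star)\le\frac{2}{3\cb}\grad(K)^2\le\frac1{\cb}\grad(K)^2$.

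It remains to check that $\fgrad$ and $\fhet$ are explicit, finite and positive. This follows from \Cref{tab:constants}: every constant there is a positive polynomial in $(H,\tau_B,\tJ,\cb^{-1})$, with $\tJ=32\Jbs$ and $\cb=c(\tJ)>0$, and a minimum of finitely many positive quantities is positive.

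The main obstacle is bookkeeping the hypotheses. Each invoked lemma needs its own threshold: $\barhet^1,\barhet^2$ for \Cref{cor:uniform}, $\barhet^3$ for \Cref{lem app: gradient bound}, and $\barhet^4$ for quadratic growth. All of these are built into the minimum defining $\barhet$. We also need $K^\star\in\calK_{\zeta_0}$, so that the gain perturbation lemma applies to the pair $(K,K^\star)$.
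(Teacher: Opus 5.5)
Your proposal is correct and is essentially the paper's own proof. You rewrite the remainder using $E_K^i=\tfrac12\nabla J^i(K)(\Sigma_K^i)^{-1}$, bound it via the gain-perturbation lemma and the averaged gradient bound, use the two thresholds to get $\Lambda(K)\le \bar c/(2C_\gamma)$, and close with quadratic growth to reach the same intermediate bound $\tfrac{2}{3\bar c}\|\nabla J_{SA}(K)\|_F^2$. One small imprecision: you call every tabulated constant a polynomial, but several involve negative or half-integer powers. They are nonetheless all explicit, finite, positive functions of $(H,\tau_B,\bar J_\star)$, which is all the claim needs.
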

\begin{proof}
    Let $\Delta\coloneqq K-K^\star$ and $\delta \coloneqq \sa(K) - \sa(K^\star)$. From $\nabla J^i(K) = 2E_K^i\Sigma_K^i$, \Cref{lem app: approximate gradient domination}, \Cref{lem app: gain perturbation} (b) and \Cref{lem app: gradient bound}, 
    \[
        \norm{\sfR(K,K^\star)}_F = \norm{\frac{1}{M}\sum_i\frac{1}{2}\nabla J^i(K)(\Sigma_K^i)^{-1}(\Sigma_{K^\star}^i - \Sigma_K^i)_F} \le \frac{1}{2M}\sum_i\norm{\nabla J^i(K)}_F\cdot C_\gamma\norm{\Delta}_F \le \frac{C_\gamma\Lambda(K)}{2}\norm{\Delta}_F
    \]
    Quadratic growth from \Cref{lem app: quadratic growth of sa} gives
    \[
        \norm{\sfR(K,K^\star)}_F^2\le\frac{C_\gamma^2\Lambda(K)^2}{4}\cdot\frac{2\delta}{\cb} = \frac{C_\gamma^2\Lambda(K)^2}{2\cb}\delta
    \]
    By \eqref{eq app: bargrad}-\eqref{eq app: barhet} and \Cref{lem app: gradient bound}, $\Lambda(K) = 2\grad(K) + \Lambda_1\het \le \frac{\cb}{4C_\gamma} + \frac{\cb}{4C_\gamma} = \frac{\cb}{2C_\gamma}$, so $\|\sfR\|_F^2\le\frac{\cb}{8}\delta$. 
    Substituting into approximate gradient domination (\Cref{lem app: approximate gradient domination}), $\delta\le\frac{1}{2\cb}\|\nabla\sa(K)\|_F^2 + \frac{2}{\cb}\cdot\frac{\cb\delta}{8}$, which rearranges to $\delta\le\frac{2}{3\cb}\|\nabla\sa(K)\|_F^2 \le \frac{1}{\cb}\|\nabla\sa(K)\|_F^2$. 
\end{proof}

Finally, we show that policy gradient with a suitable stepsize converges globally to a minimizer of $\sa(K)$. 
\begin{theorem}[Global convergence of sample average FIR-LQR]
    \label{thm app: global convergence}
    Suppose Assumptions \ref{asmp-app: simultaneous stabilization}-\ref{asmp-app: heterogeneity assumption} hold.  
    Let $L$ be the $L$-smoothness constant
    % \tas{what is the smoothness constant? Lipschitz constant? L-smooth constant? can it be found somewhere? you could also say there exists a step-size such that... } 
    of $\sa(K)$ over $K_{\zeta_0}$, and $\grad(K) \le \bargrad = 1/\fgrad = \cb/8C_\gamma$ as in \eqref{eq app: bargrad}.
    Then policy gradient \eqref{eq app: pg} over the class of FIR policies of order $H$  with stepsize $\alpha = 1/L$ achieves $\sa(K^{(N)})
    - \sa(K^\star)\leq\epsilon$ after 
    % \tas{it seems somehow that $\varepsilon_{grad}$ affects this result, I cannot tell where it should enter, shouldn't you further restrict $K^0$ to be in $S$?}
    \[
        N\geq 2L\zeta_0\max\curly{\frac{1}{\cb\epsilon}, \fgrad^2}
    \]
    iterations.
\end{theorem}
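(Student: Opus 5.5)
The plan is to combine the descent guarantee of \Cref{lem app: convergence to a fixed point} with the local gradient domination of \Cref{lem app: gradient domination of sa}, in the same way as the sketch in \Cref{s: proof of theorem}. First, with $\alpha = 1/L$, where $L$ is the smoothness constant of $\sa$ on $\calK_{\zeta_0}$ from \Cref{lem app: coercivity of sa FIR-LQR}, we get $C = \alpha - L\alpha^2/2 = 1/(2L) > 0$. \Cref{lem app: convergence to a fixed point} then yields three facts: every iterate stays in $\calK_{\zeta_0}$, the sequence $\sa(K^{(n)})$ is nonincreasing, and
\[
\min_{0\le n\le N}\norm{\nabla\sa(K^{(n)})}_F^2 \le \frac{2L\zeta_0}{N+1}.
\]
Let $l^\star$ be an index attaining this minimum.

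Second, I verify that $K^{(l^\star)}$ lies in the region where \Cref{lem app: gradient domination of sa} applies. Membership in $\calK_{\zeta_0}$ is already guaranteed. For the gradient condition, $N \ge 2L\zeta_0\fgrad^2$ gives $\norm{\nabla \sa(K^{(l^\star)})}_F^2 \le 2L\zeta_0/(N+1) < \fgrad^{-2} = \bargrad^2$, so $\grad(K^{(l^\star)}) \le \bargrad$. Assumptions \ref{asmp-app: simultaneous stabilization} and \ref{asmp-app: heterogeneity assumption} hold with $\het \le \barhet$ as defined in \eqref{eq app: barhet}, so all the supporting lemmas (\Cref{cor:uniform}, \Cref{lem app: gradient bound}, \Cref{lem app: quadratic growth of sa}) are valid. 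Hence
\[
\sa(K^{(l^\star)}) - \sa(K^\star) \le \frac{1}{\cb}\norm{\nabla\sa(K^{(l^\star)})}_F^2 \le \frac{2L\zeta_0}{\cb(N+1)}.
\]
This is at most $\epsilon$ once $N \ge 2L\zeta_0/(\cb\epsilon)$.

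Third, since $l^\star \le N$, monotonicity gives $\sa(K^{(N)}) \le \sa(K^{(l^\star)})$, which transfers the bound to the last iterate. Taking the maximum of the two requirements on $N$ gives the stated iteration count.

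The substantive work, namely gradient domination under small heterogeneity, is already contained in \Cref{lem app: gradient domination of sa}, so I expect no step here to be hard. The only point needing care is that gradient domination is only local: it holds only where the gradient is small. The argument therefore has to be routed through the best iterate $K^{(l^\star)}$ rather than through $K^{(N)}$ directly, and then returned to $K^{(N)}$ using monotone descent. I would also check that the hypothesis ``$\grad(K)\le\bargrad$'' in the theorem statement is read as a consequence of the bound $N\ge 2L\zeta_0\fgrad^2$ (applied at $l^\star$), not as an extra assumption on the iterates.
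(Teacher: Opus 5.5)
Your proof is correct and follows the paper's argument step for step. It combines the descent rate of \Cref{lem app: convergence to a fixed point} with $C = 1/(2L)$, applies gradient domination (\Cref{lem app: gradient domination of sa}) at the best iterate $K^{(l^\star)}$ once $N \ge 2L\zeta_0\fgrad^2$, and uses monotonicity to pass to $K^{(N)}$. Your chain also keeps the factor $1/\cb$ in $\frac{2L\zeta_0}{\cb(N+1)}$, which the paper's displayed inequality drops; that factor is what the stated requirement $N \ge 2L\zeta_0/(\cb\epsilon)$ actually reflects.
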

\begin{proof}
    From \Cref{lem app: convergence to a fixed point}, every iterate satisfies $K^{(n)}\in\calK_{\zeta_0}$, and monotonicity $\sa(K^{(n+1)})\le\sa(K^{(n)})$. With $\alpha = 1/L$, $C = \alpha - L\alpha^2/2 = 1/2L$, and thus \eqref{eq app: grad convergence rate} reads
    \begin{align}
        \min_{0\le n\le N}\norm{\nabla\sa(K^{(n)})}_F^2 \le \frac{2L\zeta_0}{N+1}. \label{eq app: gradient bound}
    \end{align}
    Let $l^\star$ attain the minimum of \eqref{eq app: gradient bound}. We first show that $\sa(K^{(l)})$ is $\epsilon$-optimal to the optimal solution. Then we claim that the same result holds for the $N$-th SA objective from monotonicity. 

    When $N\ge2L\zeta_0\fgrad^2$, it holds from \eqref{eq app: gradient bound} that
    \[
        \norm{\nabla\sa(K^{(l^\star)})}_F \le \frac{1}{\fgrad} = \bargrad.
    \]
    As Assumptions \ref{asmp-app: simultaneous stabilization}, \ref{asmp-app: heterogeneity assumption} hold and $K^{l^\star}\in\calK_{\zeta_0}$, \Cref{lem app: gradient domination of sa} applies at $K^{(l^\star)}$ and gives
    \[
        \sa(K^{l^\star}) - \sa(K^\star) \le \frac{1}{\cb}\norm{\sa(K^{(l^\star)})}_F^2 \le \frac{2L\zeta_0}{N+1}\le\epsilon
    \]
    after $N\ge2L\zeta_0/\cb\epsilon$ iterations. 
    Finally from monotonicity, $\sa(K^{(N)}) - \sa(K^\star) \le \sa(K^{l^\star}) - \sa(K^\star) \le \epsilon$. 
\end{proof}

\section{Further experiments in \Cref{s: alternative dynamic}}
\label{s app: further experiments in dynamic form}
Another useful parameterization is the canonical form controller
\begin{equation}
    \label{eq: canonical}
    \begin{split}
        &A_K = \begin{bmatrix} -d_3 & -d_2 & -d_1 & -d_0 \\ 1 & 0 & 0 & 0 \\ 0 & 1 & 0 & 0 \\ 0 & 0 & 1 & 0 \end{bmatrix}, \quad B_K = \begin{bmatrix} 1 \\ 0 \\ 0 \\ 0 \end{bmatrix}, \\
        &C_K = \begin{bmatrix} n_3 & n_2 & n_1 & n_0 \end{bmatrix}, \quad D_K = 0
    \end{split}
\end{equation}

As shown in \Cref{fig: canonical}, the canonical form is also effective for identifying a simultaneously stabilizing controller. 
Since it has more parameters than the FIR class, it only uses $H=2$ history while optimization remains stable. This further supports the tradeoff that limiting the degree of freedom of the controller class can simplify policy optimization while still achieving strong performance. However, compared with FIR controller in \Cref{fig: fir}, the canonical class attains a suboptimal solution. One potential explanation is that, by limiting $D_K = 0$,  the canonical form may fail to represent the best controller.

\begin{figure}[tbhp]
    \centering
    \includegraphics[width=0.7\linewidth]{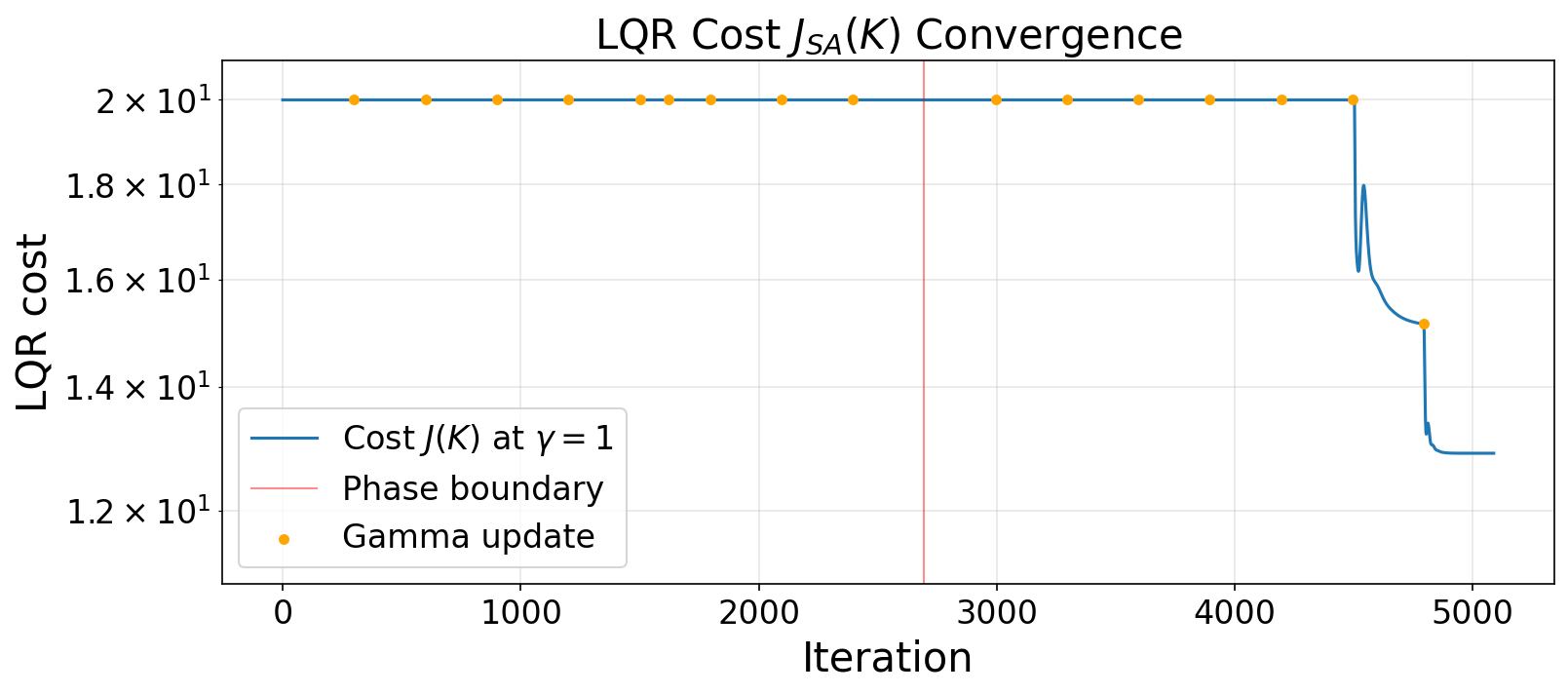}
    \caption{Policy gradient on Example \ref{ex:scalar-lqr-unstable} with a canonical form controller \eqref{eq: canonical}.}
    \label{fig: canonical}
\end{figure}

\end{document}